\def\cmntsoff{} 
\documentclass[aps,pra,superscriptaddress,12pt,nofootinbib,longbibliography]{revtex4-2}

\makeatletter
\renewcommand{\p@subsection}{}
\renewcommand{\p@subsubsection}{}
\makeatother

\makeatletter
\renewcommand{\subsection}{%
    \@startsection{subsection}{2}{\z@}
    {-3.25ex \@plus -1ex \@minus -.2ex}
    {-1em}
    {\normalfont\normalsize\bfseries}
}
\makeatother

\usepackage{graphicx} 
\usepackage{dcolumn}  
\usepackage{stmaryrd} 
\usepackage{dsfont} 
\usepackage{bm}        
\usepackage{amsmath,amssymb,amsfonts,amsthm}
\usepackage{mathtools}
\usepackage[normalem]{ulem} 

\newtheorem{theorem}{Theorem}[section]
\newtheorem*{theorem*}{Theorem}
\newtheorem{proposition}[theorem]{Proposition}
\newtheorem{proposition*}{Proposition}
\newtheorem{lemma}[theorem]{Lemma}
\newtheorem*{lemma*}{Lemma}

\usepackage[mathscr]{eucal}
\usepackage{color}
\usepackage{xcolor}
\usepackage{tikz}
\usetikzlibrary{quotes, arrows.meta, calc, shapes.geometric, positioning}
\usepackage{quantikz}
\tikzset{
  meter box/.style={
    draw,
    rounded corners=8pt,
    minimum height=1.3cm,
    minimum width=1.6cm,
    align=center,
    fill=white,
    inner sep=3pt
  }
}

\usepackage[utf8]{inputenc}

\usepackage{xparse}
\usepackage{hyperref}
\usepackage{cleveref}

\usepackage{listings}
\usepackage{color} 
\definecolor{mygreen}{RGB}{28,172,0} 
\definecolor{mylilas}{RGB}{170,55,241}

\usepackage{xfp} 
\usepackage{xcolor}

\usepackage[normalem]{ulem} 

\usepackage{physics} 
\NewCommandCopy{\svqty}{\qty}
\usepackage{siunitx} 
\RenewCommandCopy{\qty}{\svqty}

\providecommand{\ignore}[1]{}

\newif\ifcmnt
\cmnttrue
\ifdefined\cmntsoff\cmntfalse\fi

\ifcmnt
    \providecommand{\aucmnt}[1]{#1}
    \providecommand{\MKcolor}{\color{teal}}

    \providecommand{\Pcolor}{\color{gray}}
\else
    \providecommand{\aucmnt}[1]{}
    \providecommand{\MKcolor}{}

    \providecommand{\Pcolor}{}
\fi

\newcommand{\MKc}[1]{\aucmnt{{\MKcolor [MK: {\color{darkgray}#1}]}}}

\newcommand{\Pc}[1]{\aucmnt{{\Pcolor \textbf{[}Permanent comment: {\color{gray}#1}\textbf{]}}}}

\newcommand{\mySI}[1]{\;\textrm{#1}} 
\newcommand{\cB}{\mathcal{B}}
\newcommand{\cC}{\mathcal{C}}
\newcommand{\cD}{\mathcal{D}}

\newcommand{\cH}{\mathcal{H}}

\newcommand{\cO}{\mathcal{O}}
\newcommand{\cY}{\mathcal{Y}}

\newcommand{\one}{\mathds{1}}

\newcommand{\rls}{\mathds{R}}
\newcommand{\cmplxs}{\mathds{C}}

\newcommand{\tot}{{\rm tot}}

\newcommand{\pmatr}[1]{\begin{pmatrix}#1\end{pmatrix}}

\begin{document} 
\title{Quantified convergence of general homodyne measurements with applications to continuous variable quantum computing}
\author{Emanuel Knill}
\affiliation{National Institute of Standards and Technology, Boulder, Colorado 80305, USA}
\affiliation{Center for Theory of Quantum Matter, University of Colorado, Boulder, Colorado 80309, USA}
\author{Ezad Shojaee}
\affiliation{National Institute of Standards and Technology, Boulder, Colorado 80305, USA}
\affiliation{Department of Physics, University of Colorado, Boulder, Colorado, 80309, USA}
\affiliation{IonQ, College Park, MD 20740}
\author{James R. van Meter}
\affiliation{National Institute of Standards and Technology, Boulder, Colorado 80305, USA}
\author{Akira Kyle}
\affiliation{Department of Physics, University of Colorado, Boulder, Colorado, 80309, USA}
\author{Scott Glancy}
\affiliation{National Institute of Standards and Technology, Boulder, Colorado 80305, USA}
\date{\today}

\begin{abstract}
In Ref.~\cite{bbp1}\ignore{\Pc{arXiv:2503.00188}} we introduced \emph{broadband pulsed (BBP) homodyne measurements} as a generalization of standard pulsed homodyne quadrature measurements.  BBP can take advantage of detectors such as calorimeters that have the potential for high efficiency over a broad spectral range.  BBP homodyne retains the advantages of standard pulsed homodyne, enabling measurement of arbitrary quadratures in the limit of large amplitude local oscillators (LO).   Here we quantify the convergence of standard and BBP homodyne quadrature measurements to those of the quadrature of interest. We obtain lower bounds on the fidelity of the post-measurement classical-quantum state of outcomes and unmeasured modes, and the fidelity of the states obtained after applying operations conditional on measurement outcomes.  The bounds depend on the LO amplitude and the moments of number operators. We demonstrate the practical relevance of these bounds by evaluating them for standard pulsed homodyne used for estimating values of the characteristic function of the Wigner distribution, expectations of moments, for quantum teleportation and for continuous variable error correction with GKP codes.
\end{abstract}

\maketitle
\tableofcontents

\section{Introduction}

Homodyne measurements of the complex amplitude of quantum light are
indispensable in quantum optics.  Homodyne measurements have a
fundamentally simple setup.  This setup consists of no more than a
beamsplitter combining the signal with a local oscillator (LO), and
two detectors after the beamsplitter.  The two detector outputs are
subtracted and rescaled to obtain the measurement outcomes. The
detectors are often photodiodes that record intensity as a function of
time. When the LO is always on, the time-resolved record of the subtracted
intensities gives information about many amplitudes in
parallel. These amplitudes are quadratures of modes of
light. Alternatively, one can use a pulsed LO, and measure the total
number of photons in each arm without time resolution to measure a
specific quadrature determined by the reference pulse.

Our work on general homodyne quadrature measurements~\cite{bbp1} was
motivated by the problem of measuring extremely broadband,
octave-spanning modes of light. These are modes of light whose photons
are created into a wavepacket that has a multi-octave spectral
width. Such photons are present in femto-second optical frequency
combs used for metrology~\cite{cundiff2003colloquium} and occur in the
inertial-frame description of the Unruh effect associated with
accelerating detectors in relativistic quantum
optics~\cite{PRL2011}. For octave-spanning modes of light, there are
no efficient, sufficiently fast detectors for time-resolved homodyne
measurements. At the cost of a significantly more involved
experimental setup, one can separate the frequency band into its
components and measure each component separately, or one can use
broadband pre-amplification followed by a classical measurement of the
spectrum~\cite{Shaked}.  Alternatively, one can simply use a short
pulsed LO and measure each arm as before but with efficient broadband
detectors that can be slow. One class of such detectors that may
achieve the necessary broadband efficiency consists of
calorimeters~\cite{cabrera1998detection,hattori2022optical}.  Because
calorimeters measure energy instead of  photon number, the pulsed homodyne
measurement scheme needs to be generalized. For this purpose, we
introduced broadband pulsed (BBP) homodyne in Ref.~\cite{bbp1}. We
showed BBP homodyne retains all the basic properties of standard
pulsed homodyne. In particular, as the amplitude of the LO grows
large, the moments of the measurement outcomes converge to those of
the target quadrature, and the probability distribution of the
outcomes converges weakly to that of the ideal quadrature measurement.

In order to achieve a desired measurement precision in homodyne
measurements it is necessary to use a sufficiently high amplitude
LO. The fundamental results of homodyne measurement theory guarantee
convergence of moments and weak convergence of probability
distributions~\cite{kiukas2008moment_jmo,bbp1}.  Standard arguments
for convergence of homodyne to quadrature measurements are based on
one-mode semiclassical approximations such as
in~\cite{leonhardt1995measuring}, or on convergence heuristics of
expressions for the homodyne observables at finite amplitude, for
examples see
Refs.~\cite{braunstein1990homodyne,Banaszek,combes2022homodyne}.
However, it is necessary to quantify this convergence in order to
determine a minimum amplitude sufficient to achieve the desired
precision. Such quantifications are available for signal states in
specific families of states such as Gaussian states, and observables
such as low-degree quadrature moments. The standard advice is to
ensure that the average number of photons in the LO pulse is much
larger than the average number of photons of the signal. This advice
needs modification depending on the goal of the measurement. For
example, see Refs.~\cite{braunstein1990homodyne,Banaszek,Sanders}.
Besides convergence of observables, modern applications of homodyne
measurements require convergence of the quantum state of the
unmeasured mode after these modes have been modified conditional on
measurement outcomes. Such modifications are essential in continuous
variable (CV) quantum
teleportation~\cite{braunstein1998teleportation,furusawa1998teleportation}
and CV quantum computation~\cite{gottesman2001encoding}.  Both of
these situations involve non-Gaussian quantum states, for which
general, quantified bounds on convergence are lacking.

Here we develop and implement strategies for obtaining general bounds
on convergence for BBP homodyne measurements of quadratures. We use
state distances and fidelities to directly compare the states
resulting from homodyne measurement at finite LO amplitude with the
states that would be obtained with a direct but realistic quadrature
measurement, where we model realistic quadrature measurements as an
ideal measurement with realistic noise that can be arbitrarily
small. In order to compare the states, we take advantage of a
representation of the measurements in terms of von Neumann measurement
models~\cite{vonneumann1955foundations}.  These models involve
prepared ancilla modes of a quantum apparatus that are unitarily
coupled to the modes being measured. The ancilla modes may then be
measured with a fixed von Neumann measurement. Because fidelity
between states can only increase under identical data
processing~\cite{hou2012fidelity}, it suffices to lower bound the
fidelity between the states before the final measurement.

We develop fidelity bounds as a function of LO amplitude for two broad
scenarios. The first is for measurement, where we require high
fidelity for the joint outcomes and outcome-conditional states
remaining in unmeasured systems.  The bounds for this scenario can be
used to quantify the difference between the expectations of quadrature
functions such as moments and the corresponding functions of BBP
measurement outcomes. The second scenario involves applying quantum
operations conditional on measurement outcomes.  This scenario applies
to CV quantum teleportation and computation.

To demonstrate the relevance of our convergence bounds, we apply them
to standard pulsed homodyne.  We calculate fidelity bounds as a
function of LO amplitude for the classical-quantum state involving
measurement outcomes and unmeasured systems, and for measuring values
of the characteristic function of the Wigner distribution and
expectations of quadrature moments. The bounds for the
classical-quantum state and values of the characteristic function are
practically useful and are expressed in terms of expectations of
number operators. The bound on the difference of expectations of
moments depends as expected on high-order moments. The dependence on
the LO amplitude is not optimal, but the bound can still serve as a
guide.  For examples involving conditional actions, we apply our
bounds to CV quantum teleportation and CV quantum error
correction. These bounds depend on expectations of low-degree products
of quadratures and number operators. The bounds for CV quantum error
correction can be used to choose the LO amplitude to ensure small
added error without unnecessarily large LO powers.  We show that the
required LO amplitude is comparable to amplitudes used in existing
implementations of pulsed homodyne measurements with a pair of matched
photodiodes~\cite{hansen2001ultrasensitive,gerrits2011balanced}.

\section{Overview}

We establish notation and review the principles and formalism of BBP
homodyne in Sect.~\ref{sec:preliminaries}. We then introduce the
measurement models used to represent direct quadrature measurements
and BBP measurements of the quadrature (Sect.~\ref{sec:functana}). The
models are based on the von Neuman measurement models involving
couplings between the measured system and a quantum apparatus.  We use
these models to mathematically represent any actual measurement with
classical outcomes as a quantum process followed by a positive
operator-valued measure (POVM) on the apparatus with classical output.
When comparing direct and BBP measurements of a quadrature, we use the
same POVM on the apparatus.  Because of monotonicity of fidelity under
quantum processing~\cite{hou2012fidelity}, the fidelity between the
states obtained by the quantum process serves as a lower bound on the
final fidelity of the joint state of the unmeasured quantum systems
and the classical outcomes. This fidelity serves as a measure of
joint closeness of the classical outcome distributions and of the
states of the unmeasured quantum systems.

Our results quantifying the fidelity of BBP measurement outcomes
and remaining quantum states are in Sect.~\ref{sec:sconv}. The main
result, Thm.~\ref{thm:mainbnd} provides a general bound for the
distance between the state obtained by coupling the quantum apparatus
to the target quadrature, and the state obtained by coupling the
apparatus to the effective BBP observable defined in
Eq.~\eqref{eq:calop}. Thm.~\ref{thm:qmeas} applies the general
bound to the specific case where the apparatus is initialized with a
pure-state wavefunction according to the von Neumann measurement
model.  We also determine bounds on the distance between states
obtained by applying functions of the target quadrature or the BBP
observable, which can be interpreted as a quantified strong
convergence for certain bounded continuous functions. For other
functions we establish a regularization procedure in
Prop.~\ref{prop:regfexp} that makes it possible to replace the
function with a better-behaved one.

We analyze fidelities for outcome-conditional actions such as those
required for quantum teleportation in
Sect.~\ref{sec:condoperations}. We first consider conditional
unitaries acting on the unmeasured systems and defined by operators
\(\hat{U}(x)\) for each real \(x\). They can either be applied
conditional on the target quadrature's values or conditional on the
BBP observable's values. In experimental settings, measurements
such as those defined in Sect.~\ref{sec:sconv} are performed, 
and \(\hat{U}(x)\) is applied if the outcome is \(x\). We split the fidelity bounds into two
steps. In the first step, we compare the conditional unitaries applied
without measurement directly to the system of the target quadrature
and the unmeasured systems. Prop.~\ref{prop:conddisp} establishes
fidelity bounds for the case where the unitaries \(\hat{U}(x)\) are
commuting displacements. Applying this proposition may require
regularization of the operators \(\hat{U}(x)\) by applying
Lem.~\ref{lem:condreg}.  For the second step,
Prop.~\ref{prop:conditionalU} determines the added loss of fidelity
associated with first measuring, then applying the appropriate unitary
conditional on the outcome. This proposition is a general prescription
that is implemented for conditional displacements in
Prop.~\ref{prop:conditionalD}.

To illustrate the use of these bounds, we apply them to standard
pulsed homodyne in Sect.~\ref{sec:onemode}. We calculate the bound on
the classical-quantum fidelity of realistic homodyne measurements of
quadratures. The bounds depend linearly on the total photon number
expectation in the modes being measured, as expected from the
heuristic guidelines for setting LO amplitudes. The fidelity
approaches \(1\) quadratically in the inverse LO amplitude.  The
bounds also depend on the type and amount of added noise of the
measurement implementation. The less the added noise, the worse is the
bound. This effect is expected for fidelity measures that include a
comparison of the complete probability distributions. It can be
attributed to the impossibility of performing ideal quadrature
measurements.  For measures based on differences between expectations
of observables there is no such effect, as we demonstrate by applying
the bounds to measuring the values of the characteristic function of
the Wigner distribution.  The fidelities again depend linearly on
total photon number expectation.  The bounds worsen when evaluating
the characteristic function at points distant from the origin,
requiring high LO amplitude for a given fidelity.  Although our bounds
are estimated conservatively, the constants are moderate and the
bounds can be applied in practice.  As another example, we determine
conservative bounds on the difference between the expectation of the
moments of the target quadrature and the moments of the homodyne
observable. In this case, the bounds depend on high-order moments of
quadratures.  To illustrate the use of our distance bounds for
quadrature-conditional unitaries, we apply the bounds to quantum
teleportation.  For the final example, we estimate the added error
from homodyne quadrature measurements in CV quantum error
correction. For this example, we consider correction of displacement
noise for finite-energy Gottesman-Kitaev-Preskill (GKP) codes. By
treating the Gaussian added noise from photodiodes as added
displacement noise, it is possible to apply the results for the
fidelity of states and measurement outcomes. To illustrate the
relevance of the bounds, we evaluate them for the case where the
measurement is implemented with a matched pair of photo-diodes, where
we use the added noise reported in
Refs.~\cite{hansen2001ultrasensitive,gerrits2011balanced} for specific
values. We find that our error bounds are acceptably small for these
values of added noise and commonly used LO amplitudes.

For a summary of functional analysis conventions and background
required for the results of this paper, see App.~\ref{app:pvm}. We
assume without mention that functions that we introduce are
measurable, domains of projection-valued measures are standard Borel
spaces, and Hilbert spaces are separable.  The appendix also contains
frequently used identities and inequalities
(App.~\ref{app:inequalities}), including relationships between
fidelity, overlaps and distances.

\section{BBP homodyne}
\label{sec:preliminaries}

Standard pulsed and BBP homodyne measurements aim to measure a
quadrature of one mode out of many, where the mode measured is
determined by the LO pulse shape.  Thus we consider finitely many
modes defined by mode annihilation operators \(\hat{a}_{k}\) for
\(k=1,\ldots, N\). The annihilation operators are assumed to define
orthogonal modes satisfying
\([\hat{a}_{k}^{\dagger},\hat{a}_{l}] = \delta_{k,l}\one\), where
\(\one\) is the identity operator on the modes' Hilbert space.  In
general, we consider multiple families of modes consisting of modes
\(a_{k}, b_{l},c_{m},\ldots\) with associated mode operators
\(\hat{a}_{k},\hat{b}_{l}, \hat{c}_{m},\ldots\). For simplicity, the
mode indices all run from \(1\) to \(N\), where we fill in extra
orthogonal modes as needed to match the numbers of modes. We write
\(\bm{a} = (a_{k})_{k=1}^{N}\) and
\(\hat{\bm{a}} = (\hat{a}_{k})_{k=1}^{N}\), and similarly for other
families of modes, where we think of vectors such as \(\hat{\bm{a}}\)
as column vectors with operator entries.

The Hilbert space associated with a family of orthogonal modes
\(\bm{a}\) is the Fock space, which factors as a tensor product of
single-mode Hilbert spaces.  The Hilbert space of one mode is that of
a quantum harmonic oscillator and spanned by the number states
\(\ket{0},\ket{1},\ldots\). We label kets, bras and operators with mode labels as
needed. For example, the state \(\otimes_{k=1}^{N} \ket{n_{k}}_{a_{k}}\) is
the state with exactly \(n_{k}\) photons in mode \(a_{k}\) for each \(k\).

For identical length vectors \(\bm{x}\) and \(\bm{y}\), the inner
product \(\bm{x}\cdot\bm{y} \) is evaluated as \(\sum_{i}x_{i}y_{i}\),
where the product is the scalar product or operator product according
to the type of the vectors.  For complex vectors \(\bm{x}\),
\(\bm{x}^{*}\) is the entry-wise complex conjugate. The length of
\(\bm{x}\) is \(|\bm{x}|=\sqrt{\bm{x}^{*}\cdot\bm{x}}\).  The Hadamard
product \(\bm{x}*\bm{y}\) is the vector obtained by multiplying each
coordinate separately, that is \((\bm{x}*\bm{y})_{k} = x_{k}y_{k}\).
The complex zero vector is denoted by \(\bm{0}\).  For vectors
\(\bm{x}\) whose entries are operators, \(\bm{x}^{\dagger}\) is a
vector of the same shape whose entries are the adjoints of entries of
\(\bm{x}\).

If \(\bm{\alpha}\) is a vector of complex numbers satisfying
\(\bm{\alpha}^{*}\cdot\bm{\alpha} = 1\), then for the family of modes
\(\bm{a}\), the operator
\(\hat{a}_{\bm{\alpha}} = \bm{\alpha}\cdot\hat{\bm{a}}\) satisfies the
properties of a mode annihilation operator. We refer to the mode defined
by \(\hat{a}_{\bm{\alpha}}\) as the mode with shape \(\bm{\alpha}\). The
quadratures of this mode are of the form
\begin{align}
 \hat{x}_{\theta} &= -i \qty(e^{i\theta}\hat{a}_{\bm{\alpha}}^{\dagger}- e^{-i\theta}\hat{a}_{\bm{\alpha}}).
\end{align}
Because \(\hat{x}_{\theta}=-i\qty(\qty(\hat{a}_{e^{-i\phi}\bm{\alpha}})^{\dagger}-\hat{a}_{e^{-i\phi}\bm{\alpha}})\),
instead of specifying a quadrature in terms of a mode and a phase,
we directly specify quadratures according to
\begin{align}
  \hat{q}_{\bm{\alpha}} &=
                          -i\qty(\bm{\alpha}\cdot\hat{\bm{a}}^{\dagger}
                          - \bm{\alpha}^{*}\cdot\hat{\bm{a}}),
                          \label{eq:quaddef}
\end{align}
where we do not require \(\bm{\alpha}\) to be normalized. The familiy
of modes is implicit in the notation for \(\hat{q}_{\bm{\alpha}}\).
The commutation relationships are
\([\hat{q}_{\bm{\alpha}},\hat{q}_{\bm{\beta}}] =
\qty(\bm{\alpha}^{*}\cdot \bm{\beta}
-\bm{\beta}^{*}\cdot\bm{\alpha})\).  For this and other operators
associated with a family of modes, we always make it clear which
family of modes they are associated with.  The quadrature
\(\hat{q}_{\bm{\alpha}}\) is normalized if \(|\bm{\alpha}|^{2}=1\).
This differs from the normalization required for canonical commutation
relationships in natural units with \(\hbar=1\). The quadratures
satisfy canonical commutation relationships if
\(|\bm{\alpha}|^{2}=1/2\). For example,
see~\cite{leonhardt1995measuring}, Eq.~(13).  Thus, the canonical
quadrature with respect to vacuum proportional to
\(\hat{q}_{\bm{\alpha}}\) is
\(\hat{q}_{\bm{\alpha}/(\sqrt{2}|\bm{\alpha}|)} =
\frac{1}{\sqrt{2}}\hat{q}_{\bm{\alpha}/|\bm{\alpha}|}\).

For complex vectors \(\bm{\alpha}\), coherent states
\(\ket{\bm{\alpha}}\) of a family of modes \(\bm{a}\) satisfy
\(\hat{a}_{k}\ket{\bm{\alpha}} = \alpha_{k} \ket{\bm{\alpha}}\) for all
\(k\).  The vacuum state of the modes, that is, the state not
containing any photons, is the coherent state \(\ket{\bm{0}}\). We fix
the phases of coherent states relative to the vacuum state by
requiring that \(\braket{\bm{\alpha}}{\bm{0}}\) is real and
positive. For one mode, this is equivalent to requiring that
\(\ket{\beta}\) have amplitude \(e^{-|\beta|^{2}/2}\) at \(\ket{0}\)
when expressed in the number basis.  Coherent states are product
states with respect to every orthogonal mode basis. In particular
\(\ket{\bm{\alpha}} = \otimes_{k=1}^{N}\ket{\alpha_{k}}_{a_{k}}\).

Every unnormalized quadrature \(\hat{q}_{\bm{\alpha}}\) is associated
with a displacement operator
\begin{align}
  \hat{D}_{\bm{\alpha}} &= e^{i\hat{q}_{\bm{\alpha}}}.
\end{align}
This displacement operator can be thought of as being generated by the
Hamiltonian given by the normalized quadrature
\(\hat{q}_{\bm{\alpha}/|\bm{\alpha}|}\) where the evolution time is
\(|\bm{\alpha}|\).   The
displacement operators for one mode satisfy
\begin{align}
  \hat{D}_{\beta}\ket{\alpha}&=e^{i\Im{\alpha^{*}\beta}}\ket{\alpha+\beta},\notag\\
  \hat{D}_{-\beta} \hat{a} \hat{D}_{\beta} &= \hat{a} + \beta,\notag\\
  \hat{D}_{-\beta}\hat{q}_{\alpha}\hat{D}_{\beta} &= \hat{q}_{\alpha}+i\alpha^{*}\beta -i\alpha\beta^{*},
  \notag\\
  \hat{D}_{-\alpha-\beta} \hat{D}_{\beta}\hat{D}_{\alpha}&=e^{i\Im{\alpha^{*}\beta}}.
                                                          \label{eq:dispeqs}
\end{align}

The BBP homodyne configuration is shown in
Fig.~\ref{fig:bbp_homodyne}.  The only difference from the standard
pulsed homodyne configuration are  detectors that measure total energy
or another weighted combination of mode number operators instead of
total photon number.  The measurement operator for modes \(\bm{f}\) is
\(\hat{E}_{\bm{f}} =
\sum_{k}\omega_{k}\hat{f}_{k}^{\dagger}\hat{f}_{k}\), where
\(\omega_{k}\) are energies or other weights for mode \(k\).  The
state to be measured arrives in the signal modes \(\bm{a}\).  The LO
is prepared in the LO modes \(\bm{b}\). We find it mathematically
convenient to make the LO preparation explicit by initializing the LO
modes in vacuum, then applying the appropriate displacement
\(\hat{D}_{R\bm{\beta}}\), where \(R\) is an adjustable positive real
scale factor. In practice, the LO pulse is derived from a laser and
modified by pulse shaping if necessary.  Modes \(\bm{a}\) and
\(\bm{b}\) interfere on a balanced beamsplitter.  The outgoing modes
\(\bm{c}\) and \(\bm{d}\) are measured with the detectors, the
measurement outcome of modes \(\bm{d}\) is subtracted from that of
modes \(\bm{c}\), and the result of the subtraction is multiplied by
\(1/R\).

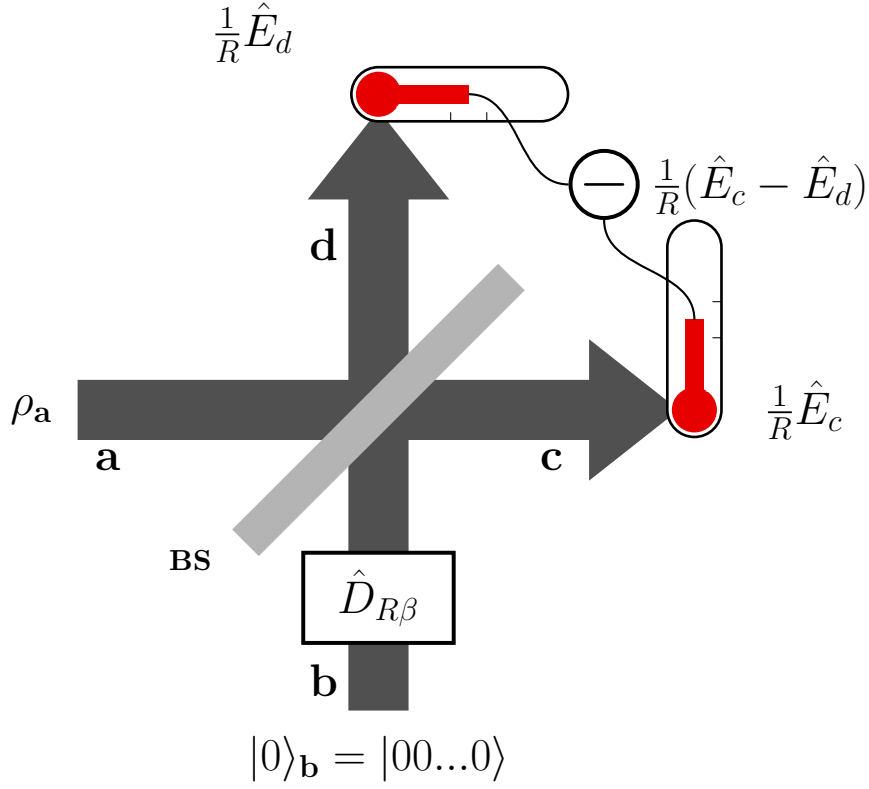
\begin{figure}

\definecolor{modegray}{RGB}{80,80,80}    
\definecolor{bsgray}{RGB}{180,180,180}   
\definecolor{thermored}{RGB}{230,0,0}    

\begin{tikzpicture}[
    mode arrow/.style={
        -{Triangle[width=1.8cm, length=1.2cm]}, 
        line width=0.8cm,                       
        color=modegray
    },
    box/.style={
        draw=black,
        line width=1.5pt,
        fill=white,
        minimum width=2cm,
        minimum height=1.2cm,
        align=center
    },
    subtraction/.style={
        circle,
        draw=black,
        line width=1.5pt,
        fill=white,
        minimum size=0.8cm,
        inner sep=0pt
    }
]

    \coordinate (center) at (0,0);

    
    \draw[mode arrow] (-4,0) -- (4,0);
    
    \draw[mode arrow] (0,-4) -- (0,4);

    
    \node[box] at (0,-2.5) {\Large $\hat{D}_{R\beta}$}; 

    \begin{scope}[rotate=-45]
        \fill[bsgray] (-0.25,-2.5) rectangle (0.25,2.5);
    \end{scope}
    \node at (-2.5, -2) {\textbf{BS}}; 

    
    \node[anchor=east] at (-4.2, 0) {\Large $\rho_{\mathbf{a}}$}; 
    \node[anchor=north] at (-3.6, -0.4) {\Large ${\mathbf{a}}$};     

    \node[anchor=east] at (-0.4, -3.6) {\Large ${\mathbf{b}}$};    
    \node[anchor=north] at (0, -4.2) {\Large $|0\rangle_{\mathbf{b}} = |00...0\rangle$}; 

    \node[anchor=north] at (2.3, -0.4) {\Large ${\mathbf{c}}$};      
    \node[anchor=east] at (-0.4, 2.2) {\Large ${\mathbf{d}}$};       

    
    \tikzset{
        thermometer/.pic={
            \draw[line width=1pt, fill=white] (-0.3,0) arc(180:360:0.3) -- (0.3,1.8) arc(0:180:0.3) -- cycle;
            \fill[thermored] (0,0) circle(0.25); 
            \fill[thermored] (-0.1,0) rectangle (0.1, 1.0); 
            \draw[white, line width=1.5pt, opacity=0.6] (0.15, 0.2) -- (0.15, 1.6);
            \draw[black, thin] (0.2, 0.8) -- (0.3, 0.8);
            \draw[black, thin] (0.2, 1.2) -- (0.3, 1.2);
        }
    }

    \path (0,4.2) pic[rotate=-90, scale=1.2] {thermometer};
    \coordinate (therm_d_connect) at (1.2, 4.2); 

    \path (4.2,0) pic[scale=1.2] {thermometer};
    \coordinate (therm_c_connect) at (4.2, 1.2); 

    
    \node[subtraction] (sub) at (3, 3) {\Huge $-$};
    
    \draw[thick, black] (therm_d_connect) to[out=0, in=180] (sub.west);
    
    \draw[thick, black] (therm_c_connect) to[out=90, in=270] (sub.south);

    
    \node[anchor=south east] at (-1, 4.5) {\Large $\frac{1}{R}\hat{E}_d$}; 
    
    \node[anchor=west] at (5, 0) {\Large $\frac{1}{R}\hat{E}_c$};      
    
    \node[anchor=west] at (3.5, 3) {\Large $\frac{1}{R}(\hat{E}_c - \hat{E}_d)$}; 

\end{tikzpicture}

  \caption{Multi-mode BBP homodyne configuration.  The signal modes
  enter from the left on modes \({\bm{a}}\).  The LO modes enter from
  the bottom on modes \({\bm{b}}\).  For mathematical convenience, we
  represent the LO modes as initially in vacuum, with the LO coherent
  state prepared by the displacement operator
  \(\hat{D}_{R\bm{\beta}}\) . Modes \({\bm{a}}\) and \({\bm{b}}\) are
  combined on a balanced beam splitter (BS). The beamsplitter's
  outgoing modes are \({\bm{c}}\) and \({\bm{d}}\). They are measured
  with detectors such as calorimeters. The observables associated with the detectors are
  of the form
  \(\hat{E}_{\bm{f}}=\sum_{k}\omega_{k}\hat{f}_{k}^{\dagger}
  \hat{f}_{k}\) with \(\bm{f}=\bm{c}\) or \(\bm{d}\), where
  \(\omega_{k}\) are mode energies or other known positive weights.
  The homodyne measurement result is determined by subtracting one
  detector's output from the other and rescaling the result by a
  factor of \(1/R\). The corresponding effective measurement operator
  is shown.  }
\label{fig:bbp_homodyne}
\end{figure}

The effective BBP measurement operator after subtraction and rescaling
can be computed by expressing the outgoing mode operators according to
Heisenberg evolution in terms of the incoming modes. With a specific
sign and phase convention for the balanced beamsplitter, the Heisenberg
forms of the outgoing
mode operators are
\begin{align}
  \hat{\bm{c}}^{H} &= \frac{1}{\sqrt{2}}\qty(\hat{\bm{a}}+\hat{\bm{b}}
                 + R\bm{\beta})
                 ,\notag\\
  \hat{\bm{d}}^{H} &= \frac{1}{\sqrt{2}}\qty(\hat{\bm{a}}-\hat{\bm{b}} - R\bm{\beta}).
\end{align}
The Heisenberg form of the BBP measurement operator is therefore
\begin{align}
  \frac{1}{R}\Delta \hat{E}^{H}
  & = \frac{1}{R}\sum_{k=1}^{N} \omega_{k} \big( {\hat{c}_{k}^{H}}{}^{\dagger} \hat{c}_{k}^{H} - {\hat{d}_{k}^{H}}{}^{\dagger} \hat{d}_{k}^{H} \big) \notag\\
  & = \sum_{k=1}^{N} \omega_{k} ( \hat{a}_{k}^{\dagger} \beta_{k} + \beta_{k}^{*} \hat{a}_{k} ) + \frac{1}{R}\sum_{k=1}^N \omega_{k} ( \hat{a}_{k}^{\dagger} \hat{b}_{k} + \hat{b}_{k}^{\dagger} \hat{a}_{k} )
                            . \label{eq:calop}
\end{align}
If we choose \(\bm{\beta}=-i\bm{\alpha}*(1/\bm{\omega})\), then the
measurement operator is the unnormalized target quadrature
\(\hat{q}_{\bm{\alpha}}\) on the signal modes \(\bm{a}\) plus an
operator proportional to \(1/R\):
\begin{align}
  \frac{1}{R}\Delta\hat{E}^{H}
  &= \hat{q}_{\bm{\alpha}}
    +\frac{1}{R}
    \qty(\qty(\bm{\omega}*\hat{\bm{b}})\cdot\hat{\bm{a}^{\dagger}}
    + \qty(\bm{\omega}*\hat{\bm{b}}^{\dagger})\cdot\hat{\bm{a}}
    ).
    \label{eq:bbpdiffop}
\end{align}
For the analysis of convergence of measurements
and states to those obtained by measuring the target quadrature
directly, we define \(\delta=1/R\) and write
\begin{align}
  \hat{q}_{\bm{\alpha},\delta}
  &=\hat{q}_{\bm{\alpha}} +
    \delta
    \qty(\qty(\bm{\omega}*\hat{\bm{b}})\cdot\hat{\bm{a}^{\dagger}}
    + \qty(\bm{\omega}*\hat{\bm{b}}^{\dagger})\cdot\hat{\bm{a}}
    )
    \label{eq:qdelta}
\end{align}%
for the BBP homodyne measurement operator computed for the LO
displacement with \(\bm{\beta}=-i\bm{\alpha}*(1/\bm{\omega})\).
Throughout this work, \(\hat{q}_{\bm{\alpha},\delta}\) is expressed as
an operator on the incoming mode, which subsumes the displacement and
beamsplitter used in the actual measurement.  The difference between
the BBP homodyne measurement operator and the target quadrature is
\begin{align}
  \hat{q}_{\bm{\alpha},\delta} - \hat{q}_{\bm{\alpha}}
  &= \delta \hat{\Delta}_{q},
    \label{eq:observable}
\end{align}
where we defined the difference operator
\(\hat{\Delta}_{q}=\qty(\qty(\bm{\omega}*\hat{\bm{b}})\cdot\hat{\bm{a}^{\dagger}}
+ \qty(\bm{\omega}*\hat{\bm{b}}^{\dagger})\cdot\hat{\bm{a}})\).
Because \(\hat{q}_{\bm{\alpha},\delta} - \hat{q}_{\bm{\alpha}}\) is an unbounded operator for all \(\delta>0\),
the effect of the difference depends on the incoming state of the
signal mode.

The scale of these weights \(\omega_{k}\) is arbitrary. Changing the
scale also changes the LO amplitude factor \(R=1/\delta\).  Physical
quantities depend on \(\delta\omega_{k}\) and are independent of the
scale used for the \(\omega_{k}\). We therefore normalize the
\(\omega_{k}\) so that their maximum value is \(1\). Standard pulsed homodyne
corresponds to the case where all \(\omega_{k}\) are equal.


\section{Measurement models}
\label{sec:functana}
For the remainder of the paper, we fix the target quadrature
\(\hat{q}_{\bm{\alpha}}\), where the quadrature is normalized,
\(|\bm{\alpha}|=1\). To simplify the notation we write
\(\hat{q}_{\delta}\) for the BBP homodyne measurement operator of Eq.~\eqref{eq:calop} and \(\hat{q}\) for the target quadrature, suppressing the suffix \(\bm{\alpha}\). We identify \(\hat{q}_{0}\) with \(\hat{q}\). Let \(d\Pi(x)\) and \(d\Pi_{\delta}(x)\) symbolize the projection-valued measures of the target quadrature \(\hat{q}\) and of the BBP measurement operator \(\hat{q}_{\delta}\).  We identify
\(d\Pi_{0}(x)\) with \(d\Pi(x)\). See App.~\ref{app:pvm} for a review of the notation and relevant properties for projection-valued measures.

By construction, for \(\delta>0\), \(\hat{q}_{\delta}\) is the
difference of two commuting, displaced and rotated, total energy operators, which
implies that it has a discrete spectrum. The displacement is according
to the initial displacement of the LO in our representation of the
measurement configuration, see Fig.~\ref{fig:bbp_homodyne}.  The
corresponding displaced number states are eigenstates of
\(\hat{q}_{\delta}\). For \(\lambda\) in the spectrum of
\(\hat{q}_{\delta}\), \(\Pi_{\delta}(\lambda)\) can be interpreted as
the projector onto the \(\lambda\)-eigenspace of
\(\hat{q}_{\delta}\). Because the spectrum of \(\hat{q}\) is
continuous and non-degenerate, \(\hat{q}\) has no eigenvectors.  The
putative projector onto the eigenstate with eigenvalue \(y\), which
would be expressed as \(\int_{x\in\rls} \delta_{y}(x) d\Pi(x)\), is
\(0\) because the set \(\{x\}\) has measure zero for continuous
spectra.  See the discussion in App.~\ref{app:pvm}. These
considerations indicate that the analysis of convergence of
\(\hat{q}_{\delta}\) or its measurement to \(\hat{q}\) or its
measurement, respectively, is not straightforward.

We model measurements as unitary processes that couple the system to
be measured to a quantum apparatus, where the apparatus outcomes are
defined by a projection-valued measure.  After this process, the
apparatus may decohere with respect to this measure, and except for such
decoherence, the apparatus no longer interacts with other
systems. This allows us to learn the outcome of the apparatus and
record it elsewhere. To model the usually destructive quadrature
measurements, the measured system is discarded.  For outcome-dependent
actions, the normal experimental order is to record the outcomes, and
then take the actions. This is mathematically equivalent to first
applying the unitary that conditional on the outcome basis of the
apparatus implements the corresponding action, and then decohering the
apparatus and recording outcomes if desired. We define the unitary
measurement model as the combination of apparatus coupling and, if
used, unitarily implemented conditional actions. Conditional actions
do not play a role in the remainder of this section, they are defined
and analyzed in Sect.~\ref{sec:condoperations}.

The measurements depend on how the apparatus is coupled to the
system. For the unitary measurement model, we use a
generalization of the von Neumann measurement
model~\cite{vonneumann1955foundations}. In the von Neumann measurement
model, apparatus couplings are implemented by applying unitaries of
the form
\(\hat{M}_{\delta}=e^{-i\hat{q}_{\delta}\otimes \hat{s}_{M}}\), where
\(M\) refers to the apparatus and \(\hat{s}_{M}\) is a self-adjoint
operator on the apparatus. The initial state of the apparatus is a
pure state \(\ket{g}_{M}\). Let $\ket{\psi}$ be the initial state of
the signal modes \(\bm{a}\), the LO modes \(\bm{b}\), and any
additional relevant systems. Then, we obtain the state
\begin{align}
  \ket{\phi_{\delta}}&=\hat{M}_{\delta}(\ket{\psi}\otimes\ket{g}_{M}).
                       \label{eq:psimdelta_general}
\end{align}
An explicit circuit diagram for the measurement process is shown in Fig.~\ref{fig:cu1}.
The LO modes always start in their vacuum state.
The initial state may be assumed to be pure by adding purifying quantum systems if necessary. In general, to quantify convergence we bound the fidelity from below or the Hilbert-space distance from above as a function of \(\delta\). For this purpose we review the relevant properties of fidelity.

\begin{figure}
  \includegraphics[scale = 0.7]{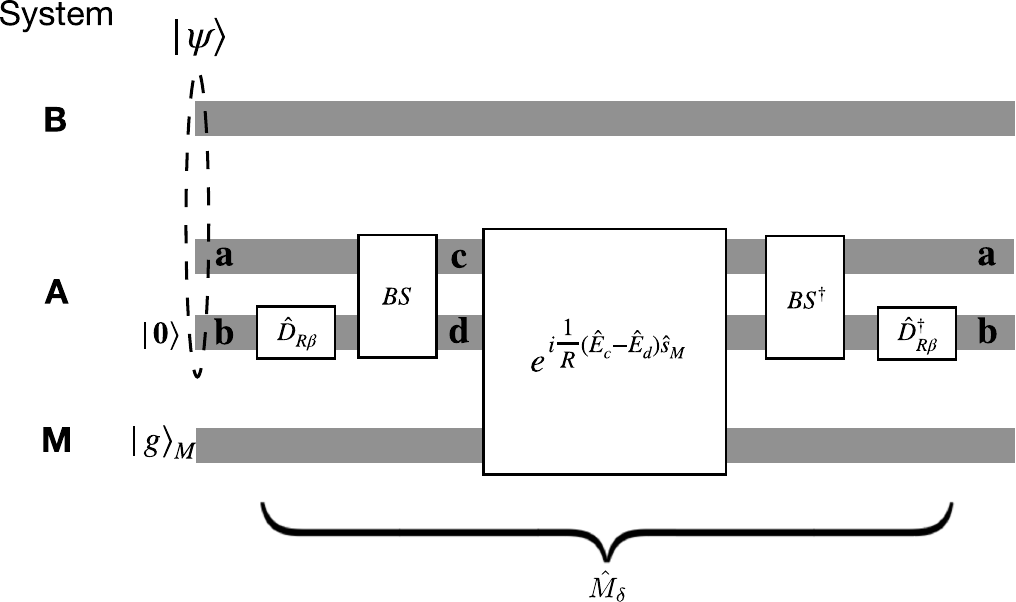}
  \caption{ An explicit circuit diagram for the measurement process
    with effective apparatus coupling
    \(\hat{M}_{\delta}=e^{-i\hat{q}_{\delta}\otimes \hat{s}_{M}}\).
    Here we show one way of representing \(\hat{M}_{\delta}\)
    explicitly in terms of the elements depicted in the schematic for
    BBP homodyne of Fig.~\ref{fig:bbp_homodyne}. The operator
    \(\hat{E}_{c}-\hat{E}_{d}\) in the diagram denotes the
    Schr\"odinger picture operator and is the physical energy
    difference between the two mode lines at the point where it is
    used.  The coupling to the apparatus in this unitary
    measurement model is directly in terms of the scaled energy
    difference operator shown in Fig.~\ref{fig:bbp_homodyne}, which is
    converted to \(\hat{q}_{\delta}\) as defined in
    Eq.~\eqref{eq:qdelta} by conjugating with the beamsplitter and
    displacement. The reversal of the beamsplitter and the
    displacement is used to define \(\hat{M}_{\delta}\) in the
    unitary measurement model. Since they act on subsequently
    ignored, destroyed or discarded systems they and any other
    subsequent action on these systems do not matter for the
    fidelity analysis.  In experiments, the actual measurement may
    be performed as described in Fig.~\ref{fig:bbp_homodyne}.  The
    system \(B\) includes all other relevant systems including any
    purifying systems.}
  \label{fig:cu1}
\end{figure}

The fidelity $F$ between pure states
$\ket{\phi_{1}}$ and $\ket{\phi_{2}}$ is given by
\begin{align}
  F &= |\braket{\phi_{1}}{\phi_{2}}|^{2}=
      \Re(\braket{\phi_{1}}{\phi_{2}})^{2} +
      \Im(\braket{\phi_{1}}{\phi_{2}})^{2} \notag\\
    &\geq
      \Re(\braket{\phi_{1}}{\phi_{2}})^{2}.
      \label{eq:fidRe}
\end{align}
For useful relationships between fidelities, overlaps and distances, see App.~\ref{app:inequalities}. The experimentally relevant fidelities need to account for the apparatus decoherence and outcome recording, which can be treated as a quantum channel and result in mixed states. For mixed states \(\sigma\) and \(\rho\), the fidelity
is given by
\(F(\sigma,\rho) = \qty(\tr(\qty|\sqrt{\sigma}\sqrt{\rho}|))^{2}\), where for bounded operators \(\hat{A}\) and \(\hat{B}\),
\(\tr(\qty|\hat{A}\hat{B}|)
=\tr(\sqrt{\hat{B}^{\dagger}\hat{A}^{\dagger}\hat{A}\hat{B}})\).

We take advantage of monotonicity of fidelity under the application of
quantum channels~\cite{hou2012fidelity} in order to perform our
analysis with pure states only.  According to monotonicity of
fidelity, if \(\cO\) is a quantum channel, then
\(F(\cO(\sigma),\cO(\rho))\geq F(\sigma,\rho)\).  We bound fidelities
between the pure states in the unitary measurement model and treat any
subsequent decoherence, outcome recording and discarding of quantum
systems as quantum channels applied identically to both of the states
being compared. In order to motivate and interpret the use of fidelity
as a measure of the quality of the measurement process, consider the
case where the measurement outcome $x$ is recorded in system $O$,
which may be a computer separate from the quantum apparatus. For
illustration we consider the case where the outcomes $x$ are real and,
given an initial state \(\rho\), the outcome distribution has
probability density \(\nu_{\rho}(x)\). The final state of the complete
measurement process including eventual readout to a classical system
is a quantum-classical state that may be written in the form
\(\cO(\rho)=\int dx \nu_{\rho}(x)\sigma_{\rho}(x)\otimes
\ketbra{x}_{O}\), where for \(x\in\rls\), \(\sigma_{\rho}(x)\) is a
density operator of the quantum systems remaining after the
measurement process has completed. System \(O\) contains classical
outcomes labeled by \(x\), with outcome states written with quantum
notation.  For states of this form, the fidelity is given by
\begin{align}
  F(\cO(\rho_{1}),\cO(\rho_{2}))
  &=
    \qty(\int dx \sqrt{\nu_{\rho_{1}}(x)}\sqrt{\nu_{\rho_{2}}(x)}\tr(\qty|\sqrt{\sigma_{\rho_{1}}(x)}\sqrt{\sigma_{\rho_{2}}(x)}|))^{2}.
\end{align}
This fidelity is a lower bound on the average of the
conditional-on-outcomes fidelities between the
\(\sigma_{\rho_{j}}(x)\), averaged with respect to either
\(\nu_{\rho_{1}}\) or \(\nu_{\rho_{2}}\).  To see this, let
\(\{k,l\}=\{1,2\}\) and use the Cauchy-Schwarz inequality as
follows:\footnote{According to convention, all displayed sequences of
  inequalities and equalities are to be read as if on one line.}
\begin{align}
  F(\cO(\rho_{1}),\cO(\rho_{2}))
  &\leq \left(\int dx \nu_{\rho_{k}}(x)\right)
    \int dx\nu_{\rho_{l}}(x)\tr(\qty|\sqrt{\sigma_{\rho_{1}}(x)}\sqrt{\sigma_{\rho_{2}}(x)}|)^{2}
    \notag\\
  &= \int dx \nu_{\rho_{l}}(x) F(\sigma_{\rho_{1}}(x),\sigma_{\rho_{2}}(x)),
\end{align}
where we used \(\int dx \nu_{\rho_{k}}(x)=1\) since
\(\nu_{\rho_{k}}(x)\) is a probability density. The average
conditional-on-outcome fidelity emphasizes the difference between the states of the quantum systems. The full expression for fidelity also accounts for differences between the observed probability distributions of the outcomes. The latter does not matter if, after taking conditional actions, the outcomes play no further role and are discarded.

To apply the measurement model to BBP homodyne measurements, we elaborate the unitary measurement model. Consider first a BBP measurement that is ideal aside from having $\delta>0$.  In this case one would wish for the apparatus outcome basis to consist of \(\ket{x}_{M}\) for \(x\) real, and for the state after the unitary measurement
process to be \(\ket{\phi_{\delta}}=\int_{x\in\rls}
d\Pi_{\delta}(x)\ket{\psi}\otimes\ket{x}_{M}\).  This makes sense for \(\delta>0\) because the spectrum of \(\hat{q}_{\delta}\) is discrete. But for \(\delta=0\), the state so expressed is not a proper Hilbert-space state. See the discussion in App.~\ref{app:pvm}. If one nevertheless attempts to assess the fidelity between 
\(\ket{\phi_{\delta}}\) and \(\ket{\phi_{0}}\) according to the above expression, it appears that the fidelity, to the extent it can be defined, is \(0\).

The ideal measurements considered in the previous paragraph are not
realizable by real-world measurements. For this reason, it is
necessary to consider realistic measurements. Realistic measurements
do not perfectly resolve the eigenvalues of the operator being
measured. In terms of the eventually recorded classical information,
this lack of resolution can be described by a stochastic process that,
conditional on the true eigenvalue \(x\), produces a measured value
\(z\) with transition probability density \(f(z|x)\).  In the unitary
measurement model, this effect is achieved with a coupling unitary
\(\hat{M}_{\delta}\) for which 
\begin{align}
  \ket{\phi_{\delta}}
  &=   \hat{M}_{\delta}\qty(\ket{\psi}\otimes\ket{g}_M )
    \notag\\
  &= \int_{x\in\rls}d\Pi_{\delta}(x)\qty(\ket{\psi}\otimes \int_{z\in\rls} dz f_{1/2}(z|x)\ket{z}_M)
    \label{eq:udelta}
\end{align}
where \(f_{1/2}(z|x)\) is a complex-valued function satisfying
\(|f_{1/2}(z|x)|^{2}=f(z|x)\).  Intuitively, after apparatus
decoherence, the value recorded in the apparatus behaves as if it had
been obtained by the stochastic process above. The coupling unitary
can in principle be implemented as a \(\hat{q}_{\delta}\)-conditional
unitary. Translation invariant transition probability densities
  can be represesented with coupling unitaries \(\hat{M}_{\delta}\) of
  the form given before Eq.~\eqref{eq:psimdelta_general}. To represent
  other transition probability densities, we define such unitaries
as unitaries that commute with \(\hat{q}_{\delta}\) and can be
expressed in the form
\begin{align}
  \hat{M}_{\delta} &= \int_{x\in\rls}d\Pi_{\delta}(x)\otimes \hat{U}(x)_{M}
                     \label{eq:udeltaform}
\end{align}
with \(\hat{U}(x)_{M}\) independent of \(\delta\). See App.~\ref{app:pvm} for requirements on \(\hat{U}(x)_{M}\) that ensure that \(\hat{M}_{\delta}\) is a well-defined unitary operator.


\section{Quantified convergence of post-measurement states}
\label{sec:sconv}
We first analyze the convergence of the post-measurement state without
outcome conditional actions, using the form of the unitary measurement
process given in Eq.~\eqref{eq:psimdelta_general}.  The main result is
Thm.~\ref{thm:qmeas}, which gives a bound on the fidelity between
\(\ket{\phi_{\delta}}\) and \(\ket{\phi_{0}}\) that depends on the
coefficients of the target quadrature, the expectation of an
energy-related operator on the signal modes, and the second and fourth
moment of \(\hat{s}_{M}\). The bound on fidelity yields quantified
convergence of \(f(\hat q_{\delta})\ket{\psi}\) for continuous bounded
functions \(f\) given that \(\ket{\psi}\) has bounded expectation of
the energy-related operator. We end the section with a lemma that
makes it possible to apply the quantified convergence results to
unbounded measurable functions by regularizing these functions.  The
bounds obtained in this section for BBP homodyne can be improved at
the cost of more involved expressions with a more detailed analysis of
the inequalities in App.~\ref{app:appendix3}. For standard pulsed
homodyne the expressions simplify and yield better bounds, as we show
in App.~\ref{app:standardhomodyne}.

We begin with a fundamental bound that is used both for estimating the fidelity of the measurement process and the fidelity of states obtained after conditional operations.  For this bound we consider a general system-apparatus state that is vacuum on the LO modes. We normally express the bounds in terms of distances rather than fidelities,
relying on the relationships in App.~\ref{app:fidelities} to convert between them.

\begin{theorem}\label{thm:mainbnd}
  Let \(\ket{\psi}\) be a joint state of the signal, LO modes, the
  apparatus \(M\) and any other relevant systems including those
  needed for purifying the state. Assume that \(\ket{\psi}\) is vacuum
  on the LO modes.  For each mode \(a_{k}\), let \(\hat{n}_{k}\) be
  its number operator. Define
  \begin{align}
    \hat{\Omega}&=\sum_{k}\omega_{k}^{2}\hat{n}_{k},
                  \notag\\
    \overline{\omega^{2}}
    &= \sum_{k}|\alpha_{k}|^{2}\omega_{k}^{2}.
  \end{align}
  Then
  \begin{align}
    \qty|e^{-i\hat q_{\delta}s}\ket{\psi} - e^{-i\hat{q}s}\ket{\psi}|^{2}
    &\leq
       4(\delta\,s)^{2}\bra{\psi}
    \qty(\qty(1+s^{2})\hat{\Omega} + s^{2}\overline{\omega^{2}})
      \ket{\psi}.
      \label{thm:eq:mains}
  \end{align}
  For every self-adjoint apparatus operator \(\hat{s}_{M}\),
  \begin{align}
    \qty|e^{-i\hat q_{\delta}\hat{s}_{M}}\ket{\psi}
    - e^{-i\hat{q}\hat{s}_{M}}\ket{\psi}|^{2}
    &\leq
       4\delta^{2}\bra{\psi}
    \hat{s}_{M}^{2}\qty(\qty(1+\hat{s}_{M}^{2})\hat{\Omega} + \hat{s}_{M}^{2}\overline{\omega^{2}})
      \ket{\psi}.
      \label{thm:eq:mainhats}
  \end{align}
\end{theorem}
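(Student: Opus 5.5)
The plan is to compare the two unitary groups with a Duhamel (variation-of-constants) formula, bound the integrand using the vacuum property of the LO modes, and then lift the scalar bound to an operator-valued $\hat{s}_{M}$ through the spectral theorem.

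\textbf{Step 1: Duhamel formula.} For fixed real $s$, differentiate $t\mapsto e^{-i\hat q_{\delta}(s-t)}e^{-i\hat q t}\ket{\psi}$ in $t$. Since $\hat q_{\delta}-\hat q=\delta\hat\Delta_{q}$ by Eq.~\eqref{eq:observable}, this gives
\begin{align}
  e^{-i\hat q s}\ket{\psi}-e^{-i\hat q_{\delta}s}\ket{\psi}
  &= i\delta\int_{0}^{s}dt\, e^{-i\hat q_{\delta}(s-t)}\hat\Delta_{q}e^{-i\hat q t}\ket{\psi}.
\end{align}
The operator $e^{-i\hat q_{\delta}(s-t)}$ is unitary. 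Hence the norm of the left-hand side is at most $\delta\int_{0}^{|s|}\qty|\hat\Delta_{q}e^{-i\hat q t}\ket{\psi}|\,dt$. Applying Cauchy--Schwarz to the integral then gives
\begin{align}
  \qty|\cdot|^{2}\le \delta^{2}|s|\int_{0}^{|s|}\qty|\hat\Delta_{q}e^{-i\hat q t}\ket{\psi}|^{2}dt .
\end{align}

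\textbf{Step 2: bounding the integrand.} The key observation is that $e^{-i\hat q t}$ acts only on the signal modes $\bm a$ (and trivially on everything else), so $e^{-i\hat qt}\ket{\psi}$ is still vacuum on the LO modes $\bm b$.

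On any such state the $\hat b_{k}$ terms of $\hat\Delta_{q}$ vanish. What remains is $\sum_{k}\omega_{k}\hat b_{k}^{\dagger}\hat a_{k}$ applied to the state. Using $[\hat b_{k},\hat b_{l}^{\dagger}]=\delta_{kl}$ and the vacuum condition, its squared norm equals $\bra{\psi}e^{i\hat qt}\hat\Omega e^{-i\hat qt}\ket{\psi}$.

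Now $e^{-i\hat q t}=\hat D_{t\bm\alpha}^{\dagger}$-type displacement, so by Eq.~\eqref{eq:dispeqs} conjugation shifts $\hat a_{k}\to\hat a_{k}+c\,t\alpha_{k}$ with $|c|=1$. Using the operator inequality $(\hat a+z)^{\dagger}(\hat a+z)\le 2\hat a^{\dagger}\hat a+2|z|^{2}$, we get
\begin{align}
  \qty|\hat\Delta_{q}e^{-i\hat q t}\ket{\psi}|^{2}\le 2\bra{\psi}\hat\Omega\ket{\psi}+2t^{2}\,\overline{\omega^{2}} .
\end{align}

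\textbf{Step 3: integrating.} Integrating the bound from Step 2 gives
\begin{align}
  \delta^{2}\qty(2s^{2}\langle\hat\Omega\rangle+\tfrac{2}{3}s^{4}\overline{\omega^{2}}),
\end{align}
which is bounded by the right-hand side of Eq.~\eqref{thm:eq:mains}. There is in fact slack: the extra $4\delta^{2}s^{4}\langle\hat\Omega\rangle$ term in the statement is not even needed by this route.

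\textbf{Step 4: operator-valued $\hat s_{M}$.} The operator $\hat s_{M}$ commutes with $\hat q$, $\hat q_{\delta}$, $\hat\Omega$ and with the LO vacuum property. Decompose $\ket{\psi}=\int dE_{M}(s)\ket{\psi}$ using the spectral measure $E_{M}$ of $\hat s_{M}$. On each spectral piece $e^{-i\hat q_{\delta}\hat s_{M}}$ acts as $e^{-i\hat q_{\delta}s}$, and pieces supported on disjoint spectral sets stay orthogonal under all the operators involved. The squared distance is therefore the spectral integral of the scalar bound from Step 3 with the measure $\bra{\psi}dE_{M}(s)\ket{\psi}$ restricted appropriately. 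That spectral integral is exactly the expectation on the right-hand side of Eq.~\eqref{thm:eq:mainhats}. In practice I would do this with a discretized spectral measure (finite partitions of the spectrum) and pass to the limit.

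\textbf{Main obstacle.} The main difficulty is rigor with unbounded operators. The Duhamel differentiation requires $\ket{\psi}$, and its images under the displacement groups, to lie in the domain of $\hat\Delta_{q}$.

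I would first prove the bounds for $\ket{\psi}$ in the dense set of states with finitely many photons in the signal and LO modes and compactly supported $\hat s_{M}$-spectral content. There the computations are legitimate, since displacements map this set into the analytic vectors of quadratures. I would then extend to general $\ket{\psi}$ by approximation. This extension uses lower semicontinuity of the right-hand side and continuity of the left-hand side; if the right-hand side is infinite there is nothing to prove.
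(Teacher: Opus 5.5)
Your proof is correct, and it takes a genuinely different route from the paper.

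\textbf{How the two routes differ.} The paper (App.~\ref{app:appendix3}) uses the affine circle-group algebra generated by $\hat q_a,\hat q_b,\hat B$ to factor $e^{i\hat q_\delta s}e^{-i\hat q s}$ exactly as $\hat U e^{i\hat H}\hat U^\dagger$. It then applies $\Re(e^{i\hat H})\ge 1-\hat H^2/2$, contracts $\hat H^2$ with the LO vacuum, and estimates the coefficients $\phi_k$, $h_k(\delta s)$ using the $\cot$ bounds. You instead use a Duhamel formula, and the key choice is the ordering. With the ideal evolution $e^{-i\hat q t}$ on the right, the state hit by $\hat\Delta_q$ is still LO vacuum. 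So only $\sum_k\omega_k\hat b_k^\dagger\hat a_k$ survives, its squared norm is $\langle\hat\Omega\rangle$ of a displaced state, and $e^{-i\hat q_\delta(s-t)}$ drops out of the norm.

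\textbf{What your route buys.} It is shorter and needs no group identities. It gives $\delta^2(2s^2\langle\hat\Omega\rangle+\tfrac23 s^4\overline{\omega^2})$ for all $\delta s$, which is strictly smaller than Eq.~\eqref{thm:eq:mains}. In particular the $s^4\langle\hat\Omega\rangle$ term disappears; in the paper it arises from squaring the $s\,u(\delta s)\hat u_{\delta s}$ part of $\hat H$. Downstream, this would for instance remove the $b_{g,4}\langle\hat\Omega\rangle$ term from Thm.~\ref{thm:qmeas}. You can sharpen it further: bound $\int_0^{|s|}|\hat\Delta_q e^{-i\hat q t}\ket{\psi}|\,dt$ directly using $|\hat\Delta_q e^{-i\hat q t}\ket{\psi}|\le\sqrt{\langle\hat\Omega\rangle}+|t|\sqrt{\overline{\omega^2}}$ instead of Cauchy--Schwarz. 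This reproduces the leading-order constants of Eq.~\eqref{eq:simplifiedbnd1} non-perturbatively.

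\textbf{What the paper's route buys.} The exact factorization carries structure your estimate does not see:
\begin{itemize}
\item an optimizable global phase $\theta$;
\item the periodicity of the beam-splitter factors, with $\phi_k=[\omega_k\delta s]_\pi$;
\item the standard-homodyne specialization $\hat u_{\delta s}=\hat q$;
\item the exact coherent-state computation showing the bounds are tight up to constants.
\end{itemize}

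\textbf{Step 4.} This is the paper's direct-integral argument in discretized form. It could be skipped entirely by running Duhamel in $\tau\in[0,1]$ with generator $\hat q\hat s_M$, since $\hat s_M$ commutes with everything involved.

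\textbf{One correction in your approximation step.} Lower semicontinuity of the right-hand side gives $\mathrm{RHS}(\psi)\le\liminf_n\mathrm{RHS}(\psi_n)$. That is the wrong direction for passing $\mathrm{LHS}(\psi_n)\le\mathrm{RHS}(\psi_n)$ to the limit. Use instead $\psi_n=P_n\psi$, with $P_n$ the spectral projection of $\hat n_{\tot}$ onto $[0,n]$ times that of $\hat s_M$ onto $[-n,n]$. This $P_n$ commutes with $\hat\Omega$ and $\hat s_M$ and preserves the LO vacuum, so the right-hand side converges monotonically. Both sides are quadratic in $\ket{\psi}$, so no renormalization is needed.
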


\begin{proof}
  The inequality of Eq.~\eqref{thm:eq:mains} is proven in
  App.~\ref{app:appendix3} and is identical to the last inequality of
  Eq.~\eqref{eq:simplifiedbnd0}.
  For more detailed and tighter bounds, see the derivation of this
  inequality. In particular, if the contributions from
  \(s^{2}(\delta\,s)^{2}\hat{\Omega}\) dominate the bound, the bounds
  may be improvable by applying Eq.~\eqref{eq:simplifiedbnd1} instead
  of Eq.~\eqref{eq:simplifiedbnd0}.  This comes at the cost of a
  degree \(6\) contribution in \(s\). See
  App.~\ref{app:standardhomodyne} for the necessary adjustments
  for such contributions in standard pulsed
  homodyne.  Up to constants, these tighter bounds cannot be improved
  as shown in the last part of App.~\ref{app:appendix3} by estimates for coherent
  states. For unitary operators \(\hat{U}\) and \(\hat{V}\),
  \(|\hat{U}\ket{\psi}+\hat{V}\ket{\psi}|^{2} \leq 2\), so the bounds
  in the theorem can be replaced by the minimum of \(2\) and the
  bounds shown, if desired.

  The inequality of Eq.~\eqref{thm:eq:mainhats} follows from
  Eq.~\eqref{thm:eq:mains} by expanding the state with respect to the
  projection-valued measure \(d\Pi_{M}(s)\) of \(\hat{s}_{M}\)
  according to
  \(\int_{s\in\rls}d\Pi_{M}(s)\ket{\psi} =
  \int_{s\in\rls}^{\oplus}d\mu(s)\ket{\psi(s)}\), where \(d\mu(s)\) is
  the measure in the direct-integral representation of \(\ket{\psi}\)
  with respect to \(d\Pi_{M}(s)\) (App.~\ref{app:pvm}). The
  conditional states \(\ket{\psi(s)}\) satisfy the conditions on
  \(\ket{\psi}\) in the theorem statement and can be substituted for
  \(\ket{\psi}\) in Eq.~\eqref{thm:eq:mainhats}.  Abbreviate
  \(\hat{A}= e^{-i\hat q_{\delta}\hat{s}_{M}} -
  e^{-i\hat{q}\hat{s}_{M}}\), which commutes with \(\hat{s}_{M}\).The
  unnormalized states \(\ket{\psi(s)}\) naturally belong to Hilbert
  spaces of the form \(\cH_{l}\otimes\cH_{R}\), where \(\cH_{R}\) is
  the Hilbert space of the systems not including the apparatus and
  \(\cH_{l}\) is one of the Hilbert spaces of the direct-integral
  representation for \(\hat{s}_{M}\) on the apparatus Hilbert space.
  Because \(\hat{s}_{M}\) has eigenvalue \(s\) on the Hilbert space
  associated with \(s\in\rls\) in the direct integral, the operator
  \(\hat{A}\) acts as \(e^{-i\hat{q}_{\delta}s}-e^{i\hat{q}s}\) on
  this Hilbert space, and only on the factor \(\cH_{R}\) .
  Consequently,
  \(\int_{s\in\rls}\hat{A}^{\dagger}\hat{A}\,d\Pi_{M}(s)\ket{\psi} =
  \int_{s\in\rls}^{\oplus}d\mu(s)\hat{A}^{\dagger}\hat{A}\ket{\psi_{s}}\).
  We have
  \begin{align}
        \qty|e^{-i\hat q_{\delta}\hat{s}_{M}}\ket{\psi}
    - e^{-i\hat{q}\hat{s}_{M}}\ket{\psi}|^{2}
    &=
      \bra{\psi}\hat{A}^{\dagger}\hat{A}\ket{\psi}
      \notag\\
    &=
      \int_{s\in\rls}\bra{\psi}\hat{A}^{\dagger}\hat{A}d\Pi_{M}(s)\ket{\psi}
      \notag\\
    &=\int_{s\in\rls}d\mu(s)\bra{\psi_{s}}\hat{A}^{\dagger}\hat{A}\ket{\psi_{s}}
      \notag\\
    &=\int_{s\in\rls}d\mu(s)
              \qty|e^{-i\hat q_{\delta}\hat{s}_{M}}\ket{\psi_{s}}
      - e^{-i\hat{q}\hat{s}_{M}}\ket{\psi_{s}}|^{2}
      \notag\\
    &\leq
      \int_{s\in\rls}d\mu(s)
             4\delta^{2}\bra{\psi_{s}}s^{2}
    \qty(\qty(1+s^{2})\hat{\Omega} + s^{2}\overline{\omega^{2}})
      \ket{\psi_{s}}
      \notag\\
    &=
      \int_{s\in\rls}
             4\delta^{2}\bra{\psi}s^{2}
    \qty(\qty(1+s^{2})\hat{\Omega} + s^{2}\overline{\omega^{2}})
      d\Pi(s)\ket{\psi}
      \notag\\
    &=
       4\delta^{2}\bra{\psi}
    \hat{s}_{M}^{2}\qty(\qty(1+\hat{s}_{M}^{2})\hat{\Omega} + \hat{s}_{M}^{2}\overline{\omega^{2}})
      \ket{\psi},
  \end{align}
  which is the right-hand side of the desired inequality.
\end{proof}

The next theorem is a special case of Thm.~\ref{thm:mainbnd} and
determines the distance between the states after the unitary
measurement process, which determines the fidelity between the
states. As discussed in Sect.~\ref{sec:functana}, when expressed in
terms of fidelity, the bounds are valid for the quantum-classical
states obtained after any further processing independent of
\(\delta\), including decoherence, classical readouts and discarding
of systems. For a wavefunction \(g(s)\) in \(L^{2}(\rls)\), we write
the pure quantum state corresponding to \(g(s)\) as
\(\ket{g(\bm{\cdot})}\).

\begin{theorem} \label{thm:qmeas} Let \(\hat{\Omega}\) and
  \(\overline{\omega^{2}}\) be as defined in
  Thm.~\ref{thm:mainbnd}. Let \(\hat{s}_{M}\) have non-degenerate
  spectrum so that the apparatus Hilbert space has a representation
  \(L^{2}(\rls, d\mu(s))\), and let \(\ket{g(\bm{\cdot})}\) be the
  initial state of the apparatus. Define
  \begin{align}
    b_{g,2l} &= \int_{s\in\rls} d\mu(s) s^{2l}|g(s)|^{2}.
  \end{align}
  Let \(\ket{\psi}\) be the initial state of the signal, LO modes and
  systems other than the apparatus, where \(\ket{\psi}\) is vacuum on
  the LO modes. Then, with \(\ket{\phi_{\delta}}\) as in
  Eq.~\eqref{eq:psimdelta_general}, we have
  \begin{align}
    \qty|\vphantom{f^{2}}\ket{\phi_{\delta}}-\ket{\phi_{0}}|^{2}
    &\leq
      4\delta^{2}\qty(
      b_{ g,2}\bra{\psi}\hat{\Omega}\ket{\psi}
        + b_{ g,4}\qty(\bra{\psi}\hat{\Omega}\ket{\psi}+ \overline{\omega^{2}})
      ).
      \label{eq:qmeastheorem1}
  \end{align}
\end{theorem}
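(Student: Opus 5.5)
The plan is to apply Eq.~\eqref{thm:eq:mainhats} of Thm.~\ref{thm:mainbnd} to the joint state \(\ket{\psi}\otimes\ket{g(\bm{\cdot})}_{M}\). By Eq.~\eqref{eq:psimdelta_general}, \(\ket{\phi_{\delta}}=e^{-i\hat{q}_{\delta}\hat{s}_{M}}(\ket{\psi}\otimes\ket{g(\bm{\cdot})})\), and \(\ket{\phi_{0}}\) is the same expression with \(\hat{q}\) in place of \(\hat{q}_{\delta}\). This joint state qualifies for the theorem: it is a state of the signal modes, the LO modes, the apparatus and the other systems, and it is vacuum on the LO modes because \(\ket{\psi}\) is. The theorem then gives
\begin{align}
  \qty|\vphantom{f^{2}}\ket{\phi_{\delta}}-\ket{\phi_{0}}|^{2}
  &\leq 4\delta^{2}\,\bra{\psi}\otimes\bra{g(\bm{\cdot})}
    \hat{s}_{M}^{2}\qty(\qty(1+\hat{s}_{M}^{2})\hat{\Omega}+\hat{s}_{M}^{2}\overline{\omega^{2}})
    \ket{\psi}\otimes\ket{g(\bm{\cdot})}.
\end{align}

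Next, evaluate the right-hand side using the product structure. The operator \(\hat{\Omega}\) acts only on the signal modes, and \(\hat{s}_{M}\) acts only on the apparatus. The expectation therefore splits into two kinds of terms:
\begin{itemize}
  \item the terms containing \(\hat{\Omega}\) factor as \(\bra{g}\hat{s}_{M}^{2}+\hat{s}_{M}^{4}\ket{g}\,\bra{\psi}\hat{\Omega}\ket{\psi}\);
  \item the remaining term is \(\overline{\omega^{2}}\bra{g}\hat{s}_{M}^{4}\ket{g}\).
\end{itemize}
In the representation \(L^{2}(\rls,d\mu(s))\), where \(\hat{s}_{M}\) acts as multiplication by \(s\), we have \(\bra{g}\hat{s}_{M}^{2l}\ket{g}=\int d\mu(s)\,s^{2l}|g(s)|^{2}=b_{g,2l}\). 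Collecting terms gives \(4\delta^{2}\qty(b_{g,2}\langle\hat{\Omega}\rangle+b_{g,4}(\langle\hat{\Omega}\rangle+\overline{\omega^{2}}))\), which is Eq.~\eqref{eq:qmeastheorem1}.

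The only delicate step is justifying the factorization when operators are unbounded. If \(b_{g,4}\) or \(\bra{\psi}\hat{\Omega}\ket{\psi}\) is infinite, the bound is trivial. Otherwise, I would compute the expectation exactly as in the proof of Thm.~\ref{thm:mainbnd}, via the direct-integral decomposition with respect to the spectral measure of \(\hat{s}_{M}\). For the product state, the conditional states are \(g(s)\ket{\psi}\) with measure \(d\mu(s)\). Integrating the pointwise bound of Eq.~\eqref{thm:eq:mains} against \(|g(s)|^{2}d\mu(s)\) then yields the moments \(b_{g,2}\) and \(b_{g,4}\) directly, with no separate factorization argument needed.
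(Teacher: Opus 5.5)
Your proposal is correct and matches the paper's proof: substitute the product state \(\ket{\psi}\otimes\ket{g(\bm{\cdot})}\) into Eq.~\eqref{thm:eq:mainhats}, then identify the apparatus moments \(\bra{g}\hat{s}_{M}^{2l}\ket{g}\) with \(b_{g,2l}\) to obtain Eq.~\eqref{eq:qmeastheorem1}. Your added direct-integral justification for the unbounded operators repeats the argument the paper already uses to derive Eq.~\eqref{thm:eq:mainhats} from Eq.~\eqref{thm:eq:mains}, so it is harmless but not strictly needed.
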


\begin{proof}
  We substitute \(\ket{\psi}\otimes \ket{g(\bm{\cdot})}\) for \(\ket{\psi}\) 
  in Thm.~\ref{thm:mainbnd}. With this substitution, we have
  \begin{align}
    b_{g,2l}&= (\bra{\psi}\otimes\bra{g(\bm{\cdot})})\hat{s}^{2l}
              (\ket{\psi}\otimes\ket{g(\bm{\cdot})}).
  \end{align}
  The inequality of Eq.~\eqref{eq:qmeastheorem1} then follows by substitution
  into the inequality of Eq.~\eqref{thm:eq:mainhats}.
\end{proof}

The following proposition quantifies the convergence of functions of \(\hat{q}_{\delta}\) in a state-dependent way.  

\begin{proposition}\label{prop:fbndedmeas}
  Let \(\hat{\Omega}\) and
  \(\overline{\omega^{2}}\) be as defined in
  Thm.~\ref{thm:mainbnd}.
  Suppose that \(f(x)\) is the Fourier transform of a bounded
  complex measure \(d\tilde\mu(\kappa)\). Write \(d\tilde\mu(\kappa)=e^{i\theta(\kappa)}d\mu(\kappa)\) for a bounded positive measure \(d\mu(\kappa)\).
  Define \(\tilde{f}_{l}=\int_{\kappa\in\rls} d\mu(\kappa) |\kappa|^{l}\). Then for all states \(\ket{\psi}\) of the signal and LO modes and other relevant systems,
  where \(\ket{\psi}\) is vacuum on the LO modes, we have the following bound:
  \begin{align}
    \qty|\vphantom{f^{2}}f(\hat{q}_{\delta})\ket{\psi}-f(\hat{q}_{0})\ket{\psi}|^{2}
    &\leq
      4\delta^{2}(\tilde f_{0}+\tilde f_{2})
      \tilde f_{2}\bra{\psi}(\hat{\Omega}+\overline{\omega^{2}})\ket{\psi}.
                                                              \label{eq:cor:cfunq}
  \end{align}
\end{proposition}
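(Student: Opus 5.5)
Write $f(x)=\int_{\kappa\in\rls} d\tilde\mu(\kappa)\,e^{-i\kappa x}$; the sign convention does not matter, since $\kappa\to-\kappa$ leaves all $\tilde f_l$ unchanged. By the functional calculus for the self-adjoint operators $\hat q_\delta$ and $\hat q$, this gives the Bochner integral
\begin{align}
f(\hat q_\delta)\ket{\psi}-f(\hat q)\ket{\psi}=\int_{\kappa\in\rls} e^{i\theta(\kappa)}\,d\mu(\kappa)\,\qty(e^{-i\hat q_\delta\kappa}-e^{-i\hat q\kappa})\ket{\psi}.
\end{align}
This holds because $\mu$ is a bounded measure and the integrand is norm-bounded by $2$. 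The plan is to take norms inside the integral and then apply Eq.~\eqref{thm:eq:mains} of Thm.~\ref{thm:mainbnd} pointwise in $s=\kappa$.

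Taking norms gives $\qty|f(\hat q_\delta)\ket{\psi}-f(\hat q)\ket{\psi}| \le \int d\mu(\kappa)\, \qty|(e^{-i\hat q_\delta\kappa}-e^{-i\hat q\kappa})\ket{\psi}|$. Theorem~\ref{thm:mainbnd} bounds the integrand by $2\delta|\kappa|\sqrt{(1+\kappa^2)\langle\hat\Omega\rangle+\kappa^2\overline{\omega^2}}$. Since $1+\kappa^2\ge\kappa^2$, this is at most $2\delta|\kappa|\sqrt{(1+\kappa^2)}\sqrt{\langle\hat\Omega+\overline{\omega^2}\rangle}$.

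Next square and apply Cauchy--Schwarz with respect to $d\mu$, splitting the integrand as $|\kappa|\cdot |\kappa|^{0}\sqrt{1+\kappa^2}$. This gives
\begin{align}
\qty(\int d\mu(\kappa)\,|\kappa|\sqrt{1+\kappa^2})^2 \le \int d\mu(\kappa)\,\kappa^2\;\int d\mu(\kappa)\,(1+\kappa^2)=\tilde f_2(\tilde f_0+\tilde f_2).
\end{align}
Combining the pieces yields $4\delta^2(\tilde f_0+\tilde f_2)\tilde f_2\bra{\psi}(\hat\Omega+\overline{\omega^2})\ket{\psi}$, which is Eq.~\eqref{eq:cor:cfunq}.

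The main obstacle is justifying the first steps rigorously. We need the Fourier representation to lift to operators, i.e.\ $f(\hat q_\delta)=\int d\tilde\mu(\kappa)e^{-i\kappa\hat q_\delta}$ strongly. This follows from Fubini applied to $\bra{\phi}\cdot\ket{\psi}$ via the spectral measures $d\Pi_\delta$, using boundedness of $\mu$. We also need measurability of $\kappa\mapsto(e^{-i\hat q_\delta\kappa}-e^{-i\hat q\kappa})\ket{\psi}$, which holds by strong continuity of the unitary groups. If $\tilde f_2=\infty$ or $\langle\hat\Omega\rangle=\infty$, the bound is trivial. Otherwise everything is finite and the estimates above go through.
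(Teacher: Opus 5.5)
Your proposal is correct and follows essentially the same route as the paper: a Fubini-justified Fourier representation of $f(\hat{q}_{\delta})$, pulling the norm inside the $\kappa$-integral, applying Eq.~\eqref{thm:eq:mains} pointwise in $s=\kappa$, and a Cauchy--Schwarz step with weight $1+\kappa^{2}$, which gives the same constant. The differences are cosmetic. You work with a vector-valued (Bochner) integral rather than matrix elements $\bra{\phi}\cdot\ket{\psi}$ maximized over $\ket{\phi}$. You also apply the theorem before Cauchy--Schwarz, absorbing $\kappa^{2}\le 1+\kappa^{2}$ at that point, whereas the paper applies it after Cauchy--Schwarz and then uses $\kappa^{2}/(1+\kappa^{2})\le 1$.
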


See App.~\ref{app:pvm} for a brief review of bounded complex measures
and our convention for Fourier transforms.

\begin{proof}
  According to the spectral theorem in projection-valued measure form, for every normalized state \(\ket{\phi}\), we have
  \begin{align}
    \bra{\phi}f(\hat{q}_{\delta})\ket{\psi}
    &= \int_{x\in\rls} \bra{\phi} d\Pi_{\delta}(x)\ket{\psi} f(x)\notag\\
    &=\int_{x\in{\rls}} \bra{\phi}d\Pi_{\delta}(x)\ket{\psi} \int_{\kappa\in\rls} d\tilde\mu(\kappa) e^{ix\kappa}.
      \label{eq:overlapmeasure}
  \end{align}
  The expression \(\bra{\phi}d\Pi_{\delta}(x)\ket{\psi}\) is a bounded
  complex measure on \(\rls\), so we can write
  \(\bra{\phi}d\Pi_{\delta}(x)\ket{\psi} =e^{i\theta'(x)}d\nu(x)\)
  with \(d\nu(x)\) a bounded positive measure. The expression
  \(d\nu(x) d\tilde\mu(\kappa)\)  defines a bounded complex measure on
  \(\rls^{2}\). Because the integrands and measures are bounded, we can apply the Fubini-Tonelli theorem to write the last expression in
  Eq.~\eqref{eq:overlapmeasure} as
  \begin{align}
    \int_{x\in\rls} \bra{\phi}d\Pi_{\delta}(x)\ket{\psi} \int_{\kappa\in\rls} d\tilde\mu(\kappa) e^{ix\kappa}
    &=
      \int_{x\in\rls} d\nu(x) e^{i\theta'(x)} \int_{\kappa\in\rls} d\mu(\kappa)e^{ix\kappa}e^{i\theta(\kappa)}  \notag\\
    &=
      \int_{\kappa\in\rls}  d\mu(\kappa) e^{i\theta(\kappa)}\int_{x\in\rls} d\nu(x)e^{ix\kappa}e^{i\theta'(x)}\notag\\
    &=
      \int_{\kappa\in\rls} d\tilde\mu(\kappa) \int_{x\in\rls} \bra{\phi}d\Pi_{\delta}(x)\ket{\psi} e^{ix\kappa}\notag\\
    &=
      \int_{\kappa\in\rls} d\tilde\mu(\kappa)\bra{\phi} e^{i\kappa\hat{q}_{\delta}}\ket{\psi}.
  \end{align}
  We can therefore use the Cauchy-Schwarz inequality
  and apply Eq.~\eqref{thm:eq:mains} to bound
  \begin{align}
    \qty|\vphantom{f^{2}}\bra{\phi}f(\hat{q}_{\delta}) - f(\hat{q}_{0})\ket{\psi}|^{2}
    &=
      \qty|\int_{\kappa\in\rls} d\tilde\mu(\kappa)
      \bra{\phi}e^{i\kappa\hat{q}_{\delta}}-
      e^{i\kappa\hat{q}_{0}}\ket{\psi}|^{2}
      \nonumber\\
    &=
      \qty|\int_{\kappa\in\rls} d\mu(\kappa)e^{i\theta(\kappa)}
      \bra{\phi}e^{i\kappa\hat{q}_{\delta}}-
      e^{i\kappa\hat{q}_{0}}\ket{\psi}|^{2}
      \nonumber\\
    &\leq \qty(\int_{\kappa\in\rls} d\mu(\kappa)
      \qty|\bra{\phi}\qty(e^{i\kappa\hat{q}_{\delta}}-
      e^{i\kappa\hat{q}_{0}})\ket{\psi}|)^{2}
      \nonumber\\
    &\leq \qty(\int_{\kappa\in\rls} d\mu(\kappa)
      \qty|\qty(e^{i\kappa\hat{q}_{\delta}}-
      e^{i\kappa\hat{q}_{0}})\ket{\psi}|)^{2}
      \notag\\
    &=
      \qty(\int_{\kappa\in\rls}
      d\mu(\kappa)\sqrt{1+\kappa^{2}}\frac{1}{\sqrt{1+\kappa^{2}}}
      \qty|\qty(e^{i\kappa\hat{q}_{\delta}}-
      e^{i\kappa\hat{q}_{0}})\ket{\psi}|)^{2}
      \notag\\
    &\leq
      \int_{\kappa\in\rls}d\mu(\kappa)(1+\kappa^{2})
      \int_{\kappa\in\rls}d\mu(\kappa)\frac{1}{1+\kappa^{2}}
      \qty|\qty(e^{i\kappa\hat{q}_{\delta}}-
      e^{i\kappa\hat{q}_{0}})\ket{\psi}|^{2}
      \notag\\
    &\leq 4\delta^{2} (\tilde f_{0}+\tilde f_{2})
      \int_{\kappa\in\rls}d\mu(\kappa)\frac{1}{1+\kappa^{2}}
      \kappa^{2}
      \qty(\bra{\psi}\qty(1+\kappa^{2})\hat{\Omega} +
      \kappa^{2}\overline{\omega^{2}})
      \ket{\psi}
      \notag\\
    &\leq 4\delta^{2}(\tilde f_{0}+\tilde f_{2})
      \int_{\kappa\in\rls}d\mu(\kappa)
      \kappa^{2}\bra{\psi}(\hat{\Omega}+\overline{\omega^{2}})\ket{\psi}
      \notag\\
    &\leq 4\delta^{2}(\tilde f_{0}+\tilde f_{2}) \tilde
      f_{2}\bra{\psi}(\hat{\Omega}+\overline{\omega^{2}})\ket{\psi}.
\label{eq:propfbndedmeas}
  \end{align}
  Here we used \(\kappa^{2}/(1+\kappa^{2}) \leq 1\) to obtain the
  second to last line.  Since this inequality holds for all normalized
  states $\ket{\phi}$, the inequality of Eq.~\eqref{eq:cor:cfunq}
  follows.
\end{proof}

For a generalization of Prop.~\ref{prop:fbndedmeas}, see
App.~\ref{app:generalize_fbndedmeas}. One can use regularization to apply Prop.~\ref{prop:fbndedmeas} to
functions that are not the Fourier transform of a bounded complex measure or for which \(\tilde f_{0}\) or \(\tilde f_{2}\) are too large or infinite. The regularization error can be determined by means of the next proposition. In this proposition, \(f_{0}(x)\) is the function of interest, and \(f(x)\) is its regularization. 

\begin{proposition}\label{prop:regfexp}
  Let \(\hat{\Omega}\) and \(\overline{\omega^{2}}\) be as defined in
  Thm.~\ref{thm:mainbnd}.  Let \(f_{0}(x), f(x)\) be real-valued functions,
  and \(h(x)\) a non-negative function.  Suppose that \(f(x)\) is the
  Fourier transform of a bounded complex measure as in
  Prop.~\ref{prop:fbndedmeas} and satisfies
  \(|f_{0}(x) - f(x)|\leq h(x)\).  Then for the states \(\ket{\psi}\)
  of Prop.~\ref{prop:fbndedmeas} 
  \begin{align}
    |f(\hat{q}_{\delta})\ket{\psi} -  f_{0}(\hat{q})\ket{\psi}|^{2}
    &\leq
            \qty(2\delta\sqrt{(\tilde f_{0}+\tilde f_{2})
      \tilde f_{2}}\sqrt{\bra{\psi}(\hat{\Omega}+\overline{\omega^{2}})\ket{\psi}}
      +\sqrt{\bra{\psi} h(\hat{q})^{2}\ket{\psi}})^{2}.
      \\
    &\leq
      8\delta^{2}(\tilde f_{0}+\tilde f_{2})
      \tilde f_{2}\bra{\psi}(\hat{\Omega}+\overline{\omega^{2}})\ket{\psi}
      +2\bra{\psi} h(\hat{q})^{2}\ket{\psi}.
  \end{align}
\end{proposition}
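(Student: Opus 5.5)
The plan is a triangle inequality that splits the error into a convergence part and a regularization part, each controlled separately. Write
\begin{align}
  f(\hat{q}_{\delta})\ket{\psi} - f_{0}(\hat{q})\ket{\psi}
  &= \qty(f(\hat{q}_{\delta})\ket{\psi} - f(\hat{q})\ket{\psi})
     + \qty(f(\hat{q})\ket{\psi} - f_{0}(\hat{q})\ket{\psi}),
\end{align}
so that
\begin{align}
  |f(\hat{q}_{\delta})\ket{\psi} - f_{0}(\hat{q})\ket{\psi}|
  &\leq |f(\hat{q}_{\delta})\ket{\psi} - f(\hat{q})\ket{\psi}|
        + |(f-f_{0})(\hat{q})\ket{\psi}|.
\end{align}
The first term is handled directly by Prop.~\ref{prop:fbndedmeas}, since \(f\) satisfies its hypotheses and \(\hat{q}_{0}=\hat{q}\). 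Taking square roots of Eq.~\eqref{eq:cor:cfunq} gives the bound \(2\delta\sqrt{(\tilde f_{0}+\tilde f_{2})\tilde f_{2}}\sqrt{\bra{\psi}(\hat{\Omega}+\overline{\omega^{2}})\ket{\psi}}\).

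For the second term I would use the functional calculus of \(\hat{q}\) with its projection-valued measure \(d\Pi(x)\):
\begin{align}
  |(f-f_{0})(\hat{q})\ket{\psi}|^{2}
  &= \int_{x\in\rls} |f(x)-f_{0}(x)|^{2}\bra{\psi}d\Pi(x)\ket{\psi}.
\end{align}
Because \(\bra{\psi}d\Pi(x)\ket{\psi}\) is a positive measure and \(|f-f_{0}|\leq h\) pointwise, this integral is at most
\begin{align}
  \int_{x\in\rls} h(x)^{2}\bra{\psi}d\Pi(x)\ket{\psi} = \bra{\psi}h(\hat{q})^{2}\ket{\psi}.
\end{align}
Squaring the triangle inequality then yields the first displayed bound. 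The second bound follows from \((u+v)^{2}\leq 2u^{2}+2v^{2}\).

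The only delicate point is domain questions for unbounded \(f_{0}\) and \(h\). If \(\bra{\psi}h(\hat{q})^{2}\ket{\psi}=\infty\), the bound holds trivially. Otherwise \(\ket{\psi}\) lies in the domain of \(h(\hat{q})\). Since \(f\) is bounded (it is the Fourier transform of a bounded measure) and \(|f_{0}|\leq|f|+h\), \(\ket{\psi}\) then also lies in the domain of \(f_{0}(\hat{q})\). With that, the decomposition above is legitimate and the spectral-integral identity for \(|(f-f_{0})(\hat{q})\ket{\psi}|^{2}\) applies. The proposition's statement that \(f_{0}\) and \(f\) are real-valued is not actually needed for this argument.
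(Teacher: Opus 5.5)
Your proposal is correct and follows the paper's proof essentially step for step. Both split off \(f(\hat{q})\ket{\psi}\) with the triangle inequality and bound the first term by Prop.~\ref{prop:fbndedmeas}. Both lift the pointwise bound \(|f-f_{0}|\leq h\) through the functional calculus of \(\hat{q}\) for the second term, and both get the second inequality from \((a+b)^{2}\leq 2(a^{2}+b^{2})\). Your domain remarks go slightly beyond the paper, which does not discuss domains. You also correctly note that real-valuedness is unnecessary, since you work with \(|f-f_{0}|^{2}\) rather than the paper's operator square \((f(\hat{q})-f_{0}(\hat{q}))^{2}\).
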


\begin{proof}
  The inequality \(|f_{0}(x) - f(x)|\leq h(x)\) implies that
  \((f_{0}(\hat{q})-f(\hat{q}))^{2}\leq h(\hat{q})^{2}\). We compute
  \begin{align} 
    | \qty(f(\hat{q}_{\delta}) -  f_{0}(\hat{q}))\ket{\psi}|^{2}
    &\leq
      \qty(| \qty(f(\hat{q}_{\delta})- f(\hat{q}))\ket{\psi}|
      + | \qty(f(\hat{q}) -  f_{0}(\hat{q}))\ket{\psi}|)^{2}
      \notag\\
    &=
      \qty(| \qty(f(\hat{q}_{\delta})- f(\hat{q}))\ket{\psi}|
      + \sqrt{\bra{\psi}\qty(f(\hat{q}) -  f_{0}(\hat{q}))^{2}\ket{\psi}})^{2}
      \notag\\
    &\leq 
      \qty(| \qty(f(\hat{q}_{\delta})- f(\hat{q}))\ket{\psi}|
      + \sqrt{\bra{\psi}\qty(h(\hat{q}))^{2}\ket{\psi}})^{2}
      \notag\\
  \end{align}
  The first inequality now follows by applying
  Prop.~\ref{prop:fbndedmeas}.  For the second inequality apply
  \((a+b)^{2}\leq 2(a^{2}+b^{2})\).
\end{proof}

Props.~\ref{prop:fbndedmeas} and~\ref{prop:regfexp} can be used to
bound the difference between the expectations of
\(f(\hat{q}_{\delta})\) and \(f(\hat{q})\).  
By applying Eq.~\eqref{app:eq:a-bexp} we get
\(|\bra{\psi} f(\hat{q_{\delta}})\ket{\psi}-\bra{\psi}f(\hat{q})\ket{\psi}|^{2} \leq
|(f(\hat{q}_{\delta}) - f(\hat{q}_{0}))\ket{\psi}|^{2}\).


\section{Quantified convergence for outcome-conditional actions}
\label{sec:condoperations}

We next consider unitary operations \(\hat{U}_{B}(x)\) applied
conditionally to another quantum system $B$ based on measurement
outcomes \(x\) from a BBP homodyne measurement of \(\hat{q}\).  To
simplify the notation and avoid confusion with tensor product
orderings, we omit the tensor product symbol $\otimes$ and use the
system labels $\bm{a}$, \(\bm{b}\), $B$ and $M$ when needed. We use
\(A\) for the joint system consisting of modes \(\bm{a}\) and
\(\bm{b}\).  The ideal conditional operation on the joint system of
the signal modes, the LO modes and system \(B\) may be expressed as
\begin{align}
\hat{\cC}_{0}&=\int_{x\in\rls} d\Pi(x) \hat{U}_{B}(x).
\end{align}
In this section we compare the ideal conditional operation
to the operation that applies the unitaries conditional on a BBP measurement outcome modeled as described in Sect.~\ref{sec:functana}.
We do this in two steps. In the first step we compare \(\hat{\cC}_{0}\) to the conditional operation
\(\hat{\cC}_{\delta}\) that applies the unitaries conditional on \(\hat{q}_{\delta}\) instead of \(\hat{q}\)
\begin{align}
\hat{\cC}_{\delta}&=\int_{x\in\rls} d\Pi_{\delta}(x) \hat{U}_{B}(x).
\end{align}
In the second step we compare \(\hat{\cC}_{\delta}\) to the operation that first implements the unitary measurement process \(\hat{M}_{\delta}\) for \(\hat{q}_{\delta}\), then conditions the unitaries on the apparatus \(M\) instead of the signal modes. The conditional unitary
after the unitary measurement process is
\begin{align}
  \hat{\cC}_{M} &=
                         \int_{z\in\rls}d\Pi_{M}(z)\hat{U}_{B}(z),
\end{align}
where \(d\Pi_{M}(z)\) is the projection valued measure associated with the apparatus outcomes. We assume that the unitary measurement process \(\hat{M}_{\delta}\) is of the form given in Eq.~\eqref{eq:udeltaform} with the action given in Eq.~\eqref{eq:udelta}, so that \(\hat{M}_{\delta}\) commutes with \(\hat{\cC}_{\delta}\).

In applications such as quantum teleportation, the measured modes and the measurement outcomes are discarded after the conditional operations have been performed. In any case, decoherence in the outcome basis of the apparatus does not affect the performance of the operation. If decoherence in the control basis of the measured system also does not matter, then we can improve the comparison by applying
additional unitaries diagonal in the outcome and control bases to the apparatus and the measured modes after the conditional unitaries. 
Here we need such additional unitaries only for the case where the conditional unitaries involve non-commuting displacements. The additional unitaries are of the form
\begin{align}
  \hat{W}_{AM} &=
                 \int_{z\in\rls,x\in\rls} d\Pi_{M}(z) d\Pi_{\delta}(x) e^{iw(z,x)},
\end{align}
for a real-valued function \(w(z,x)\)

Let \(\ket{\psi}\) be the joint initial state of \(\bm{a}\), \(B\) and the LO modes \(\bm{b}\), where \(\ket{\psi}\) is vacuum on the LO modes. We define the following states:
\begin{align}
  \ket{\psi_{c,\delta}}
  &=
    \hat{\cC}_{\delta}\ket{\psi}\ket{g}_{M},
    \notag\\
  \ket{\psi_{m,\delta}}
  &=
    \hat{M}_{\delta}\hat{\cC}_{\delta}\ket{\psi}\ket{g}_{M},
    \notag\\
  \ket{\tilde\psi_{m,\delta}}
  &=
    \hat{W}_{AM}\hat{\cC}_{M}\hat{M}_{\delta}\ket{\psi}\ket{g}_{M}.
    \label{eq:condstatedef}
\end{align}
Here, \(\ket{\psi_{c,0}}\) is the state obtained after applying the
ideal conditional unitary \(\hat{\cC}_{0}\) without any measurement
processes. The state after implementing the conditional unitary
conditional on BPP measurement outcomes with the additional unitary is
\(\ket{\tilde\psi_{m,\delta}}\). The goal is to compare the two,
taking into account that decoherence of the measured system and
apparatus do not affect the performance.  We first bound the distance
from \(\ket{\psi_{c,\delta}}\) to \(\ket{\psi_{c,0}}\) and then the
distance from \(\ket{\psi_{m,\delta}}\) to
\(\ket{\tilde\psi_{m,\delta}}\). The distance between
\(\ket{\psi_{c,\delta}}\) and \(\ket{\psi_{c,0}}\) is the same as the
distance between \(\hat{M}_{\delta}\ket{\psi_{c,\delta}}\) and
\(\hat{M}_{\delta}\ket{\psi_{c,0}}\). We can use Thm.~\ref{thm:qmeas}
to bound the distance between \(\hat{M}_{0}\ket{\psi_{c,0}}\) and
\(\hat{M}_{\delta}\ket{\psi_{c,0}}\).  The state
\(\hat{M}_{0}\ket{\psi_{c,0}}\) is what would have been obtained after
the ideal conditional unitary with a subsequent unitary measurement
process for \(\hat{q}\). The bounds obtained give distance bounds for every
step in the chain
\begin{align}
  \hat{M}_{0}\ket{\psi_{c,0}} \rightarrow \hat{M}_{\delta}\ket{\psi_{c,0}} \rightarrow \hat{M}_{\delta}\ket{\psi_{c,\delta}}
  =\ket{\psi_{m,\delta}}  \rightarrow \ket{\tilde{\psi}_{m,\delta}}.
  \label{eq:condchain}
\end{align}
By adding the bounds along the chain, we obtain a bound on the
distance from \(\hat{M}_{0}\ket{\psi_{c,0}}\) to
\(\ket{\tilde{\psi}_{m,\delta}}\), which yields a lower bound on the fidelity between the two states. By monotonicity of fidelity, for implementations of measurement-conditional unitaries with outcome and control decoherence, the overall fidelity is at least as large.
Circuit diagrams for the states in each step of the chain are shown in Fig.~\ref{fig:cchain}.

\begin{figure}
  \begin{tabular}{l}
    \includegraphics[scale = 0.7]{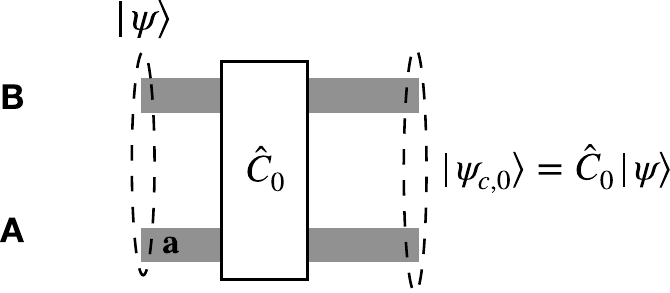}\\
    \includegraphics[scale = 0.7]{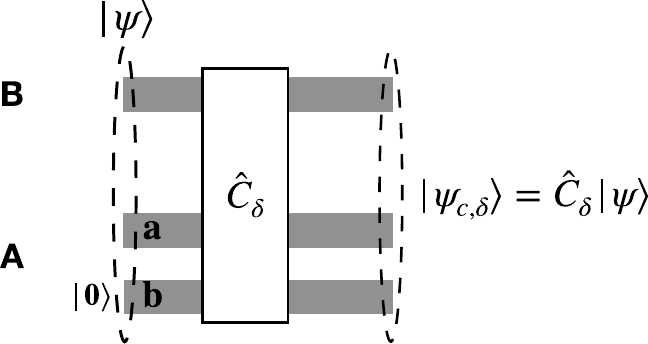}\\
    \includegraphics[scale = 0.7]{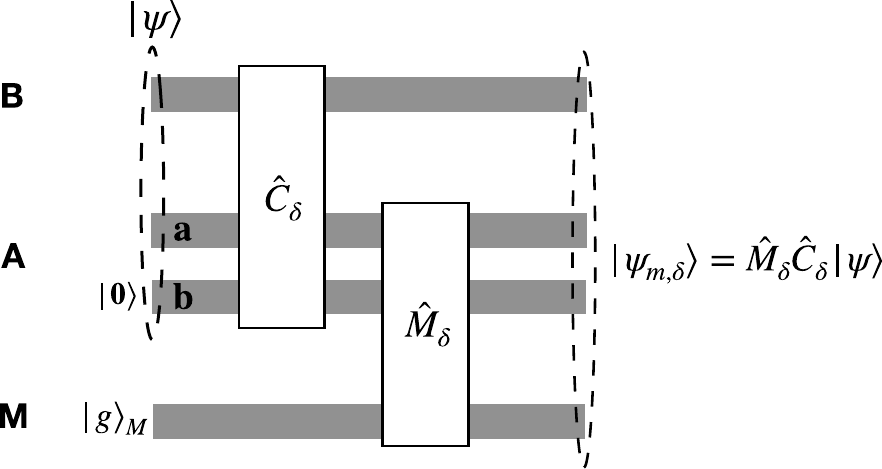}\\
    \includegraphics[scale = 0.7]{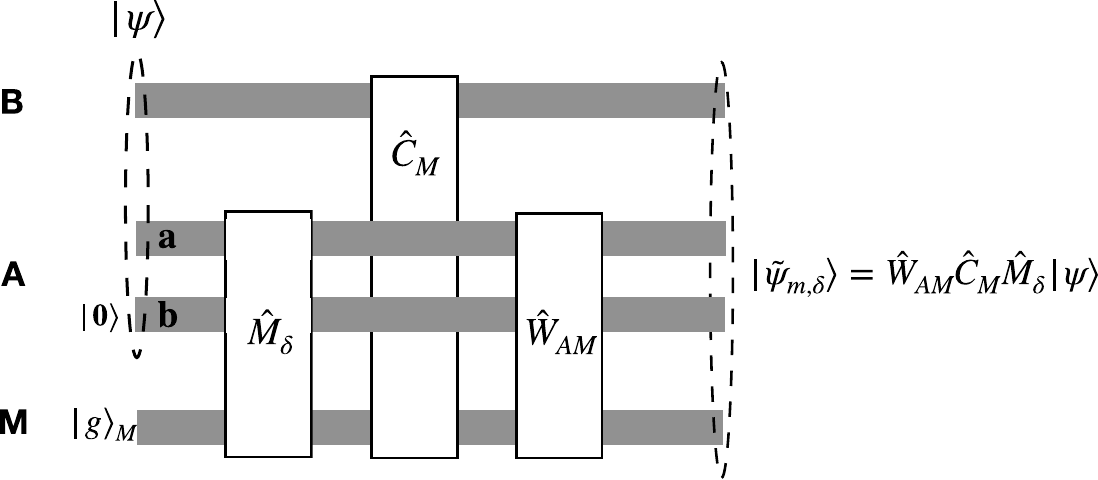}
  \end{tabular}
  \caption{Circuit diagrams for the main states in the distance bounding chain
  of Eq.~\eqref{eq:condchain}. }
  \label{fig:cchain}
\end{figure}

To bound the distance between \(\ket{\psi_{c,\delta}}\) to
\(\ket{\psi_{c,0}}\), we assume that the \(\hat{U}_{B}(x)\) mutually commute for different values of \(x\). We first prove a ``master'' proposition that generalizes
Prop.~\ref{prop:fbndedmeas}. We then apply this proposition to the case where the conditional unitaries are displacements.

\begin{proposition}
  \label{prop:condphase}
  Let \(d\Pi_{B}(y)\) be a projection-valued measure of system \(B\) with \(y\in\cY\). 
  Let \(f(x,y)\) be a function on \(\rls \times \cY\) and define
  \begin{align}
    \hat{f}_{\delta} & = \int_{x\in\rls,y\in\cY}f(x,y)d\Pi_{\delta}(x)d\Pi_{B}(y)
                       = \int_{y\in\cY}f(\hat{q}_{\delta},y)d\Pi_{B}(y).
  \end{align}
  Suppose that for each \(y\), the function \(x\mapsto f(x,y)\) is the
  Fourier transform of a bounded complex measure
  \(d\tilde\nu_{y}(\omega) = e^{i\phi(\omega)}d\nu_{y}(\omega)\). Let
  \(\tilde f_{l}(y)\) be functions on \(\cY\) satisfying
  \(\tilde f_{l}(y) \geq \int |\omega|^{l}d\nu_{y}(\omega)\) for all
  \(y\). Let \(\hat{\Omega}\) and \(\overline{\omega^{2}}\) be as
  defined in Thm.~\ref{thm:mainbnd}. Define the positive operator
  \begin{align}
    \hat{Q} &= \int_{y\in\cY} d\Pi_{B}(y)\qty(\tilde f_{0}(y)+\tilde f_{2}(y))\tilde f_{2}(y)
              \qty(\hat\Omega +
              \overline{\omega^{2}}).
              \label{eq:prop:condphase1}
  \end{align}
  Then for all states \(\ket{\psi}\) of the signal and LO modes,
  system \(B\) and other relevant systems, where \(\ket{\psi}\) is
  vacuum on the LO modes, we have
  \begin{align}
    \qty| \hat{f}_{\delta}\ket{\psi}-\hat{f}_{0}\ket{\psi}|^{2}
    &\leq 4\delta^{2}\bra{\psi}\hat{Q}\ket{\psi}.
  \end{align}
\end{proposition}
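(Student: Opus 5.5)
The plan is to reduce to Prop.~\ref{prop:fbndedmeas} by conditioning on the spectral decomposition of system \(B\) with respect to \(d\Pi_{B}(y)\), exactly as in the proof of Eq.~\eqref{thm:eq:mainhats} from Eq.~\eqref{thm:eq:mains}.

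First expand \(\ket{\psi}\) as a direct integral \(\int_{y\in\cY}d\Pi_{B}(y)\ket{\psi}=\int^{\oplus}_{y\in\cY}d\mu(y)\ket{\psi_{y}}\). Here each \(\ket{\psi_{y}}\) is a (normalized) state of the signal modes, the LO modes and the remaining systems. It is still vacuum on the LO modes, because \(d\Pi_{B}\) acts only on \(B\) and therefore commutes with the LO number operators. The operators \(\hat{q}_{\delta}\), \(\hat{q}\), \(\hat\Omega\) act trivially on \(B\), so they commute with \(d\Pi_{B}(y)\). Consequently the operator \(\hat{A}=\hat{f}_{\delta}-\hat{f}_{0}\) acts on the fiber over \(y\) as \(f(\hat{q}_{\delta},y)-f(\hat{q},y)\), and \(\hat{A}^{\dagger}\hat{A}\) is likewise decomposable. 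This gives
\begin{align}
  \qty|\hat{f}_{\delta}\ket{\psi}-\hat{f}_{0}\ket{\psi}|^{2}
  &=\int_{y\in\cY}d\mu(y)\,\qty|\qty(f(\hat{q}_{\delta},y)-f(\hat{q},y))\ket{\psi_{y}}|^{2}.
\end{align}

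Next, for fixed \(y\), the function \(x\mapsto f(x,y)\) is the Fourier transform of \(d\tilde\nu_{y}\). We therefore apply Prop.~\ref{prop:fbndedmeas} to \(\ket{\psi_{y}}\). Its proof chain uses the moments \(\int d\nu_{y}\) and \(\int \kappa^{2}d\nu_{y}\) only monotonically, through the Cauchy--Schwarz step and the final step. Hence they may be replaced by the upper bounds \(\tilde f_{0}(y)\) and \(\tilde f_{2}(y)\), which yields
\begin{align}
  \qty|\qty(f(\hat{q}_{\delta},y)-f(\hat{q},y))\ket{\psi_{y}}|^{2}
  \leq 4\delta^{2}\qty(\tilde f_{0}(y)+\tilde f_{2}(y))\tilde f_{2}(y)\bra{\psi_{y}}(\hat\Omega+\overline{\omega^{2}})\ket{\psi_{y}}.
\end{align}
Integrating against \(d\mu(y)\) and reassembling the direct integral, using again that \(\hat\Omega\) commutes with \(d\Pi_{B}\), gives \(4\delta^{2}\bra{\psi}\hat{Q}\ket{\psi}\) with \(\hat{Q}\) as in Eq.~\eqref{eq:prop:condphase1}.

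The main obstacle is measure-theoretic rather than computational. One must justify that \(\hat f_{\delta}\) is well defined via the joint spectral measure \(d\Pi_{\delta}(x)d\Pi_{B}(y)\), which requires \(f\) to be jointly measurable and bounded, with \(|f(x,y)|\leq\tilde f_{0}(y)\). One must also justify that the fiberwise identity holds, that is, that \(\int_{y}f(\hat{q}_{\delta},y)d\Pi_{B}(y)\) restricted to a fiber equals \(f(\hat{q}_{\delta},y)\). I would handle this as in the proof of Thm.~\ref{thm:mainbnd}, using the commutation of the two projection-valued measures and the direct-integral facts of App.~\ref{app:pvm}. If \(\tilde f_{l}(y)\) is unbounded, the right-hand side is interpreted as \(+\infty\) and the bound holds trivially.
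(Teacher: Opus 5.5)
Your proposal is correct and matches the paper's proof: decompose \(\ket{\psi}\) in the direct-integral representation of \(d\Pi_{B}(y)\) with \(d\mu(y)=\bra{\psi}d\Pi_{B}(y)\ket{\psi}\), use that \(\hat{f}_{\delta}\) acts on each fiber as \(f(\hat{q}_{\delta},y)\), apply Prop.~\ref{prop:fbndedmeas} fiberwise with the moments replaced by the upper bounds \(\tilde f_{l}(y)\), and integrate back to \(4\delta^{2}\bra{\psi}\hat{Q}\ket{\psi}\). One small quibble is that your closing remark is wrong as stated, since an unbounded \(\tilde f_{l}(y)\) does not by itself make \(\bra{\psi}\hat{Q}\ket{\psi}\) infinite (that depends on how \(\ket{\psi}\) is distributed over \(y\)); nothing in the argument relies on this remark.
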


\begin{proof}
  We have
  \begin{align}
    \hat{f}_{\delta}\ket{\psi}-\hat{f}_{0}\ket{\psi}
    &=\int d\Pi_{B}(y)\qty(f(\hat{q}_{\delta},y) - f(\hat{q},y))\ket{\psi}.
  \end{align}
  Let \(d\mu(y)\) be the probability measure defined by
  \(d\mu(y)=\bra{\psi}d\Pi_{B}(y)\ket{\psi}\). Let \(\ket{\psi(y)}\)
  be a square-integrable family of states with respect to \(d\mu(y)\)
  that represents \(\ket{\psi}\) in the direct-integral representation
  for \(d\Pi_{B}(y)\). The Hilbert space \(\cH(y)\) of
  \(\ket{\psi(y)}\) factors naturally into the Hilbert space of
  systems other than \(B\) and a \(y\)-dependent Hilbert space
  associated with the direct integral representation of
  \(d\Pi_{B}(y)\) on \(B\). With this factorization, in the direct
  integral representation, \(\hat{f}_{\delta}\) acts as
  \(f(\hat{q}_{\delta},y)\) on \(\ket{\psi(y)}\).  We apply
  Prop.~\ref{prop:fbndedmeas} for each \(y\in\cY\) to get
  \begin{align}
    \qty|\hat{f}_{\delta}\ket{\psi}-\hat{f}_{0}\ket{\psi}|^{2}
    &=
      \int_{y\in\cY} \bra{\psi}d\Pi_{B}(y)(f(\hat{q}_{\delta},y)-f(\hat{q},y))^{\dagger}
      (f(\hat{q}_{\delta},y)-f(\hat{q},y))\ket{\psi}
      \notag\\
    &=
      \int_{y\in\cY}  d\mu(y) \bra{\psi(y)}(f(\hat{q}_{\delta},y)-f(\hat{q},y))^{\dagger}
      (f(\hat{q}_{\delta},y)-f(\hat{q},y))\ket{\psi(y)}
      \notag\\
    &=
      \int_{y\in\cY} d\mu(y)\qty|\vphantom{A^{2}}f(\hat{q}_{\delta},y)\ket{\psi(y)}-f(\hat{q},y)\ket{\psi(y)}|^{2}
      \notag\\
    &\leq
      4\delta^{2}\int_{y\in\cY} d\mu(y)\bra{\psi(y)}\qty(\tilde f_{0}(y) + \tilde f_{2}(y))\tilde f_{2}(y)\qty(\hat\Omega +
      \overline{\omega^{2}})\ket{\psi(y)}
      \notag\\
    &= 4\delta^{2}
      \int_{y\in\cY}\bra{\psi}d\Pi_B(y)
      \qty(\tilde f_{0}(y) + \tilde f_{2}(y))\tilde f_{2}(y)\qty(\hat\Omega +
      \overline{\omega^{2}})\ket{\psi}
      \notag\\
    & = 4\delta^{2}\bra{\psi}\hat{Q}\ket{\psi}.
  \end{align}
\end{proof}

The following proposition provides one way to apply
Prop.~\ref{prop:condphase} to the case of conditional
displacements.  The proposition is formulated for an arbitrary self-adjoint
operator on system \(B\). We use the notation \(f^{(l)}(x)\) for the \(l\)'th
derivative of \(f(x)\).

\begin{proposition}
  \label{prop:conddisp}
  Let \(\hat{p}_{B}\) be a self-adjoint operator of \(B\), and consider conditional unitaries given by
  \(\hat{U}_{B}(x)=e^{iv(x)\hat{p}_{B}}\) for a real-valued
  function \(v(x)\). Let \(\xi\in\rls\) and define
  \(w(x)=v(x)-\xi x\). Suppose that \(w(x)\) is twice differentiable and the Fourier transform
  of the integrable function \(\tilde w(\kappa)\).
  In addition, assume that \(w_{l}=\int dx |w^{(l)}(x)|\) and
  \(\tilde w_{l}= \int d\kappa |\kappa|^{l}|\tilde w(\kappa)|\) are finite for \(l=0,1,2\). 
  Define
  \begin{align}
    \hat{P}_{B} &= \qty(\frac{1}{\sqrt{\pi}}\qty(\sqrt{w_{0}\tilde w_{0}}
                  +|\hat{p}_{B}|\sqrt{w_{1}\tilde w_{1}})+1),
                  \notag\\
    \hat{Q}_{B} &=
                  |\hat{p}_{B}|\qty(\tilde w_{2} + |\hat{p}_{B}|
                  \qty(\tilde w_{1}^{2} + 2\xi^{2})).
  \end{align}
  Then with \(\ket{\psi_{c,\delta}}\) as defined in Eq.~\eqref{eq:condstatedef},
  \begin{align}
        \qty| \ket{\psi_{c,\delta}}-\ket{\psi_{c,0}}|^{2}
    &\leq 4\delta^{2}
      \bra{\psi}      \qty(1+\hat{Q}_{B})\hat{Q}_{B}
      \hat{P}_{B}^{2}\qty(\hat{\Omega}+\overline{\omega^{2}})
      \ket{\psi}.
  \end{align}
\end{proposition}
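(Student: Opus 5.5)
The plan is to reduce to Prop.~\ref{prop:condphase}. We take \(d\Pi_{B}(y)\) to be the projection-valued measure of \(\hat{p}_{B}\), with \(y\in\rls\), and set \(f(x,y)=e^{iv(x)y}\). Then \(\hat{f}_{\delta}=\hat{\cC}_{\delta}\), and \(\ket{\psi_{c,\delta}}-\ket{\psi_{c,0}}=(\hat{f}_{\delta}-\hat{f}_{0})\ket{\psi}\ket{g}_{M}\). The apparatus is a spectator, so it can be absorbed into the ``other relevant systems''. It then suffices to show, for each fixed \(y\), that \(x\mapsto e^{iv(x)y}\) is the Fourier transform of a bounded complex measure, and to find moment bounds \(\tilde f_{0}(y)\) and \(\tilde f_{2}(y)\) with
\begin{align}
  (\tilde f_{0}(y)+\tilde f_{2}(y))\tilde f_{2}(y) \leq (1+Q(y))Q(y)P(y)^{2}.
\end{align}
Here \(P(y)\) and \(Q(y)\) denote the scalar functions obtained from \(\hat{P}_{B}\) and \(\hat{Q}_{B}\) by replacing \(\hat{p}_{B}\) with \(y\). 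Integrating against \(d\Pi_{B}(y)\) then gives the claimed operator bound.

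We write \(e^{iv(x)y}=e^{i\xi yx}e^{iyw(x)}\). The factor \(e^{i\xi y x}\) shifts the Fourier measure by \(\xi y\). This shift leaves the total mass unchanged, and \(|\kappa+\xi y|^{2}\leq 2\kappa^{2}+2\xi^{2}y^{2}\) handles the second moment; this is the source of the \(2\xi^{2}\) term in \(\hat{Q}_{B}\). For the remaining factor we decompose \(e^{iyw(x)}=1+g_{y}(x)\) with \(g_{y}=e^{iyw}-1\). The constant \(1\) contributes a point mass at \(0\), which gives the ``\(+1\)'' in \(\hat{P}_{B}\). Thus we need to show \(\|\hat g_{y}\|_{L^{1}}\leq \frac{1}{\sqrt{\pi}}(\sqrt{w_{0}\tilde w_{0}}+|y|\sqrt{w_{1}\tilde w_{1}})\).

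I would obtain this from a Carlson/Sobolev-type estimate. Split \(\int|\hat g|\,d\kappa\) at a scale \(K\), apply Cauchy--Schwarz on each piece (weight \(1\) inside, \(|\kappa|^{-2}\) outside), and use Plancherel for \(\|g\|_{2}\) and \(\|g'\|_{2}\). Then bound \(\|g_{y}\|_{2}^{2}\leq\|g_{y}\|_{\infty}\|g_{y}\|_{1}\) via \(|g_{y}|\leq|y||w|\leq |y|\tilde w_{0}\), and \(\|g_{y}'\|_{2}^{2}\leq |y|^{2}\|w'\|_{\infty}w_{1}\) with \(\|w'\|_{\infty}\leq\tilde w_{1}\). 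Here we use that the sup norm of a function is bounded by the \(L^{1}\) norm of its Fourier transform. Getting exactly the stated combination, including the factor \(1/\sqrt{\pi}\) and the distribution of \(|y|\) between the two terms, will require care with the Fourier convention of App.~\ref{app:pvm} and with the choice of splitting. I expect this constant-chasing to be the main obstacle. If the crude optimization does not reproduce it, I would use \(|g_{y}|\leq\min(2,|y||w|)\) and optimize \(K\) separately.

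For the second moment we use that \(\kappa^{2}\) times the Fourier measure of \(e^{iyw}\) is (up to sign) the Fourier transform of the second derivative \(e^{iyw}\qty(iyw''-y^{2}(w')^{2})\). The Fourier transform of a product is a convolution of measures, so its total mass is bounded by the product of the masses. Hence the second moment is at most
\begin{align}
  P(y)\qty(|y|\tilde w_{2}+y^{2}\tilde w_{1}^{2}),
\end{align}
using that \(w''\) has Fourier transform \(-\kappa^{2}\tilde w\) and \(w'\) has \(i\kappa\tilde w\). Combining this with the shift estimate gives \(\tilde f_{2}(y)\leq 2P(y)\,Q(y)\) up to the bookkeeping of the factors \(2\), and \(\tilde f_{0}(y)\leq P(y)\), where \(P(y)\geq1\). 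Then \((\tilde f_{0}+\tilde f_{2})\tilde f_{2}\) is bounded by a constant times \(P^{2}(1+Q)Q\), and I would adjust the split so that the constant is absorbed into the stated form. Substituting into Prop.~\ref{prop:condphase} gives the proposition.
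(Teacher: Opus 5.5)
Your route is the paper's route. You reduce to Prop.~\ref{prop:condphase} with $f(x,y)=e^{iv(x)y}$, treat $e^{i\xi xy}$ as a shift of the Fourier measure by $\xi y$, and split off the point mass coming from $e^{iyw}=1+g_{y}$. You then bound $\|\hat g_{y}\|_{1}$ by weighted Cauchy--Schwarz plus Plancherel (the paper uses the weight $1/(1+|\kappa|)$, you a cut at $K$), and bound $\int\kappa^{2}|\hat g_{y}|\,d\kappa$ through $g_{y}''=(iyw''-y^{2}w'^{2})(1+g_{y})$ and the $L^{1}$ convolution algebra.

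The gap is your last step, ``adjust the split so that the constant is absorbed into the stated form''. The split can neither produce the printed $\hat{P}_{B}$ nor remove the factor $2$ in $\tilde f_{2}$.

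\textbf{The split cannot give the printed $\hat{P}_{B}$.} Your inputs are $\|g_{y}\|_{2}\le|y|a$ and $\|g_{y}'\|_{2}\le|y|b$, with $a=\sqrt{w_{0}\tilde w_{0}}$ and $b=\sqrt{w_{1}\tilde w_{1}}$. The cut then gives $\|\hat g_{y}\|_{1}\le\frac{|y|}{\sqrt{\pi}}(\sqrt{K}\,a+b/\sqrt{K})$, at best $\frac{2|y|}{\sqrt{\pi}}\sqrt{ab}$. This quantity is invariant under $w(x)\mapsto w(\lambda x)$. The target $\frac{1}{\sqrt{\pi}}(a+|y|b)$ instead becomes $\frac{1}{\sqrt{\pi}}(a\lambda^{-1/2}+|y|b\lambda^{1/2})$. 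At $\lambda=a/(|y|b)$ this equals $\frac{2}{\sqrt{\pi}}\sqrt{|y|ab}$, which is below your bound for every $|y|>1$. Your fallback $\|g_{y}\|_{2}^{2}\le 2|y|w_{0}$, from $|g_{y}|\le\min(2,|y||w|)$, loses to the same dilation once $|y|>\tilde w_{0}/2$.

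\textbf{The split cannot remove the factor $2$.} The inequality $|\kappa+\xi y|^{2}\le 2\kappa^{2}+2\xi^{2}y^{2}$ gives $\tilde f_{2}\le 2E\,|y|\bigl(\tilde w_{2}+|y|(\tilde w_{1}^{2}+\xi^{2})\bigr)$, where $E(y)$ is your estimate of $1+\|\hat g_{y}\|_{1}$. The cut enters only through $E$, and $E\to 1$ as $y\to 0$. Near $y=0$ the target $(1+Q)QP^{2}$ is about $Q(1+\sqrt{w_{0}\tilde w_{0}/\pi})^{2}$. So for $\xi=0$ and $w_{0}\tilde w_{0}<(\sqrt{2}-1)^{2}\pi$ the extra $2$ survives.

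\textbf{The paper's proof has the same two defects.} It derives $\|h_{y}\|_{2}\le|y|\sqrt{w_{0}\tilde w_{0}}$ (your $g_{y}$) and then drops this $|y|$ in the $L^{1}$ estimate. Also, the factor $2$ in front of $\int d\nu_{y}(\kappa+\xi y)|\kappa|^{2}$ in Eq.~\eqref{eq:disp1} is missing in Eq.~\eqref{eq:disp2}. So your unease about the constants is justified, and neither argument establishes the bound with $\hat{P}_{B}$ and $\hat{Q}_{B}$ as printed.

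\textbf{What this route does prove.} It gives the stated inequality with two replacements:
\[
\hat{P}'_{B}=1+\frac{|\hat{p}_{B}|}{\sqrt{\pi}}\bigl(\sqrt{w_{0}\tilde w_{0}}+\sqrt{w_{1}\tilde w_{1}}\bigr),
\qquad
\hat{Q}'_{B}=|\hat{p}_{B}|\bigl(\tilde w_{2}+|\hat{p}_{B}|(\tilde w_{1}+|\xi|)^{2}\bigr).
\]
With your optimized cut, $\hat{P}'_{B}$ improves to $1+\frac{2|\hat{p}_{B}|}{\sqrt{\pi}}(w_{0}\tilde w_{0}w_{1}\tilde w_{1})^{1/4}$.

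To get $\hat{Q}'_{B}$, expand $\int|\hat g_{y}(\kappa)|(\kappa+\xi y)^{2}\,d\kappa$ exactly instead of using $2\kappa^{2}+2\xi^{2}y^{2}$. Bound the cross term with $\|\kappa\hat g_{y}\|_{1}\le|y|\tilde w_{1}(1+\|\hat g_{y}\|_{1})$. This follows from $g_{y}'=iyw'(1+g_{y})$, exactly as your second-moment bound does. The result is $\tilde f_{2}\le P'Q'$ with no stray factor, hence $(\tilde f_{0}+\tilde f_{2})\tilde f_{2}\le(1+Q')Q'P'^{2}$. That is the statement to aim for.

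In the teleportation application ($w=0$) nothing is lost. The measure is the point mass at $\xi y$, so $\tilde f_{0}=1$ and $\tilde f_{2}=\xi^{2}y^{2}$, and the printed bound holds.
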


\begin{proof}
  The constraints on \(w(x)\) and \(\tilde w(\kappa)\) are in terms of
  \(L^{1}\) norms of derivatives of \(w(x)\) and their Fourier
  transforms.  The \(L^{1}\) norms imply bounds on the sup-norm
  according to
  \begin{align}
    \|w^{(l)}(x)\|_{\infty} &\leq \tilde w_{l}.
  \end{align}
  These bounds imply the following bounds on the \(L^{2}\)-norms of
  \(w^{(l)}(x)\) and \(\kappa^{l}\tilde w(\kappa)\):
  \begin{align}
    2\pi\|\kappa^{l}\tilde w(\kappa)\|_{2}^{2}
    &= \|w^{(l)}(x)\|_{2}^{2}
      \notag\\
    &= \int_{x\in\rls}dx |w^{(l)}(x)|^{2}
      \notag\\
    &\leq \|w^{(l)}(x)\|_{\infty}\int_{x}dx |w^{(l)}(x)|
      \notag\\
    &\leq \tilde w_{l} w_{l}.
  \end{align}
  The factor of \(2\pi\) compensates
  for our use of non-unitary Fourier transforms.
  
  Let \(d\Pi_{B}(y)\) be the projection-valued measure of
  \(\hat{p}_{B}\), so that
  \(d\Pi_{B}(y)\hat{U}_{B}(x) = e^{iv(x)y}d\Pi_{B}(y)\). If we set
  \(f(x,y)=e^{iv(x)y}\) in Prop.~\ref{prop:condphase}, then
  \(\hat{f}_{\delta}=\hat{C}_{\delta}\).
  Let
  \(k_{y}(x)=f(x,y)\).  To apply the proposition, for each \(y\) we need
  \(k_{y}(x)\) to be the Fourier transform of a bounded complex
  measure \(d\tilde\nu_{y}(\kappa)=e^{i\theta(\kappa)}d\nu_{y}(\kappa)\),
  and it is necessary to determine bounds on
  \(\tilde f_{l}(y) = \int_{\kappa\in\rls}d\nu_{y}(\kappa)|\kappa^{l}|\) for
  \(l=0, 2\).

  We have that \(k_{y}(x)\) is the Fourier transform of
  \(d\tilde\nu_{y}(\kappa)\) iff
  \(g_{y}(x) = k_{y}(x)e^{-i\xi xy}= e^{iw(x)y}\) is the Fourier
  transform of the displaced measure \(d\tilde\nu_{y}(\kappa +\xi y)\).
  The needed bounds can be determined from those for the displaced measure.
  We have
  \begin{align}
    \int_{\kappa\in\rls} d\nu_{y}(\kappa)
    &= \int_{\kappa\in\rls}d\nu_{y}(\kappa + \xi y),
      \notag\\
    \int_{\kappa\in\rls} d\nu_{y}(\kappa)|\kappa|^{2}
    &=
      \int_{\kappa\in\rls} d\nu_{y}(\kappa+\xi y) |\kappa+\xi y|^{2}
      \notag\\
    &\leq       \int_{\kappa\in\rls} d\nu_{y}(\kappa+\xi y) 2\qty(|\kappa|^{2}+|\xi y|^{2})
      \notag\\
    &= 2\int_{\kappa\in\rls} d\nu_{y}(\kappa+\xi y) |\kappa|^{2}
      + 2|\xi y|^{2}\int_{\kappa\in\rls} d\nu_{y}(\kappa).
      \label{eq:disp1}
  \end{align}
  With these inequalities, we can obtain bounds for \(k_{y}(x)\) from
  bounds for \(g_{y}(x)\).

  Define \(h_{y}(x)=g_{y}(x)-1\). Then \(g_{y}(x)\) is the Fourier
  transform of a bounded complex measure iff \(h_{y}(x)\) is such a
  Fourier transform, and the two measures differ by the point measure
  \(\delta(\kappa)\) at \(0\).  We first show that \(h_{y}(x)\) is the
  Fourier transform of a function \(\tilde h_{y}(\kappa)\) in
  \(L^{1}(\rls)\) and obtain a bound on
  \(\|\tilde h_{y}(\kappa)\|_{1}\).  We claim that \(\tilde h_{y}(\kappa)\)
  is a well-defined function in \(L^{2}(\rls)\).  This
  follows from
  \begin{align}
    |h_{y}(x)|^{2}
    &= 2(1-\Re\qty(e^{iw(x)y}))
      \notag\\
    &\leq y^{2}w(x)^{2},
  \end{align}
  so that \(\| h_{y}(x)\|_{2} \leq |y|\sqrt{w_{0}\tilde w_{0}}\).
  Therefore we can write
  \(h_{y}(x)=\int_{\kappa\in\rls} d\kappa e^{i\kappa x}\tilde h_{y}(\kappa)\) for a
  function \(\tilde h_{y}(\kappa)\in L^{2}(\rls)\). To obtain a bound
  on this function's \(L^{1}\)-norm, we use second differentiability
  of \(h_{y}(x)\), where the derivatives are in
  \(L^{2}(\rls)\). The derivatives are given by
  \begin{align}
    h^{(1)}_{y}(x) &= iyw^{(1)}(x)e^{iw(x)y},
                \notag\\
    h^{(2)}_{y}(x) &= \qty(iyw^{(2)}(x)- y^{2}w^{(1)}(x)^{2})e^{iw(x)y}.
  \end{align}
  Consequently, the \(L^{2}\) norms of the derivatives are
  \(\|h^{(1)}_{y}(x)\|_{2} \leq |y| \sqrt{w_{1}\tilde w_{1}}\) and
  \(\|h^{(2)}_{y}(x)\|_{2}\leq |y| \sqrt{w_{2}\tilde w_{2}}+
  y^{2}\tilde w_{1}\sqrt{w_{1}\tilde w_{1}}\).  For the latter we used
  \(\|w^{(1)}(x)^{2}\|_{2}\leq\|w^{(1)}(x)\|_{\infty}\|w^{(1)}(x)\|_{2}\).
  The derivatives \(h^{(1)}_{y}(x)\) and \(h^{(2)}_{y}(x)\) are the Fourier
  transforms of the \(L^{2}(\rls)\) functions
  \(i\kappa \tilde h_{y}(\kappa)\) and
  \(-\kappa^{2}\tilde h_{y}(\kappa)\).  We apply the Cauchy-Schwarz
  inequality to bound the \(L^{1}\) norm of the Fourier transform of
  \(h_{y}(x)\):
  \begin{align}
    \int_{\kappa\in\rls} d\kappa |\tilde h_{y}(\kappa)|
    & =  \int_{\kappa\in\rls} d\kappa \frac{1}{1+|\kappa|} (1+|\kappa|)|\tilde h_{y}(\kappa)|
      \notag\\
    &\leq \left\|\frac{1}{1+|\kappa|}\right\|_{2}
      \left\|(1+|\kappa|)|\tilde h_{y}(\kappa)|\right\|_{2}
      \notag\\
    &\leq\frac{1}{\sqrt{2\pi}}\left\|\frac{1}{1+|\kappa|}\right\|_{2}
      \qty(\left\|h_{y}(x)\|_{2}
      +
      \|h^{(1)}_{y}(x)\right\|_{2})
      \notag\\
    &\leq \frac{1}{\sqrt{\pi}}(\sqrt{w_{0}\tilde w_{0}}+|y|\sqrt{w_{1}\tilde w_{1}}),
  \end{align}
  where we used the identity
  \(\left\|\frac{1}{1+|\kappa|}\right\|_{2}=\int_{-\infty}^{\infty}d\kappa/(1+|\kappa|)^{2} =
  2\int_{0}^{\infty}d\kappa/(1+\kappa)^{2}=-2
  /(1+\kappa)|_{0}^{\infty}=2\). 

  We also need a bound on the \(L^{1}\)-norm of the inverse Fourier
  transform \(-\kappa^{2}\tilde h_{y}(\kappa)\) of \(h^{(2)}_{y}(x)\).
  Since \(e^{iw(x)y}= h_{y}(x)+1\), we have
  \begin{align}
    h^{(2)}_{y}(x) &= \qty(iyw^{(2)}(x)- y^{2}w^{(1)}(x)^{2})
                     (h_{y}(x)+1)
                     \label{eq:bndh2y}
  \end{align}
  We apply the facts that \(L^{1}(\rls)\) with convolution is a Banach
  algebra, and the Fourier transform of a convolution is the product
  of the Fourier transforms.  Consequently, the \(L^{1}\) norm of the
  inverse Fourier transform of \(w^{(1)}(x)^{2}\) is bounded by
  \(\tilde w_{1}^{2}\), from which it follows that the \(L^{1}\)-norm
  of the inverse Fourier transform of the factor
  \(iyw^{(2)}(x)- y^{2}w^{(1)}(x)^{2}\) on the right-hand side of
  Eq.~\eqref{eq:bndh2y} is bounded by
  \(|y|\tilde w_{2}+|y|^{2} \tilde w_{1}^{2}\).  Distributing the
  product over \(h_{y}(x)+1\) and applying the Banach property again
  to the first summand gives
  \begin{align}
    \int_{\kappa\in\rls}d\kappa |\kappa|^{2}|\tilde h_{y}(\kappa)|
    &\leq
      |y|(\tilde w_{2}+  |y|\tilde w_{1}^{2})\qty(
      \frac{1}{\sqrt{\pi}}\qty(\sqrt{w_{0}\tilde w_{0}}+|y|\sqrt{w_{1}\tilde w_{1}})+1).
  \end{align}
  Since \(d\tilde \nu_{y}(\kappa+\xi y) = d\kappa \tilde h_{y}(\kappa)+\delta(\kappa)\),
  the inequalities of Eq.~\eqref{eq:disp1} can be continued for
  \begin{align}
    \int_{\kappa\in\rls} d\nu_{y}(\kappa)
    &\leq
      \int_{\kappa\in\rls}d\kappa |\tilde h_{y}(\kappa)| + 1
      \notag\\
    &\leq \frac{1}{\sqrt{\pi}}\qty(\sqrt{w_{0}\tilde w_{0}}+|y|\sqrt{w_{1}\tilde w_{1}})+1
      ,\notag\\
    \int_{\kappa\in\rls} d\nu_{y}(\kappa)|\kappa|^{2}
    &\leq
      \int_{\kappa\in\rls}d\kappa |\kappa|^{2}\tilde h_{y}(\kappa)
      + 2|\xi y|^{2}\qty(\frac{1}{\sqrt{\pi}}\qty(\sqrt{w_{0}\tilde w_{0}}+|y|\sqrt{w_{1}\tilde w_{1}})+1)
      \notag\\
    &\leq
      |y|\qty(\frac{1}{\sqrt{\pi}}\qty(\sqrt{w_{0}\tilde w_{0}}+|y|\sqrt{w_{1}\tilde w_{1}})+1)
      \qty(\tilde w_{2}+  |y|\qty(\tilde w_{1}^{2} + 2\xi^{2}))
      .
      \label{eq:disp2}
  \end{align}
  The proposition follows by substitution into the conclusion of
  Prop.~\ref{prop:condphase}.
\end{proof}

Applying Prop.~\ref{prop:conddisp} requires that \(v(x)\) is
sufficiently well behaved. This may require regularizing \(v(x)\). For
this it helps to first compare the intended \(v(x)\) to a regularized
one by applying the following lemma.

\begin{lemma}
  \label{lem:condreg}
  Let \(\hat{p}_{B}\) be a self-adjoint operator of \(B\) with projection valued
  measure \(d\Pi_{B}(x)\), and
  \(d\Pi(x)\) be a projection-valued measure on the modes \(\bm{a}\) with \(x\in\rls\). Let \(\hat{x}=\int_{x\in\rls}d\Pi(x)x\). For real-valued functions \(u(x)\) and \(v(x)\), consider the two families of unitaries \(\hat{U}_{B}(x)=e^{iu(x)\hat{p}_{B}}\) and
  \(\hat{V}_{B}(x)=e^{iv(x)\hat{p}_{B}}\) defining conditional
  unitaries \(\cC_{\bm{a}B}=\int_{x\in\rls} d\Pi(x) \hat{U}_{B}(x)\) and
  \(\cD_{\bm{a}B}=\int_{x\in\rls} d\Pi(x) \hat{V}_{B}(x)\). Then
  \begin{align}
    \qty|\cC_{\bm{a}B}\ket{\psi}-\cD_{\bm{a}B}\ket{\psi}|^{2}
    &\leq \bra{\psi}\hat{p}_{B}^{2} (u(\hat{x})-v(\hat{x}))^{2}\ket{\psi}.
  \end{align}
\end{lemma}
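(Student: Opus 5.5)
The plan is to diagonalize both conditional unitaries simultaneously with respect to the joint projection-valued measure \(d\Pi(x)\,d\Pi_{B}(y)\). The measures \(d\Pi(x)\) and \(d\Pi_{B}(y)\) act on different systems, so they commute. By the functional calculus,
\begin{align}
  \cC_{\bm{a}B}-\cD_{\bm{a}B}
  &= \int_{x\in\rls,y\in\rls} d\Pi(x)\,d\Pi_{B}(y)\qty(e^{iu(x)y}-e^{iv(x)y}).
\end{align}
The difference is therefore the function \(g(x,y)=e^{iu(x)y}-e^{iv(x)y}\) applied to the commuting pair \((\hat{x},\hat{p}_{B})\). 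It follows that
\begin{align}
  \qty|\cC_{\bm{a}B}\ket{\psi}-\cD_{\bm{a}B}\ket{\psi}|^{2}
  = \int_{x,y}|g(x,y)|^{2}\bra{\psi}d\Pi(x)\,d\Pi_{B}(y)\ket{\psi}.
\end{align}
Here \(\bra{\psi}d\Pi(x)d\Pi_{B}(y)\ket{\psi}\) is a probability measure on \(\rls^{2}\). One can justify this either directly from the spectral theorem for the product projection-valued measure, or with the direct-integral argument used in the proofs of Thm.~\ref{thm:mainbnd} and Prop.~\ref{prop:condphase}: decompose over \(y\), and on each fiber apply the ordinary functional calculus for \(\hat{x}\).

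The pointwise estimate is elementary:
\begin{align}
  |e^{iu(x)y}-e^{iv(x)y}|^{2} = 2\qty(1-\cos((u(x)-v(x))y)) \leq y^{2}(u(x)-v(x))^{2},
\end{align}
using \(1-\cos t\leq t^{2}/2\). This is the same inequality used for \(|h_{y}(x)|^{2}\) in the proof of Prop.~\ref{prop:conddisp}. Integrating against the joint measure gives
\begin{align}
  \int_{x,y} y^{2}(u(x)-v(x))^{2}\bra{\psi}d\Pi(x)\,d\Pi_{B}(y)\ket{\psi}
  = \bra{\psi}\hat{p}_{B}^{2}\,(u(\hat{x})-v(\hat{x}))^{2}\ket{\psi},
\end{align}
where the right-hand side is interpreted as the (possibly infinite) quadratic form of the commuting positive operators. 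This is exactly the claimed bound.

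The only delicate point is domain and measure-theoretic care with the unbounded operators \(\hat{p}_{B}^{2}\) and \((u(\hat{x})-v(\hat{x}))^{2}\). I would handle it by defining the right-hand side as the integral of the nonnegative function \(y^{2}(u(x)-v(x))^{2}\) against the spectral measure, so that it is either finite or the bound is trivial. No further obstacle is expected.
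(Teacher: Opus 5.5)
Your proof is correct and matches the paper's argument. The paper also diagonalizes $\cD_{\bm{a}B}^{\dagger}\cC_{\bm{a}B}$ with respect to the joint measure $d\Pi(x)\,d\Pi_{B}(y)$ and bounds $\Re\qty(e^{i(u(x)-v(x))y})\geq 1-\frac{1}{2}(u(x)-v(x))^{2}y^{2}$ via Eq.~\eqref{eq:reeith-ineq-op}. It then converts the bound on $\Re\qty(\bra{\psi}\cD_{\bm{a}B}^{\dagger}\cC_{\bm{a}B}\ket{\psi})$ into the distance bound, which is your pointwise estimate of $|e^{iu(x)y}-e^{iv(x)y}|^{2}=2(1-\cos((u(x)-v(x))y))$ written in a different order.
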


\begin{proof}
  It suffices to show that
    \begin{align}
    \Re\qty(\bra{\psi}\cD_{\bm{a}B}^{\dagger}\cC_{\bm{a}B}\ket{\psi})
    &\geq
      \one- \frac{1}{2}\bra{\psi}\hat{p}_{B}^{2} (u(\hat{x})-v(\hat{x}))^{2}\ket{\psi}.
  \end{align}
  We have 
  \begin{align}
    \Re\qty(\cD_{\bm{a}B}^{\dagger}\cC_{\bm{a}B})
    &=\int_{x\in\rls,y\in\rls} d\Pi(x) d\Pi_{B}(y)
      \Re\qty(e^{i(u(x)-v(x))y})
      \notag\\
    &\geq
      \int_{x\in\rls,y\in\rls} d\Pi(x) d\Pi_{B}(y)
      \qty(1- \frac{1}{2}(u(x)-v(x))^{2}y^{2})
      \notag\\
     &= 1-\frac{1}{2}\hat{p}_{B}^{2}(u(\hat{x})-v(\hat{x}))^{2},
  \end{align}
  where we applied Eq.~\eqref{eq:reeith-ineq-op} for the second
    line.  The inequality of the lemma follows. 
\end{proof}

Next, we bound the distance between \(\ket{\psi_{m,\delta}}\) and
\(\ket{\tilde\psi_{m,\delta}}\). We first give a proposition that
provides a general strategy for obtaining lower bounds on
\(\Re\qty(\braket{\tilde\psi_{m,\delta}}{\psi_{m,\delta}})\). For this proposition, the self-adjoint operator \(\hat{q}_{\delta}\) need
not be a quadrature of the modes \(\bm{a}\), and the initial state need not be vacuum on the LO modes. More generally, let
\(d\Pi_{A}(x)\) be a projection-valued measure of system \(A\) with
\(x\in\cB\).  Let \(\hat{U}(x)_{B}\) a family of unitary operators
defining the conditional unitary
\(\hat{C}_{AB}=\int_{x\in\cB}d\Pi_{A}(x)\hat{U}(x)_{B}\).  Let the
apparatus \(M\) have a projection-valued measure \(d\Pi_{M}(z)\) for
\(z\in\cB\), and let \(\hat{M}_{AM}\) be a measurement unitary of the
form \(\hat{M}_{AM} = \int_{x\in\cB}d\Pi_{A}(x)\hat{V}(x)_{M}\) for a
family of unitary operators \(\hat{V}(x)_{M}\) on the apparatus.  The
initial state of the apparatus \(M\) is a fixed pure state
\(\ket{g}_{M}\). Write \(\ket{g_{x}}_{M}=\hat{V}(x)_{M}\ket{g}\).  The
states \(\ket{g_{x}}_{M}\) have a direct-integral representation
\(\int_{z\in\cB}d\Pi_{M}(z)\ket{g_{x}} = \int_{z\in\cB}^{\oplus}d\mu(z)\ket{g_{x}(z)}_{M}\) with
respect to a \(\sigma\)-finite measure \(d\mu(z)\) on \(\cB\).  Then
\begin{align}
  \hat{M}_{AM}\ket{\psi}_{AB}\ket{g}_{M}
  &=
    \int_{x\in\cB}d\Pi_{A}(x)\ket{\psi}_{AB}\ket{g_{x}}_{M}
    \notag\\
  &=
    \int_{x\in\cB,z\in\cB}d\Pi_{A}(x)d\Pi_{M}(z)\ket{\psi}_{AB}\ket{g_{x}}_{M}
    \notag\\
  &=
    \int_{x\in\cB}\int_{z\in\cB}^{\oplus}d\Pi_{A}(x)d\mu(z)\ket{\psi}_{AB}\ket{g_{x}(z)}_{M}
   .
\end{align}
For each \(x\), define the probability distribution
\(d\mu(z)h(z|x)=d\mu(z)\braket{g_{x}(z)}\).
We define the apparatus conditional unitary \(\hat{C}_{MB}\)
as \(\hat{C}_{MB} = \int_{z\in\cB}d\Pi_{M}(z) \hat{U}(z)_{B}\).  Let
\(\hat{W}_{AM}=\int_{zx}d\Pi_{M}(z)d\Pi_{A}(x) e^{i\omega(z,x)}\) for
a real-valued function \(\omega(z,x)\).  Define
\begin{align}
  \ket{\psi_{m}} &= \hat{M}_{AM}\hat{C}_{AB}\ket{\psi}_{AB}\ket{g}_{M}
                   \notag\\
            &=\hat{C}_{AB}\hat{M}_{AM}\ket{\psi}_{AB}\ket{g}_{M},
                   \notag\\
  \ket{\tilde\psi_{m}}
                 &= \hat{W}_{AM}\hat{C}_{MB}\hat{M}_{AM}\ket{\psi}_{AB}\ket{g}_{M}.
\end{align}

\begin{proposition}\label{prop:conditionalU}
With the definitions introduced in the previous paragraph, consider
a positive semi-definite self-adjoint operator $\hat{Q}_{B}$ and a non-negative
function $f(z,x)$ on \(\cB^{2}\) such that for all $z$ and $x$
\begin{align}
f(z,x)\hat{Q}_{B} \geq    
\Re\qty(1-e^{-iw(z,x)}\hat{U}^{\dagger}(z)_{B}\hat{U}(x)_{B}).
\label{eq:prop:conditionalU:1}
\end{align}
Define  \( \hat{F}_{A}=\int d\Pi_{A}(x) f(z,x)h(z|x)d\mu_{M}(z)\). Let $\rho_{A}$ and $\rho_{B}$ be the reduced density matrices on $A$ and $B$ of the state $\ketbra{\psi}$. Then
\begin{align}
  \qty|\ket{\smash{\tilde\psi_{m}}}-\ket{\psi_{m}}|^{2}
  &\leq 2\bra{\psi}_{AB}\hat{F}_{A}\hat{Q}_{B}\ket{\psi}_{AB}
    \label{eq:prop:conditionalU:2}\\
  &\leq
    2\tr(\rho_{A} \hat{F}_{A}^{2})^{1/2} \tr(\rho_{B} \hat{Q}_{B}^{2})^{1/2}.
    \label{eq:prop:conditionalU:3}
\end{align}
\end{proposition}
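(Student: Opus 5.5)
The plan is to reduce the distance to a real part of an overlap, using $|\ket{\tilde\psi_{m}}-\ket{\psi_{m}}|^{2} = 2-2\Re\braket{\tilde\psi_{m}}{\psi_{m}}$ for normalized states. Both states start from the common vector $\hat{M}_{AM}\ket{\psi}_{AB}\ket{g}_{M}$, and $\hat{C}_{AB}$ commutes with $\hat{M}_{AM}$. Therefore
\begin{align}
  \braket{\tilde\psi_{m}}{\psi_{m}}
  = \bra{\psi}\bra{g}\hat{M}_{AM}^{\dagger}\,\hat{C}_{MB}^{\dagger}\hat{W}_{AM}^{\dagger}\hat{C}_{AB}\,\hat{M}_{AM}\ket{\psi}\ket{g}.
\end{align}
The middle operator is diagonal in the joint projection-valued measures $d\Pi_{M}(z)d\Pi_{A}(x)$. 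On the $(z,x)$ block it acts on $B$ as $e^{-iw(z,x)}\hat{U}^{\dagger}(z)_{B}\hat{U}(x)_{B}$, where I identify $\omega(z,x)$ in $\hat{W}_{AM}$ with the $w(z,x)$ of Eq.~\eqref{eq:prop:conditionalU:1}. Hence
\begin{align}
  1-\Re\braket{\tilde\psi_{m}}{\psi_{m}}
  = \bra{\psi}\bra{g}\hat{M}_{AM}^{\dagger}\int d\Pi_{M}(z)d\Pi_{A}(x)\,
  \Re\qty(1-e^{-iw(z,x)}\hat{U}^{\dagger}(z)_{B}\hat{U}(x)_{B})\,\hat{M}_{AM}\ket{\psi}\ket{g}.
\end{align}
Here the real part of the block operator is $(X+X^{\dagger})/2$, and the real part of the expectation equals the expectation of that Hermitian part.

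Next I apply the operator inequality \eqref{eq:prop:conditionalU:1} blockwise. Since it holds for every $(z,x)$ and the blocks are orthogonal projections commuting with operators on $B$, integrating gives an upper bound with the integrand replaced by $f(z,x)\hat{Q}_{B}$. I then evaluate the resulting expectation using the direct-integral expression for $\hat{M}_{AM}\ket{\psi}\ket{g}$ displayed before the proposition. The $d\Pi_{A}(x)$ part stays on $\ket{\psi}$, and the apparatus contributes $\braket{g_{x}(z)}d\mu(z)=h(z|x)d\mu(z)$. The $B$ operator $\hat{Q}_{B}$ passes through untouched. This yields $\bra{\psi}\int d\Pi_{A}(x)\int d\mu(z) f(z,x)h(z|x)\,\hat{Q}_{B}\ket{\psi}=\bra{\psi}\hat{F}_{A}\hat{Q}_{B}\ket{\psi}$. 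Multiplying by $2$ gives Eq.~\eqref{eq:prop:conditionalU:2}. Note that $\hat{F}_{A}\geq 0$ commutes with $\hat{Q}_{B}\geq 0$, so $\hat{F}_{A}\hat{Q}_{B}$ is positive.

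The main obstacle is making this rigorous for unbounded $\hat{Q}_{B}$ and $\hat{F}_{A}$ and for the direct-integral manipulation. I would justify the step by first restricting to spectral truncations, i.e.\ bounded Borel subsets where $f$ and the spectrum of $\hat{Q}_{B}$ are bounded. There I would apply the inequality in the form $\bra{\chi}\Re(\cdot)\ket{\chi}\leq f\bra{\chi}\hat{Q}_{B}\ket{\chi}$ for the fiber vectors $\ket{\chi}$. Finally I would pass to the limit by monotone convergence, allowing the right-hand side to be $+\infty$, in which case the claim is trivial.

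For Eq.~\eqref{eq:prop:conditionalU:3}, I write $\bra{\psi}\hat{F}_{A}\hat{Q}_{B}\ket{\psi}=\bra{\psi}(\hat{F}_{A}\otimes\one)(\one\otimes\hat{Q}_{B})\ket{\psi}$ and apply Cauchy--Schwarz. This gives the bound $\|\hat{F}_{A}\ket{\psi}\|\,\|\hat{Q}_{B}\ket{\psi}\|$. The squared norms are $\tr(\rho_{A}\hat{F}_{A}^{2})$ and $\tr(\rho_{B}\hat{Q}_{B}^{2})$ by the definition of reduced density matrices.
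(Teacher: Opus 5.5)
Your proposal is correct and follows the paper's route. Both write $|\ket{\tilde\psi_{m}}-\ket{\psi_{m}}|^{2}=2-2\Re\braket{\tilde\psi_{m}}{\psi_{m}}$, decompose over $d\Pi_{M}(z)d\Pi_{A}(x)$ so the apparatus contributes $h(z|x)\,d\mu(z)$, apply the operator inequality blockwise to get $\bra{\psi}\hat{F}_{A}\hat{Q}_{B}\ket{\psi}$, and finish with Cauchy--Schwarz. Sandwiching one block-decomposable operator instead of writing out both states, and your truncation remarks for unbounded $\hat{Q}_{B}$, are only cosmetic departures from the paper, which simply assumes $\ket{\psi}$ lies in the relevant domains.
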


\begin{proof}
  The  operator \(\hat{Q}_{B}\)
  need not be bounded. Let \(\hat{R}\) be the operator on the
  right-hand side of the inequality in
  Eq.~\eqref{eq:prop:conditionalU:1}. Since \(\hat{R}\) is bounded,
  \(\hat{Q}_{B}-\hat{R}\) is selfadjoint with the same domain as
  \(\hat{Q}_{B}\). The inequality is equivalent to the statement
  that \(\hat{Q}_{B}-\hat{R}\) is positive semi-definite.

  The expression on the right-hand side of
  Eq.~\eqref{eq:prop:conditionalU:3} is \(\infty\) when \(\ket{\psi}\)
  is not in the domains of \(\hat{Q}_{B}\) or \(\hat{F}_{A}\), in
  which case there is nothing to prove.  Assume that \(\ket{\psi}\) is
  in the domains of these operators.  We have
  \begin{align}
    \ket{\psi_{m}}
    & = \hat{\cC}_{AB}\int_{x} d\Pi_{A}({x})
      \ket{\psi}_{AB} \ket{g_{{x}}}_{M}
      \notag\\
    &=  \int_{x\in\cB} d\Pi_{A}({x})\hat{U}_{B}({x})\ket{\psi}_{AB}\ket{g_{{x}}}_{M},
      \notag\\
    \ket{\tilde\psi_{m}}
    &=
      \hat{W}_{AM} \hat{\cC}_{MB}\int_{x\in\cB} d\Pi_{A}({x})
      \ket{\psi}_{AB}\ket{g_{{x}}}_{M}
      \notag\\
    & = \int_{z\in\cB,x\in\cB} d\Pi_{M}({z})d\Pi_{A}({x})e^{iw({z},{x})} \hat{U}_{B}({z})\ket{\psi}_{AB}\ket{g_{{x}}}_{M}
    .
  \end{align}
  Therefore
  \begin{align}
    \Re\qty(\braket{\tilde\psi_{m}}{\psi_{m}})
    &=
      \int_{x\in\cB,z\in\cB} \bra{\psi}_{AB}d\Pi_{A}({x})e^{-iw({z},{x})} \hat{U}_{B}({z})^{\dagger}
      \hat{U}_{B}({x})\ket{\psi}_{AB}
      \bra{g_{{x}}}_{M}d\Pi_{M}({z})\ket{g_{{x}}}
      \notag\\
    &=
      \int_{z\in\cB, x\in\cB}d\mu_{M}({z})  \bra{\psi}d\Pi_{A}({x})e^{-iw({z},{x})}\hat{U}_{B}({z})^{\dagger}
      \hat{U}_{B}({x})\ket{\psi}
      h(z|x).
  \end{align}
  From this we get
  \begin{align}
    1 - \Re\qty(\bra{\tilde\psi_{m}} \ket{\psi_{m}})
    &=
      \Re\Bigg(\int_{z\in\cB,x\in\cB}d\mu(z)\bra{\psi}_{AB} d\Pi_{A}({x})\ket{\psi}_{AB} h({z}|{x})
      \notag\\
    &\hphantom{{}=\Re\Bigg(}
      {} -
      \int_{z\in\cB,x\in\cB} d\mu({z})\bra{\psi}_{AB}d\Pi_{A}({x})e^{-iw(z,x)} \hat{U}_{B}({z})^{\dagger}
      \hat{U}_{B}({x})\ket{\psi}_{AB}
      h({z}|{x})\Bigg)
      \notag\\
    &=\Re\qty(\int_{z\in\cB,x\in\cB} d\mu({z})\bra{\psi}_{AB} d\Pi_{A}({x}) \qty(1-e^{-iw(z,x)} \hat{U}_{B}({z})^{\dagger}
      \hat{U}_{B}({x}))\ket{\psi}_{AB} h({z}|{x}))
      \notag\\
    &=\int_{z\in\cB,x\in\cB}d\mu({z}) \bra{\psi}_{AB} d\Pi_{A}({x}) \Re\qty(1-e^{-iw(z,x)} \hat{U}_{B}({z})^{\dagger}
      \hat{U}_{B}({x}))\ket{\psi}_{AB} h({z}|{x})
      \notag\\
    &\leq
      \int_{z\in\cB,x\in\cB} d\mu({z})\bra{\psi}_{AB} d\Pi_{A}({x})\hat{Q}_{B}\ket{\psi}_{AB}
      f({z},{x})h({z}|{x})
      \notag\\
    &=
      \bra{\psi}_{AB}\hat{Q}_{B}\int_{z\in\cB,x\in\cB}  d\Pi_{A}({x})
      f({z},{x})h({z}|{x})d\mu({z})\ket{\psi}_{AB}
      \notag\\
    &=\bra{\psi}_{AB}\hat{F}_{A}\hat{Q}_{B}\ket{\psi}_{AB}
      \notag\\
    & \leq \qty(\bra{\psi}_{AB} \hat{F}_{A}^{2} \ket{\psi}_{AB})^{1/2}
      \qty(\bra{\psi}_{AB} \hat{Q}_{B}^{2} \ket{\psi}_{AB})^{1/2}
      \notag\\
    &=
      \tr(\rho_{A} \hat{F}_{A}^{2})^{1/2} \tr(\rho_{B} \hat{Q}_{B}^{2})^{1/2}.
  \end{align}
  The last inequality is the Cauchy-Schwarz inequality. The
  inequalities in the proposition follow by expanding
  \(\qty|\ket{\smash{\tilde\psi_{m}}}-\ket{\psi_{m}}|^{2}\).
\end{proof}

For the case where the conditional unitaries are displacements, we
can obtain explicit expressions for the operators in Prop.~\ref{prop:conditionalU}.
For this purpose, we let \(\cB=\rls^{k}\) and denote members of \(\cB\)
by \(\bm{x}\) and \(\bm{z}\), representing vectors of real values.

\begin{proposition} \label{prop:conditionalD}
  In Prop.~\ref{prop:conditionalU}, let
  $\hat{U}_{B}(\bm{x}) = D_{\bm{\gamma}({\bm{x}})}$ be displacements
  acting on modes of \(B\). Then we can choose $w(\bm{z},\bm{x})$ such that
  the conclusions of Prop.~\ref{prop:conditionalU} hold with
  \begin{align}
    f(\bm{z},\bm{x}) &= \frac{1}{2}
                       |\bm{\gamma}(\bm{z})-\bm{\gamma}(\bm{x})|^{2}, \\
    \hat{Q}_{B} & = 4(\hat{n}_{\tot}+1/2), \\
  \end{align}
  where $\hat{n}_{\tot}$ is the total photon number operator on the
  modes involved in the displacements.
  Suppose that in addition, $\bm{\gamma}(\bm{x})$ satisfies that for all
  \(\bm{x},\bm{z}\) we have
  $|\bm{\gamma}(\bm{z})-\bm{\gamma}(\bm{x})| \leq K_{1} |\bm{z}-\bm{x}|$,
  and for all \(\bm{x}\),
  $\int d\bm{z}|\bm{x}-\bm{z}|^{2}h(\bm{z}|\bm{x}) \leq K_{2}$.  Then
  \begin{align}
    \qty|\ket{\smash{\tilde\psi_{m}}}-\ket{\psi_{m}}|^{2}
    \leq 2K_{1}^{2}K_{2}
    \qty(\bra{\psi}_{AB}(\hat{n}_{\tot}+1/2)^{2}\ket{\psi}_{AB})^{1/2}.
  \end{align}
\end{proposition}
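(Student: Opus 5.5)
The plan is to verify the operator inequality of Eq.~\eqref{eq:prop:conditionalU:1} for displacements with a suitable phase \(w(\bm{z},\bm{x})\), and then substitute into Eqs.~\eqref{eq:prop:conditionalU:2} and~\eqref{eq:prop:conditionalU:3}.

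\textbf{Choice of phase.} By the last identity of Eq.~\eqref{eq:dispeqs} (extended mode-wise to several modes),
\[
\hat{D}_{\bm{\gamma}(\bm{z})}^{\dagger}\hat{D}_{\bm{\gamma}(\bm{x})} = \hat{D}_{-\bm{\gamma}(\bm{z})}\hat{D}_{\bm{\gamma}(\bm{x})} = e^{i\phi(\bm{z},\bm{x})}\hat{D}_{\bm{\gamma}(\bm{x})-\bm{\gamma}(\bm{z})},
\]
with an explicit real phase \(\phi\) (an imaginary part of \(\bm{\gamma}(\bm{z})^{*}\cdot\bm{\gamma}(\bm{x})\)). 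We choose \(w(\bm{z},\bm{x})=\phi(\bm{z},\bm{x})\), so that
\[
e^{-iw}\hat{U}^{\dagger}(\bm{z})\hat{U}(\bm{x}) = \hat{D}_{\bm{\epsilon}}, \qquad \bm{\epsilon}=\bm{\gamma}(\bm{x})-\bm{\gamma}(\bm{z}).
\]
We then need \(\Re(1-\hat{D}_{\bm{\epsilon}}) \leq \tfrac12|\bm{\epsilon}|^{2}\cdot 4(\hat{n}_{\tot}+1/2)\). Since \(\hat{D}_{\bm{\epsilon}}=e^{i\hat{q}_{\bm{\epsilon}}}\), the operator version of \(1-\cos t\leq t^{2}/2\) (Eq.~\eqref{eq:reeith-ineq-op}) gives \(\Re(1-\hat{D}_{\bm{\epsilon}})\leq \tfrac12\hat{q}_{\bm{\epsilon}}^{2}\).

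\textbf{Main step.} The remaining task is the bound \(\hat{q}_{\bm{\epsilon}}^{2}\leq 4|\bm{\epsilon}|^{2}(\hat{n}_{\tot}+1/2)\). Write \(\hat{q}_{\bm{\epsilon}}=|\bm{\epsilon}|\hat{q}_{\bm{e}}\) with \(|\bm{e}|=1\), and let \(\hat{c}=\bm{e}^{*}\cdot\hat{\bm{a}}\), a single normalized mode. Then
\[
\hat{q}_{\bm{e}}^{2} = -(\hat{c}^{\dagger}-\hat{c})^{2} = 2\hat{c}^{\dagger}\hat{c}+1-\hat{c}^{\dagger 2}-\hat{c}^{2} \leq 2(2\hat{c}^{\dagger}\hat{c}+1),
\]
using \((\hat{c}^{\dagger}+\hat{c})^{2}\geq 0\). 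Finally \(\hat{c}^{\dagger}\hat{c}\leq \hat{n}_{\tot}\), because the mode number of one member of an orthonormal mode basis is dominated by the total number operator. This yields \(\hat{q}_{\bm{\epsilon}}^{2}\leq 4|\bm{\epsilon}|^{2}(\hat{n}_{\tot}+1/2)\), and hence the claimed \(f\) and \(\hat{Q}_{B}\). I expect this operator inequality, and its domain care for unbounded operators, to be the main point to check. The factor \(4\) comes with slack, and one can argue via quadratic forms on the finite-particle domain and then extend.

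\textbf{Second part.} Using \(f\leq \tfrac12 K_{1}^{2}|\bm{z}-\bm{x}|^{2}\) and the bound \(\int d\bm{z}\,|\bm{x}-\bm{z}|^{2}h(\bm{z}|\bm{x})\leq K_{2}\) gives \(\hat{F}_{A}\leq \tfrac12K_{1}^{2}K_{2}\). Then \(\tr(\rho_{A}\hat{F}_{A}^{2})^{1/2}\leq \tfrac12 K_{1}^{2}K_{2}\) and \(\tr(\rho_{B}\hat{Q}_{B}^{2})^{1/2}=4(\bra{\psi}(\hat{n}_{\tot}+1/2)^{2}\ket{\psi})^{1/2}\). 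Substituting these into Eq.~\eqref{eq:prop:conditionalU:3} gives the bound \(4K_{1}^{2}K_{2}(\cdot)^{1/2}\), a factor \(2\) above the stated one. To recover the stated constant I would use Eq.~\eqref{eq:prop:conditionalU:2} directly, noting that \(\hat{F}_{A}\) is a bounded multiple of the identity, together with a sharper accounting of constants. If the stated \(2K_{1}^{2}K_{2}\) still does not follow, I would tighten \(\hat{Q}_{B}\) to \(2(\hat{n}_{\tot}+1/2)\), which the bound \(\hat{q}_{\bm{e}}^{2}\leq 2(2\hat{c}^{\dagger}\hat{c}+1)\) also supports.
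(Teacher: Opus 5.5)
Your first part is the paper's proof. You use the same phase $w$ from the composition law for displacements, the operator form of $1-\cos t\le t^{2}/2$, and $\hat q_{\bm\epsilon}^{2}\le 4|\bm\epsilon|^{2}(\hat n_{\tot}+1/2)$. The paper reaches the last of these by reducing to a single-mode $e^{ir\hat p}$ and citing Eq.~\eqref{app:eq:quadsq-hatn}; your mode $\hat c=\bm e^{*}\cdot\hat{\bm a}$ with $\hat c^{\dagger}\hat c\le\hat n_{\tot}$ makes that reduction explicit. Your second-part computation is also the paper's, and $4K_{1}^{2}K_{2}$ is what that computation actually gives. With $\hat F_{A}\le\frac12K_{1}^{2}K_{2}$ and $\hat Q_{B}=4(\hat n_{\tot}+1/2)$, both \eqref{eq:prop:conditionalU:2} and \eqref{eq:prop:conditionalU:3} give $2\cdot\frac12K_{1}^{2}K_{2}\cdot4=4K_{1}^{2}K_{2}$. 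The paper's closing ``substituting gives the bound in the proposition'' contains nothing that removes this factor of $2$.

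The gap is in your attempts to reach the constant $2$; neither works, and none can. First, using \eqref{eq:prop:conditionalU:2} only replaces $\langle(\hat n_{\tot}+1/2)^{2}\rangle^{1/2}$ by $\langle\hat n_{\tot}+1/2\rangle$ and keeps the constant $4$. Also, $\hat F_{A}$ is in general only bounded by $\frac12K_{1}^{2}K_{2}$; it is not a multiple of the identity. Second, $\hat Q_{B}=2(\hat n_{\tot}+1/2)$ rests on an arithmetic slip: $2(2\hat c^{\dagger}\hat c+1)=4(\hat c^{\dagger}\hat c+1/2)$. Because $\Re(1-\hat D_{\bm\epsilon})\approx\frac12\hat q_{\bm\epsilon}^{2}$ for small $|\bm\epsilon|$, \eqref{eq:prop:conditionalU:1} with that $\hat Q_{B}$ fails for any state with $\langle\hat q_{\bm e}^{2}\rangle>2\langle\hat n_{\tot}\rangle+1$, e.g.\ a large coherent amplitude along $\hat q_{\bm e}$.

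In fact the stated constant is false. Take:
\begin{itemize}
\item one mode in $B$, with $\gamma(x)=K_{1}x$ real, so the chosen $w$ (yours and the paper's) is $0$;
\item $h(z|x)$ Gaussian with variance $K_{2}$;
\item $\ket{\psi}_{AB}=\ket{\chi}_{A}\ket{\beta}_{B}$, with $\beta$ chosen so that $\hat p=-i(\hat b^{\dagger}-\hat b)$ has mean $2|\beta|$.
\end{itemize}
Then $\braket{\tilde\psi_{m}}{\psi_{m}}=\bra{\beta}e^{-\kappa\hat p^{2}/2}\ket{\beta}$ with $\kappa=K_{1}^{2}K_{2}$. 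Hence $|\ket{\tilde\psi_{m}}-\ket{\psi_{m}}|^{2}\approx\kappa(4|\beta|^{2}+1)$ when $\kappa|\beta|^{2}\ll1$. The claimed right-hand side is $2\kappa\sqrt{|\beta|^{4}+2|\beta|^{2}+1/4}\approx2\kappa(|\beta|^{2}+1)$. For $|\beta|^{2}=100$ and $\kappa=10^{-5}$ these are about $4.0\times10^{-3}$ versus $2.0\times10^{-3}$. A squeezed vacuum in place of $\ket{\beta}$ also violates the bound even if $w$ may vary, since $\bra{\phi}e^{i\epsilon\hat p}\ket{\phi}>0$ makes $w=0$ optimal. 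So this argument's correct conclusion is the bound with $4K_{1}^{2}K_{2}$. State that, rather than look for a sharper constant.
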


\begin{proof}
  We have 
\begin{align}
\hat{U}_{B}(\bm{z})^{\dagger}\hat{U}_{B}(\bm{x}) & = e^{-i \Im (\bm{\gamma}(\bm{x})^{\dagger} \bm{\gamma}({\bm{z}}))} D_{\bm{\gamma}(\bm{x}) - \bm{\gamma}(\bm{z})}.
\end{align}
In view of this identity, we let
\begin{align}
w(\bm{z},\bm{x}) = -\Im(\bm{\gamma}(\bm{x})^{\dagger}\bm{\gamma}(\bm{z})).
\end{align}

We determine a bound on displacements. 
Consider a general displacement operator $D_{\bm{\beta}}$. This is unitarily equivalent to the displacement $D_{r}=e^{ir\hat{p}}$ with $\hat{p}$ a normalized quadrature and \(r=|\bm{\beta}|\). For an arbitrary state $\ket{\phi}$,
let \(d\mu(z) = \bra{\phi} d\Pi_{p}(z) \ket{\phi}\), where \(d\Pi_{p}(z)\)
is the spectral measure of \(\hat{p}\). Then
\begin{align}
\Re(\bra{\phi} D_{r}\ket{\phi})
&= \int d\mu(z)  \Re\qty(e^{irz}) \nonumber \\
& \geq 1 - \int d\mu(z)  \frac{r^{2} z^{2}}{2} \nonumber \\
& = 1 - \frac{r^{2}}{2}\bra{\phi} \hat{p}^{2}\ket{\phi}.
\end{align}
By arbitrariness of \(\ket{\phi}\), it follows that the operator inequality
\begin{align}
\frac{r^{2}}{2}\hat{p}^{2} &\geq \Re(1- D_{r})
\end{align}
holds. 
Since $\hat{p}^{2} \leq 4(\hat{n}+1/2)\leq 4(\hat{n}_{\tot}+1/2)$, we can replace $\hat{p}^{2}$ with $4(\hat{n}_{\tot}+1/2)$ in this inequality. 

Next, we apply the bound of the previous paragraph to verify
Eq.~\eqref{eq:prop:conditionalU:1} of Prop.~\ref{prop:conditionalU}
for \(\hat{Q}_{B}=4(\hat{n}_{\tot}+1/2)\) and
\(f(\bm{z},\bm{x})=\frac{1}{2}|\bm{\gamma}(\bm{z})-
\bm{\gamma}(\bm{x})|^{2} \leq
\frac{K_{1}^{2}}{2}|\bm{z}-\bm{x}|^{2}\).
\begin{align}
\Re\qty(1-e^{-iw(\bm{z},\bm{x})}\hat{U}_{B}^{\dagger}(\bm{z}) \hat{U}_{B}(\bm{x}))
& = \Re\qty(1- e^{-iw(\bm{z},\bm{x})} D_{-\bm{\gamma}(\bm{z})}
D_{\bm{\gamma}(\bm{x})}) \nonumber \\
& = \Re\qty(1- D_{\bm{\gamma}(\bm{x})-\bm{\gamma}(\bm{z})}) \nonumber \\
& \leq \frac{1}{2}|\bm{\gamma}(\bm{z})-\bm{\gamma}(\bm{x})|^{2}
                                                                           4(\hat{n}_{\tot}+1/2)\nonumber\\
  &\leq 2K_{1}^{2}|\bm{z}-\bm{x}|^{2}(\hat{n}_{\tot}+1/2).
\end{align}
The second-last line is the bound needed for the first part of the
proposition.  The last line is needed for the second part. To finish
establishing the second part, we bound $\hat{F}_{\bm{a}}$ by bounding
the term multiplying \(d\Pi_{\bm{a}}(\bm{x})\) in the definition of
\(\hat{F}_{\bm{a}}\) in Prop.~\ref{prop:conditionalU}:
\begin{align}
  \int d\mu(z) f(\bm{z},\bm{x})h(\bm{z}|\bm{x})
  &\leq \int d\mu(z)\frac{K_{1}^{2}}{2}|\bm{x}-\bm{z}|^{2}h(\bm{z}|\bm{x})
    \notag\\
  &\leq \frac{K_{1}^{2}K_{2}}{2}.
\end{align}
consequently, \(\hat{F}_{\bm{a}}^{2}\leq K_{1}^{4}K_{2}^{2}/4\).
Substituting in the inequality
of Eq.~\eqref{eq:prop:conditionalU:2} gives the bound in the proposition.
\end{proof}


\section{Applications to standard pulsed homodyne}
\label{sec:onemode}
Our bounds can be applied to standard pulsed homodyne,
which corresponds to having \(\omega_{k}=\omega\) for \(k=1,\ldots,N\).  With our conventions, \(\omega=\omega_{1}=1\). The
LO displacement is then
\(\frac{\bm{\beta}}{\delta}=-\frac{i\bm{\alpha}}{\delta}\).  Because the target quadrature is normalized, the expected number of photons in the LO after the displacement is \(1/\delta^{2}\). For the quantities introduced in Thm.~\ref{thm:mainbnd}, we get
\(\hat{\Omega}=\hat{n}_{\tot}\) and \(\overline{\omega^{2}} = 1\).

\newcommand{\lowdetnoise}{730} 
\newcommand{\highdetnoise}{8250} 

A standard implementation of pulsed homodyne uses a matched pair of
photodiodes as the detectors. A detailed description of such an
implementation with low added noise is given in
Ref.~\cite{hansen2001ultrasensitive}. The subtracted photodiode
measurement has an effective added noise of
\(\sigma\approx \lowdetnoise\) electrons before rescaling. Another
example of such an implementation is given in
Ref.~\cite{gerrits2011balanced} with an added noise of
\(\sigma\approx\highdetnoise\) electrons. Because the added noise
involves a large number of electrons, this can to good approximation
be modeled by Gaussian added noise in the measurement. In the von
Neumann measurement scheme as used in Thm.~\ref{thm:qmeas}, this
effect is obtained with the apparatus operator \(\hat{s}_{M}\) the
momentum operator of a one-dimensional system with Hilbert space
\(L^{2}(\rls,d\mu(s))\), and with initial real positive Gaussian wave
function \(\tilde g(x)\) in position space satisfying
\begin{align}
  \tilde g(x)^{2} &= \frac{1}{\sqrt{2\pi}\delta\,\sigma}e^{-x^{2}/
             (2(\delta\,\sigma)^{2})}.
             \label{eq:gaussian_added_noisex}
\end{align}
The operator \(\hat{s}_{M}\) generates translations of position space
so that \(e^{-i t\hat{s}_{M}}\) translates wavefunctions by a distance of \(t\).
The final measurement is in the position basis of the apparatus so
that the added Gaussian measurement noise has variance
\((\delta\,\sigma)^{2}\) for a resolution of \(r=\delta\,\sigma\).
This accounts for the rescaling of the
original homodyne difference operator by \(\delta\) and the added
noise of standard deviation \(\sigma\) after the subtraction of detector outcomes.

For applying Thm.~\ref{thm:qmeas}, we need the wavefunction \(g(s)\)
in momentum space \(L^{2}(\rls,d\mu(s))\).  This wave function is
given by the unitary Fourier transform of \(\tilde g(x)\), which is
\begin{align}
  g(s) &= \frac{\sqrt{2\delta\,\sigma}}{(2\pi)^{1/4}} e^{- (\delta\,\sigma)^{2}s^{2}}.
             \label{eq:gaussian_added_noise}
\end{align}
The square \(|g(s)|^{2}\) is a standard Gaussian with variance
\(1/(2\delta\,\sigma)^{2}=1/(2r)^{2}\).

We refer
to the measurement model with \(g(s)\) satisfying
Eq.~\eqref{eq:gaussian_added_noise} as Gaussian added noise at
resolution \(r\). The LO pulses used for the measurements in
Ref.~\cite{hansen2001ultrasensitive} have \(10^{6}\) to
\(3\times 10^{8}\) photons, which corresponds to
\(\delta\in [10^{-3},\sqrt{3}\times
10^{-4}]\). In~\cite{gerrits2011balanced}, the number of photons in
the LO pulses is \(0.5\times 10^{9}\) to \(2\times 10^{9}\).

\subsection{Fidelity of measurement outcomes and remaining state.}
For illustration, we first apply the general bounds for BBP homodyne
from Sect.~\ref{sec:sconv}. We then improve the bounds with the
results for standard pulsed homodyne from
App.~\ref{app:standardhomodyne}.  The bounds obtained show that while
the general bounds are practically relevant and can be used regardless
of the distribution of the \(\omega_{k}\) without complications from
higher-degree terms, the specific bounds yield significant
improvements.

Thm.~\ref{thm:qmeas} implies a bound on the fidelity between the states obtained after coupling to the apparatus according to the ideal target quadrature compared to coupling according to the pulsed homodyne observable. The bound depends on the apparatus initial state
through the quantities \(b_{g,2l}\) defined in the theorem. In view of the relationship between distance and fidelity in
App.~\ref{app:fidelities}, the bound is
\begin{align}
  F_{\delta} &\geq
               1-4\delta^{2}\bra{\psi}\qty(b_{g,2}\hat{n}_{\tot}
               + b_{g,4}\qty(\hat{n}_{\tot}+1))\ket{\psi},
               \label{eq:outstate_fid}
\end{align}
where \(\ket{\psi}\) is the purified initial state.  This is a lower bound on the classical-quantum fidelity  of the joint state of the unmeasured quantum modes and the classical measurement outcomes, which is the relevant fidelity if the classical outcome distribution matters. 
For the homodyne measurement with Gaussian added noise at resolution \(r\), the quantities
\(b_{g,2}\) and \(b_{g,4}\) can be computed to be
\(b_{{g},2} = 1/(2r)^{2}\) and \(b_{{g},4}= 3/(2r)^{4}\).
This gives the bound
\begin{align}
  F_{\delta} &\geq                1-4\frac{\delta^{2}}{(2r)^{4}}
               \bra{\psi}\qty(\qty((2r)^{2}+{3})\hat{n}_{\tot}
               + {3})\ket{\psi}.
               \label{eq:outstate_fid_gauss}
\end{align}
High fidelity requires that
\({\delta^{2}\langle\hat{n}_{\tot}\rangle}/{r^{4}}\) is small,
which corresponds to requiring that the average number of photons in the LO be larger than the average number of photons in the signal by a factor determined by the resolution. This refines and quantifies the conventional recommendation that the average number of photons in the
LO be much larger than the average number of photons in the signal modes~\cite{braunstein1990homodyne}.

If, as in pulsed homodyne measurements with a matched pair of
photodiodes, the added noise is approximately constant before
rescaling, the resolution improves with increasing LO amplitude
according to \(r=\delta\,\sigma\).  Higher resolution implies that the
measurement of \(\hat{q}\) becomes sensitive to finer details of the
quadrature probability distribution. At fixed high resolution, it is
necessary to have correspondingly high LO amplitude for the homodyne
measurement to preserve these fine details. But with an LO amplitude
dependent resolution of \(r=\delta\,\sigma\), this resolution improves
too rapidly with LO amplitude for the LO amplitude to
compensate. Expressed in terms of \(\delta\) and \(\sigma\), the
fidelity tradeoff between detector-added noise \(\sigma\) and LO
amplitude is expressed with our bounds by
\begin{align}
  F_{\delta} &\geq                1-\frac{4}{(2\sigma)^{4}}
               \bra{\psi}\qty((2\sigma)^{2}\hat{n}_{\tot}+{3\delta^{-2}}(\hat{n}_{\tot}
               + 1))\ket{\psi}.
\end{align}
In practice, applications require a particular resolution \(r_{0}\).
At a given LO amplitude \(1/\delta\), if \(\delta\,\sigma < r_{0}\),
one can artificially add noise to the homodyne measurement outcomes to
increase \(\sigma\) to \(\sigma'=r_{0}/\delta\) and satisfy
\(\delta\,\sigma=r_{0}\).  The lower bound on fidelity in terms of
resolution then applies, and increasing the LO amplitude improves the
fidelity according to Eq.~\eqref{eq:outstate_fid_gauss} with \(r\)
replaced by \(r_{0}\).
\edef\sphfidexdelta{\fpeval{10^-4}}%
\edef\sphfidexr{\fpeval{\sphfidexdelta * \lowdetnoise}}%
Consider the
values for photodiode added noise and LO average photon numbers from
Ref.~\cite{hansen2001ultrasensitive} and assume an upper bound for
expected photon number of \newcommand{\sphfidexn}{5.0}
\(\langle \hat{n}_{\tot}\rangle =
\num[round-mode=places,round-precision=1]{\sphfidexn}\).  For
\(\delta=\num[exponent-mode=scientific,round-mode=figures,round-precision=1,
print-unity-mantissa=false]{\sphfidexdelta}\) corresponding to an LO
expected photon number of
\(\num[exponent-mode=scientific,round-mode=figures,round-precision=1,
print-unity-mantissa=false]{\fpeval{\sphfidexdelta^(-2)}}\), we get a
respectable resolution and fidelity of
\edef\sphfidex{\fpeval{%
    1-4*(\sphfidexdelta^2/((2*\sphfidexr)^4))* (((2*\sphfidexr)^2+%
    3)*\sphfidexn + 3) }}%
\begin{align}
  r&=\num[round-mode=places,round-precision=3]{\sphfidexr},
     \notag\\
  F_{\num[exponent-mode=scientific,round-mode=figures,round-precision=1, print-unity-mantissa=false]{\sphfidexdelta}} & \geq \num[round-mode=places,round-precision=3]{\sphfidex}.
\end{align}

Next we update the bounds obtained according to the results of App.~\ref{app:standardhomodyne}. With Thm.~\ref{thm:qmeas_sph},
Eq.~\eqref{eq:outstate_fid} becomes
\begin{align}
  F_{\delta} &\geq
               1-\delta^{2}\bra{\psi}\qty(2 b_{g,2}\hat{n}_{\tot}
               + \frac{1}{9}\delta^{2}b_{g,6}\hat{q}^{2} + \frac{1}{2} b_{g,4})\ket{\psi}.
               \label{eq:outstate_fid_sph}
\end{align}
We have \(b_{g,6}=15/(2r)^{6}\) and, conservatively, \(\bra{\psi}\hat{q}^{2}\ket{\psi} \leq 4\hat{n}_{\tot}+2\). Therefore
\begin{align}
  F_{\delta}
  &\geq
    1-\delta^{2}\bra{\psi}\qty(\qty(\frac{2}{(2r)^{2}} +\frac{20
    \delta^{2}}{3 (2r)^{6}})\hat{n}_{\tot}
    + \frac{10\delta^{2}}{3(2r)^{6}}+\frac{3}{2 (2r)^{4}})\ket{\psi}
    \notag\\
  &=    1-\frac{\delta^{2}}{(2r)^{4}}\bra{\psi}\qty(\qty(2(2r)^{2} +\frac{20
    \delta^{2}}{3 (2r)^{2}})\hat{n}_{\tot}
    + \frac{10\delta^{2}}{3(2r)^{2}}+\frac{3}{2})\ket{\psi}.
    \label{eq:outstate_fid_gauss_sph}
\end{align}
\edef\sphfidexx{\fpeval{
    1-(\sphfidexdelta^2/((2*\sphfidexr)^4))*
    (
      (2*(2*\sphfidexr)^2+ (20*\sphfidexdelta^2)/(3*(2*\sphfidexr)^2))*\sphfidexn
      + (10*\sphfidexdelta^2)/(3*(2*\sphfidexr)^2) + 3/2
    )
  }}
\edef\sphfideydelta{\fpeval{10^(-3)}}
\edef\sphfidexy{\fpeval{
    1-(\sphfideydelta^2/((2*\sphfidexr)^4))*
    (
      (2*(2*\sphfidexr)^2+ (20*\sphfideydelta^2)/(3*(2*\sphfidexr)^2))*\sphfidexn
      + (10*\sphfideydelta^2)/(3*(2*\sphfidexr)^2) + 3/2
    )
  }}
With the same parameters as before we get \(F_{\num[exponent-mode=scientific,round-mode=figures,round-precision=1, print-unity-mantissa=false]{\sphfidexdelta}} \geq \num[round-mode=places,round-precision=5]{\sphfidexx}\),
which is significantly better than what we obtained with the simplified
general bounds for BBP homodyne. The expressions derived in
App.~\ref{app:appendix3} can in principle be used to obtain improved
general bounds for the relevant regimes.

\subsection{Estimating values of the characteristic function.}
Next we consider measuring the characteristic function of the Wigner
distribution of a mode at a given point. The characteristic function
of the Wigner function of the mode of quadrature \(\hat{q}\) is
\(\chi(\gamma)=\langle D_{\gamma\bm{\alpha}}\rangle\) for complex
\(\gamma\). For \(\gamma=0\), \(\chi(\gamma)=1\).  If it were possible
to implement an ideal quadrature measurement, we would estimate
\(\chi(\gamma)\) by repeatedly measuring
\(\hat{q}_{(\gamma/|\gamma|)\bm{\alpha}}\) and averaging
\(e^{-i|\gamma|\lambda}\) over the observed eigenvalues \(\lambda\).
Instead, we perform this average with the pulsed homodyne measurement
outcomes.  To determine the difference between \(\chi(\gamma)\) and
the computed average with pulsed homodyne measurements in the limit of
a large number of measurements, we apply Thm.~\ref{thm:mainbnd}
Eq.~\eqref{thm:eq:mains}. In this equation, replace \(s\) and
\(\bm{\alpha}\) there by \(|\gamma|\) and
\((\gamma/|\gamma|)\bm{\alpha}\) here. Because the pulsed homodyne
observable \(\hat{q}_{\delta}\) for \(\delta>0\) has discrete
spectrum, it can in principle be implemented directly without added
noise.  Unbiased added noise does not affect the error in the mean. In
particular, for determining the difference between the expectations of
the target observable and the measured observable, we do not have to
account for the added noise from, for example, subtracted photodiode
measurements. Here, we do not consider the reduction in
signal-to-noise from added noise.  The difference between
\(\chi(\gamma)\) and the corresponding pulsed homodyne average is
\begin{align}
  |\langle e^{-i\hat{q}_{\delta}|\gamma|}\rangle- \chi(\gamma)|
  &=
    |\langle e^{-i\hat{q}_{\delta}|\gamma|}\rangle - \langle e^{-i\hat{q}|\gamma|}\rangle|^{2}
    \notag\\
  &\leq
    |\qty(e^{-i\hat{q}_{\delta}|\gamma|} - e^{-i\hat{q}|\gamma|})\ket{\psi}|^{2}
    \notag\\
  &\leq
    4\delta^{2} |\gamma|^{2}
    \qty((1+|\gamma|^{2})\langle \hat{n}_{\tot}\rangle + |\gamma|^{2})
    ,
\end{align}
where we applied Eq.~\eqref{app:eq:a-bexp}.  Thus, to ensure a good estimate of the characteristic function at \(\gamma\) it suffices to choose \(\delta\) so that
\(\delta^{2}|\gamma|^{4}(\langle \hat{n}_{\tot}\rangle+1)\) and
\(\delta^{2}|\gamma|^{2}\langle \hat{n}_{\tot}\rangle\) are small. 
As a
function of \(\gamma\), the LO photon number required by this inequality scales as \(|\gamma|^{4}\).
\newcommand{\chfgamma}{20}
For \(\delta=\num[exponent-mode=scientific,round-mode=figures,round-precision=1, print-unity-mantissa=false]{\sphfidexdelta}\)
and \(\langle\hat{n}_{\tot}\rangle\leq \num[round-mode=places,round-precision=1]{\sphfidexn}\), we can measure the characteristic
function with error less than
\(
\num[round-mode=places,round-precision=3]{
  \fpeval{4*\sphfidexdelta^2*\chfgamma^2*((1+\chfgamma^2)*\sphfidexn + \chfgamma^2)}
  }
\)
for
\(|\gamma|\leq \num[round-mode=places,round-precision=1]{\chfgamma}\),
not accounting
for added noise in the measurement.

If we use the bound from Eq.~\eqref{thm:eq:mains_sph} for standard pulsed
homodyne, we get
\begin{align}
  |\langle e^{-i\hat{q}_{\delta}|\gamma|}\rangle- \chi(\gamma)|
  &\leq
    \delta^{2} |\gamma|^{2}
    \qty(\qty(2+\frac{4}{9}\delta^{2}|\gamma|^{4})\langle \hat{n}_{\tot}\rangle +
    \frac{2}{9}\delta^{2}|\gamma|^{4}+ \frac{1}{2}|\gamma|^{2}).
\end{align}
For the same bound of \num[round-mode=places,round-precision=1]{\chfgamma} on \(|\gamma|\), we get a significantly smaller error bound of
\(
\num[round-mode=places,round-precision=4]{
  \fpeval{\sphfidexdelta^2*\chfgamma^2*((2+(4/9)*\sphfidexdelta^{2}\chfgamma^4)*\sphfidexn
    + (2/9)*\sphfidexdelta^{2}\chfgamma^4+ (1/2)*\chfgamma^{2})}
}
\). This is likely much lower than the signal-to-noise in the measurement.
\edef\chfgammah{40}
For \(|\gamma|\leq\num[round-mode=places,round-precision=1]{\chfgammah}\),
the error bound is
\(\num[round-mode=places,round-precision=4]{
  \fpeval{
    \sphfidexdelta^2*\chfgammah^2*
    ((2+(4/9)*\sphfidexdelta^{2}\chfgammah^4)*\sphfidexn
    + (2/9)*\sphfidexdelta^{2}\chfgammah^4+ (1/2)*\chfgammah^{2})}
}
\).

\subsection{Estimating moments.}
We can also perform the above exercise for measuring the \(k\)'th
moment of the quadrature. The homodyne measurement has the correct
expectation for \(k=1\). For \(k=2\), the difference between the
expectations can be determined directly and is given by
\(\delta^{2}\langle \hat{n}_{\tot}\rangle\), see for example
Ref.~\cite{bbp1}.  Here we consider even \(k\) with \(k\geq 4\), so
that for real \(x\), \(|x|^{k}=x^{k}\).  Since the function \(f_{0}(x)=x^{k}\) is not
bounded, it is necessary to regularize it by applying
Prop.~\ref{prop:regfexp}. There are many options for
\(f(x)\). Consider a \(k\)-times differentiable function \(g(x)\) such
that \(\frac{d}{dx^{k}}g(x)\big|_{x=0} = k!\) and \(g(x)\) is the
Fourier transform of \(\tilde g(\kappa)\), where \(\tilde g(\kappa)\)
is the sum of an integrable function and a multiple of the delta
function at \(0\). Equivalently, \(d\kappa\,\tilde g(\kappa)\) is the
sum of a point measure at \(0\) and a bounded complex measure that is
absolutely continuous with respect to Lebesgue measure.  Then
\(f_{\lambda}(x)=g(\lambda x)/\lambda^{k}\) converges to \(x^{k}\)
uniformly on bounded intervals as \(\lambda\) goes to \(0\). Let
\(\Delta(x) = x^{k}-g(x)\). Then
\begin{align}
f_{0}(x)-f_{\lambda}(x)&= \frac{1}{\lambda^{k}}\Delta(\lambda x),
                      \notag\\
  d\kappa\tilde f_{\lambda}(\kappa)
  &= d\kappa\frac{1}{\lambda^{k+1}}\tilde g(\kappa/\lambda),
    \notag\\
  \int_{\kappa}d\kappa |\kappa|^{l}|\tilde f_{\lambda}(\kappa)|
  &=\frac{1}{\lambda^{k+1}}\int_{\kappa}d\kappa |\kappa|^{l}|\tilde g(\kappa/\lambda)|
    \notag\\
  &=\frac{1}{\lambda^{k-l}}\int_{\kappa}d\kappa |\kappa|^{l}|\tilde g(\kappa)|.
\end{align}
To determine the constants needed to apply Prop.~\ref{prop:regfexp} it
suffices to determine bounds
\(\tilde g_{l}\geq \int_{\kappa}d\kappa |\kappa|^{l}|\tilde g(\kappa)|\)
for \(l=0,2\). With these bounds, applying the proposition gives
\begin{align}
  \qty|\langle f(\hat{q}_{\delta})\rangle - \langle\hat{q}^{k}\rangle|^{2}
  &\leq
    \frac{8}{\lambda^{2(k-1)}}
    \delta^{2}(\tilde g_{0}+\lambda^{2}\tilde g_{2})
    \tilde g_{2}(\langle\hat{n}_{\tot}\rangle+1)
    + \frac{2}{\lambda^{2k}}\langle \Delta(\lambda \hat{q})^{2}\rangle
    .
    \label{eq:momenterror1}
\end{align}
It remains to choose \(g(x)\) and \(\lambda\).    Consider
\(g(x)=x^{k}/(1+x^{k})\). Then, because \(k\) is even,
\(\Delta(x)=x^{2k}/(1+x^{k}) \leq x^{2k}\). Therefore, the second summand in
the bound of Eq.~\eqref{eq:momenterror1} is bounded by
\begin{align}
  \frac{2}{\lambda^{2k}}\langle \Delta(\lambda \hat{q})^{2}\rangle
  &\leq
    2\lambda^{2k}\langle \hat{q}^{4k}\rangle.
\end{align}
We have \(g(x)=1-1/(1+x^{k})\) so that \(\tilde g(\kappa)\) is a delta function
at \(0\) minus the inverse Fourier transform \(\tilde h(\kappa)\) of \(h(x)=1/(1+x^{k})\). We have
\begin{align}
  \int_{\kappa} d\kappa |\kappa|^{l}|\tilde g(\kappa)|
  &=
  \delta_{0,l}+\int_{\kappa}d\kappa|\kappa|^{l}|\tilde h(\kappa)|,
\end{align}
where \(\delta_{0,l}\) is the Kronecker
delta.  The inverse Fourier transform of \(h(x)\) can be obtained from
its partial fractions expansion. The function \(h(x)\) has poles at
the non-degenerate roots of \(p(x)=1+x^{k}\), which are at
\(x_{j}=e^{i\pi2j/k}\) for
\(j=-(k-1)/2,-(k-3)/2,\ldots,-1/2,1/2,\ldots,(k-1)/2\), since \(k\) is
even. We have \(x_{j}=x_{-j}^{*}\).  Thus
\(h(x)=\sum_{j=0}^{k-1}\frac{c_{j}}{x-x_{j}}\), where the coefficients
\(c_{j}\) are given by
\(c_{j}=\frac{1}{p'(x_{j})} = \frac{1}{k
  x_{j}^{k-1}}=\frac{-x_{j}}{k}\).  For \(1/2\leq j\leq (k-1)/2\), the
absolute value of the inverse Fourier transform of \(x_{j}/(x-x_{j})\)
is \(\theta(-\kappa)e^{\kappa \Im(x_{j})}\) where \(\theta(y)\) is
\(0\) for \(y\leq 0\) and \(1\) otherwise. Similarly, for
\(-(k-1)/2\leq j\leq -1/2\), this absolute value is
\(\theta(\kappa)e^{-\kappa \Im(x_{j})}\).  Summing these absolute
values over \(j\) and dividing by \(k\) gives
\begin{align}
  |\tilde h(\kappa)|
  &\leq \frac{1}{k}\sum_{j=1/2}^{(k-1)/2}
    e^{-|\kappa| \Im(x_{j})}.
\end{align}
For \(y>0\), we have \(\int_{\kappa\in\rls}e^{-|\kappa| y} = 2/y\)
and \(\int_{\kappa\in\rls}\kappa^{2}e^{-|\kappa|y} = 4/y^{3}\).
For \(1/2\leq j\leq (k-1)/2\), \(\Im(x_{j}) = \sin(2\pi j/k)\geq \sin(\pi/k)\).
Consequently
\begin{align}
  \int_{\rls}|\tilde h(\kappa)|
  &\leq \frac{2}{\sin(\pi/k)},
    \notag\\
  \int_{\rls}\kappa^{2}|\tilde h(\kappa)|
  &\leq \frac{4}{\sin(\pi/k)^{3}}.
\end{align}
Substituting back into Eq.~\eqref{eq:momenterror1}
\begin{align}
  \qty|\langle f(\hat{q}_{\delta})\rangle - \langle\hat{q}^{k}\rangle|^{2}
  &\leq
        \frac{32}{\sin(2\pi/k)^{3}\lambda^{2(k-1)}}
    \delta^{2}\qty(1+\frac{2}{\sin(\pi/k)}+\lambda^{2}\frac{4}{\sin(\pi/k)^{3}})(\langle\hat{n}_{\tot}\rangle+1)
    \notag\\
  &\hphantom{{}\leq{}\;\;\;}
        + 2\lambda^{2k}\langle \hat{q}^{4k}\rangle
    .
    \label{eq:momenterror2}
\end{align}
When applying this inequality, we do not know the value of
\(\langle \hat{q}^{4k}\rangle\), but we can replace this value by a
conservatively estimated upper bound
\(q_{4k}\geq \langle\hat{q}^{4k}\rangle\).  We assume that
\(q_{4k}\geq 1/2\) and minimize the right-hand side of the inequality
over positive \(\lambda\). 
For the estimated minimum to be less than \(1\) requires
that \(2\lambda^{2k}q_{4k}< 1\), which implies \(\lambda<1\).
Therefore \(2\lambda^{2k}q_{4k}\leq 2\lambda^{2(k-1)}q_{4k}\).
Since \(k\geq 4\), \(\sin(\pi/k)\leq 1/\sqrt{2}\), so
\begin{align}
  1+\frac{2}{\sin(\pi/k)}+\lambda^{2}\frac{4}{\sin(\pi/k)^{3}}
  &\leq \frac{1/(2\sqrt{2}) + 1 + 4\lambda^{2}}{\sin(\pi/k)^{3}}
    \notag\\
  &\leq \frac{6}{\sin(\pi/k)^{3}},
\end{align}
where the denominator of the last fraction was chosen to reduce
expression complexity at the cost of a more conservative bound.  We
can set
\begin{align}
  \lambda^{2(k-1)}
  &=\frac{4\sqrt{6}\;\delta\sqrt{\langle\hat{n}_{\tot}\rangle+1}}
    {\sin(\pi/k)^{3}\sqrt{q_{4k}}}
\end{align}
to obtain
\begin{align}
  \qty|\langle f(\hat{q}_{\delta})\rangle - \langle\hat{q}^{k}\rangle|^{2}
  &\leq 16\sqrt{6}\;\delta 
    \frac{\sqrt{\langle\hat{n}_{\tot}\rangle+1}\sqrt{q_{4k}}}
    {\sin(\pi/k)^{3}}.
    \label{eq:momenterror4}
\end{align}
The square of the moment estimation error according to
Eq.~\eqref{eq:momenterror4} scales as \(\delta\), while the explicit
operator expansion for the moment of the pulsed homodyne observable
indicates a scaling of order \(\delta^{2}\) for
\(|\langle \hat{q}_{\delta}^{k}\rangle - \langle\hat{q}^{k}\rangle|\),
for example, see Ref.~\cite{bbp1}. For sufficiently small \(\delta\),
the estimate is therefore poor. The explicit operator expansion
involves large sums of operators of high degree in the mode operators
whose expectations can be very large. As a result it is difficult to
apply the explicit expansion directly, and for moderately small
\(\delta\), the higher order terms can contribute significantly.  The
bounds above, while very conservative, have the advantage of being
simple to apply, requiring only known bounds on
\(\langle \hat n\rangle\) and \(\langle \hat{q}^{4k}\rangle\). For
example, with \(\delta=10^{-4}\) and
\(\langle \hat{n}_{\tot}\rangle=5\), for \(k=4\) and \(k=6\) the error
according to the bound of Eq.~\eqref{eq:momenterror4} becomes
\(0.027\sqrt{q_{16}}\) and \(0.077\sqrt{q_{24}}\), respectively.
\ignore{
mb4 = 16*sqrt(6)*10^-4*sqrt(5+1)/(sin(pi/4)^3)
mb6 = 16*sqrt(6)*10^-4*sqrt(5+1)/(sin(pi/6)^3)
}

We remark that because
the fundamental inequality leading to the bound of
Eq.~\eqref{eq:momenterror4} is based on Prop.~\ref{prop:regfexp}, it
is not just a bound on the difference between expectations, but a
bound on the difference between the states obtained by applying
\(f(\hat{q}_{\delta})\) and \(f_{0}(\hat{q})\). This is one reason for
the conservative power of \(\delta\).  It may also be possible to
choose better approximations \(f(x)\) of \(f_{0}(x)\) to improve the
scaling of the bound.

We do not perform the exercise of computing the error for moment
measurements based on the bounds for standard pulsed homodyne in
App.~\ref{app:standardhomodyne}. This would require bounding the
fourth moment of \(\tilde g(\kappa)\) and involve a more difficult
optimization over \(\lambda\), but would likely significantly improve
the error bounds.

\subsection{CV teleportation.}
For an example involving conditional operations, we consider
continuous variable (CV) teleportation as described
in~\cite{braunstein1998teleportation,furusawa1998teleportation}.  We
describe how to implement the results for conditional operations of
Sect.~\ref{sec:condoperations} to obtain bounds on the three
contributions to infidelity of measurement-conditional operations
given in the chain of Eq.~\eqref{eq:condchain}.
We do not instantiate the bounds numerically here.  These
bounds apply for generic measurement noise models.  For measurement
noise that can be treated as displacement noise before the conditional
operations, it is possible to apply the simpler measurement fidelity
results of Sect.~\ref{sec:sconv} as described and numerically
implemented in the discussion of CV error correction below.

We first consider the middle step in the chain of
Eq.~\eqref{eq:condchain}.  In CV teleportation, there are two
quadrature measurements on independent modes.  For each quadrature
measurement, a conditional displacement of the form
\(\hat{U}_{B}(x)=e^{i\xi x \hat{p}_{B}}\) is applied, where
\(\hat{p}_{B}\) is one of two conjugate canonical quadratures of a
mode of \(B\), which are normalized as described after
Eq.~\eqref{eq:quaddef}. The variable \(x\) is in the spectrum of
\(\hat{q}_{\delta}\), which is normalized according to our conventions
and therefore a factor of \(\sqrt{2}\) larger than the corresponding
variable for canonical quadratures.  These conditional displacements
have \(v(x)=\xi x\) with \(\xi>0\) and \(w(x)=0\) when applying
Prop.~\ref{prop:conddisp}. The ideal scale factor \(\xi\) is \(\xi=1\)
because of the normalization of \(\hat{q}\).  In the references, this
factor is \(\sqrt{2}\) for canonical quadratures.  The factor \(\xi\)
may be modified by a gain to compensate for experimental effects.
Because \(w(x)=0\), we have \(w_{l}=\tilde w_{l}=0\). Therefore the
operators defined in Prop.~\ref{prop:conddisp} are \(\hat{P}_{B}=1\)
and \(\hat{Q}_{B}=2\xi^{2}\hat{p}_{B}^{2}\).  The difference between
the state \(\ket{\psi_{c,0}}\) obtained by applying the ideal
quadrature-conditional displacement to initial state \(\ket{\psi}\)
and the state \(\ket{\psi_{c,\delta}}\) obtained by applying the
displacement conditional on the operator \(\hat{q}_{\delta}\) for
homodyne measurement with LO amplitude \(1/\delta\) is therefore given
by
\begin{align}
          \qty| \ket{\psi_{c,\delta}}-\ket{\psi_{c,0}}|^{2}
  &\leq 8\delta^{2}\xi^{2}
    \bra{\psi}(1+2\hat{p}_{B}^{2}\xi^{2})\hat{p}_{B}^{2}
    (\hat{n}_{\tot}+1)
    \ket{\psi}
    .
\end{align}
This bound applies to one of the two required conditional
displacements.  To obtain a bound on both of them, let \(\hat{x}_{B}\)
and \(\hat{p}_{B}\) be the two conjugate canonical quadratures satisfying
\([\hat{x}_{B},\hat{p}_{B}] = i\) of the mode of \(B\). The first
displacement applies \(\hat{U}_{B}(x)\) conditioned on
\(\hat{q}_{a,\delta}\), and the second applies
\(\hat{V}_{B}(x)=e^{i\xi\hat{x}_{B}}\) conditioned on
\(\hat{q}_{b,\delta}\), where \(\hat{q}_{a,\delta}\) and
\(\hat{q}_{b,\delta}\) are the observables for homodyne measurement
with LO amplitude \(1/\delta\) associated with quadratures
\(\hat{q}_{a}=\hat{q}_{a,0}\) and \(\hat{q}_{b}=\hat{q}_{b,0}\)
of orthogonal modes.  Let \(\ket{\psi_{D,\delta}}\) denote
the state after applying the two displacements conditioned on
\(\hat{q}_{a,\delta}\) and \(\hat{q}_{b,\delta}\). We wish to bound
\begin{align}
  \qty| \ket{\psi_{D,\delta}}-\ket{\psi_{D,0}}|
  &=
    \qty| e^{i\xi \hat{q}_{b,\delta}\hat{x}_{B}}e^{i\xi\hat{q}_{a,\delta}\hat{p}_{B}}\ket{\psi} - e^{i\xi \hat{q}_{b}\hat{x}_{B}}e^{i\xi\hat{q}_{a}\hat{p}_{B}}\ket{\psi}|
    \notag\\
  &\leq
    \qty|e^{i\xi\hat{q}_{b,\delta}\hat{x}_{B}}\qty(e^{i\xi\hat{q}_{a}\hat{p}_{B}}\ket{\psi})
    - e^{i\xi\hat{q}_{b}\hat{x}_{B}}\qty(e^{i\xi\hat{q}_{a}\hat{p}_{B}}\ket{\psi})|
    \notag\\
  &\hphantom{\leq\,|\,{}}
    + \qty| e^{i\xi\hat{q}_{b,\delta}\hat{x}_{B}}\qty(e^{i\xi\hat{q}_{a,\delta}\hat{p}_{B}}\ket{\psi}
    -e^{i\xi\hat{q}_{a}\hat{p}_{B}}\ket{\psi})|.
\end{align}
For the second summand, the previously obtained bound applies as follows:
\begin{align}
  \qty| e^{i\xi\hat{q}_{b,\delta}\hat{x}_{B}}\qty(e^{i\xi\hat{q}_{a,\delta}\hat{p}_{B}}\ket{\psi}
  -e^{i\xi\hat{q}_{a}\hat{p}_{B}}\ket{\psi})|^{2}
  &=\qty|e^{i\xi\hat{q}_{a,\delta}\hat{p}_{B}}\ket{\psi}
    -e^{i\xi\hat{q}_{a}\hat{p}_{B}}\ket{\psi}|^{2}
    \notag\\
  &\leq 8\delta^{2}\xi^{2}
    \bra{\psi}(1+2\hat{p}_{B}^{2}\xi^{2})\hat{p}_{B}^{2}
    (\hat{n}_{a,\tot}+1)
    \ket{\psi},
\end{align}
where \(\hat{n}_{a,\tot}\) is the total photon number of the mode of \(\hat{q}_{a}\).
For the first summand, the operator
\(e^{i\xi\hat{q}_{a}\hat{p}_{B}}\) does not affect the LO modes, so
preserves vacuum on the LO modes. The bound of
Prop.~\ref{prop:conddisp} therefore applies, giving 
\begin{align}
  \qty|e^{i\xi\hat{q}_{b,\delta}\hat{x}_{B}}\qty(e^{i\xi\hat{q}_{a}\hat{p}_{B}}\ket{\psi})
  - e^{i\xi\hat{q}_{b}\hat{x}_{B}}\qty(e^{i\xi\hat{q}_{a}\hat{p}_{B}}\ket{\psi})|^{2}
  \hspace*{-2in}&\notag\\
  &\leq 
    8\delta^{2}\xi^{2}
    \bra{\psi}e^{-i\xi\hat{q}_{a}\hat{p}_{B}}
    (1+2\hat{x}_{B}^{2}\xi^{2})\hat{x}_{B}^{2}
    (\hat{n}_{b,\tot}+1)
    e^{i\xi\hat{q}_{a}\hat{p}_{B}}\ket{\psi}.
\end{align}
Since \(e^{-i\xi\hat{q}_{a}\hat{p}_{B}}\hat{x}_{B}e^{i\xi\hat{q}_{a}\hat{p}_{B}}
= \hat{x}_{B}- \xi\hat{q}_{a}\), we now have
\begin{align}
  \qty| \ket{\psi_{D,\delta}}-\ket{\psi_{D,0}}|^{2}
  &\leq 16\delta^{2}\xi^{2}\bra{\psi}\Big(
    (1+2(\hat{x}_{B}- \xi\hat{q}_{a})^{2}\xi^{2})
    \notag\\
    &\hphantom{\leq 16\delta^{2}\xi^{2}\bra{\psi}\Big((1}
    \times(\hat{x}_{B}- \xi\hat{q}_{a})^{2}
    (\hat{n}_{b,\tot}+1)
    \notag\\
  &\hphantom{\leq 8\delta^{2}\xi^{2}}
    + 
    (1+2\hat{p}_{B}^{2}\xi^{2})\hat{p}_{B}^{2}
    (\hat{n}_{a,\tot}+1)
    )\Big)\ket{\psi}
\end{align}
with the extra factor of two accounting for the cross terms from
squaring that are dropped. To apply the bound it suffices to use
conservative upper bounds for the expectations of the products of
quadrature and number operators that are obtained when multiplying out
the terms in the bound. The bound can be expressed in terms of
correlated moments of number operators by bounding \(\hat{x}_{B}^{2}\)
and \(\hat{p}_{B}^{2}\) above by \(2\hat{n}_{B}+1\), applying
Eq.~\eqref{app:eq:quadsq-hatn} with the normalization for conjugate
quadratures.

Next we consider the last step of Eq.~\eqref{eq:condchain}.  To obtain
its contribution to infidelity, we can apply
Prop.~\ref{prop:conditionalD}.  We can model homodyne measurements
with photodiodes such as in Ref.~\cite{hansen2001ultrasensitive} with
\(h(z|x)=e^{-(z-x)^{2}/(2\delta^{2}\sigma^{2})}/(\sqrt{2\pi}\delta\,\sigma)\)
in Prop.~\ref{prop:conditionalD}. Accordingly, in this proposition,
\(K_{1}=\xi\) and \(K_{2}=\delta^{2}\sigma^{2}\).
\begin{align}
  \qty| \ket{\tilde\psi_{m,\delta}}-\ket{\psi_{m,\delta}}|^{2}
  &\leq
    2\xi^{2}\delta^{2}\sigma^{2}\langle (\hat{n}_{\tot}+1/2 )^{2}\rangle^{1/2}.
\end{align}
Because of the two measurements, the distance from this expression is
added twice to the bound on the distance in the previous paragraph, once for
\(\hat{n}_{\tot}=\hat{n}_{a,\tot}\) and once for
\(\hat{n}_{\tot}=\hat{n}_{b,\tot}\).

For the contribution from the first step of the chain of
Eq.~\eqref{eq:condchain} we can apply Thm.~\ref{thm:qmeas} to the two
measurements with the conditional displacements applied perfectly and
unitarily, conditional on the target quadratures before coupling to
the apparatus. For applying Thm.~\ref{thm:qmeas}, the state being
measured corresponds to the state after ``teleporting'' unitarily
and does not actually occur in a teleportation experiment.  For
Gaussian added noise, we can apply Eq.~\eqref{eq:outstate_fid} twice,
where \(\langle \hat{n}_{\tot}\rangle\) must be estimated for the
state obtained after the unitary conditional displacements and each of
the two measurement modes.

We do not instantiate the above bounds numerically here because for
the example of a balanced pair of photodiodes adding Gaussian noise,
this added noise is equivalent to displacement noise before the
measurement.  If such displacement noise has already been accounted
for in the error budget, it is possible to bypass the above bounds and
directly apply the results of Sect.~\ref{sec:sconv} or
App.~\ref{app:standardhomodyne} for post-measurement states. This
technique is used below to analyze CV error correction with a closely
related measurement configuration, and the bounds obtained there can
be applied to teleportation with few modifications.

\subsection{CV error correction.}\label{sect:cvec}
We consider Gottesman-Kitaev-Preskill (GKP) error correction for
qubits encoded in a quantum mode.  There are two standard ways to
implement GKP error correction, Steane error correction and teleported
error correction.  Versions of these protocols are shown in Fig.~10 of
Ref.~\cite{grimsmo2021quantum}.  The protocol for Steane error
correction involves the same measurements and shifts as for
teleportation, so the analysis for teleportation given above can be
applied given bounds on expectations of correlated photon number
moments computed for the appropriate input states. Teleported
error correction does not require explicitly correcting
shifts. However, the logical Pauli frame is adjusted according to the
measurement outcomes, which affects future operations. For the purpose
of applying the fidelity analysis for conditional operations, it is
necessary to apply outcome-conditional logical operations to reset the
frame and remove the dependence of future operations on the
measurement outcomes. Because these logical operations are not simple
linear shifts, a direct application of our bounds requires
regularization with Lem.~\ref{lem:condreg}.  We do not perform this
exercise here.  Instead we take advantage of the equivalence of added
noise in the measurement outcomes to displacement noise on the
incoming GKP qubit.  In particular, for GKP error correction based on
quadrature measurements, added Gaussian noise arising from the use of
balanced photodiodes as detectors according to
Eq.~\eqref{eq:gaussian_added_noise} corresponds to added Gaussian
displacements noise.  Such noise can be included in the analysis of
the quality of the error-correction process with ideal
quadrature-conditional operations. Therefore, the additional
infidelity from using homodyne measurements with finite LO amplitude
can be accounted for by the measurement fidelity analysis of
Sect.~\ref{sec:sconv}. By monotonicity of fidelity, subsequent
operations conditional on the measurement outcomes do not increase
this infidelity.  In the remainder of this section we formalize this
approach to bounding the added infidelity in GKP error correction due
to finite LO amplitude when implementing the homodyne measurements.

Following the notation introduced at the beginning of
Sect.~\ref{sec:condoperations}, let \(\ket{\psi}\) be the initial
state of modes \(\bm{a}\), \(B\) and the LO modes \(\bm{b}\), where
\(\ket{\psi}\) is vacuum on the LO modes .  In the case of GKP error
correction, modes of system \(B\) carry encoded information, while
modes \(\bm{a}\) contain syndrome information to be used for error
correction or frame tracking, where the currently relevant syndrome
information is carried by the quadrature \(\hat{q}\).  In principle,
the ideal error correction procedure or future frame-dependent
unitaries implement unitaries conditional on \(\hat{q}\). Let
\(\hat{p}\) be a quadrature of the mode of \(\hat{q}\) such that the
displacement \(\hat{D}_{x}=e^{i x\hat{p}}\) satisfies
\(\hat{D}_{-x}\hat{q}\hat{D}_{x}=\hat{q}-x\).  For
\(\hat{q}=\hat{q}_{\alpha}\) as defined in
Sect.~\ref{sec:preliminaries}, we have
\(\hat{p}=\hat{q}_{i\alpha/(2|\alpha|^{2})}\).  In the next two
paragraphs, we show that additive measurement noise with probability
distribution \(\mu(x)dx\) is equivalent to random displacements
\(\hat{D}_{x}\) acting on the mode of \(\hat{q}\), where \(x\) has
probability distribution \(\mu(x)dx\).  We can assume without loss of
generality that \(\bm{a}\) consists of only one mode, mode \(a\), with
quadratures \(\hat{q}\) and \(\hat{p}\).

In general, a dilated, unitary model for the displacement noise is
obtained by adding the quantum system \(M\) consisting of a
one-dimensional system with position operator \(\hat{x}_{M}\) and
initial wavefunction \(\ket{\sqrt{\mu(\bm{\cdot})}}\) in the position
basis for \(\hat{x}_{M}\). The system \(M\) is coupled to the mode of
\(\hat{q}\) by \(e^{i\hat{x}_{M}\hat{p}}\).  The random displacement
noise is obtained after tracing out \(M\). Future conditional
operations applied after displacement noise are of the form
\(\hat{U}(\hat{q})\) and applied after the coupling to \(M\).  Since
\(\hat{q}e^{i\hat{x}_{M}\hat{p}}=
e^{i\hat{x}_{M}\hat{p}}(\hat{q}-\hat{x}_{M})\), the conditional
operations \(\hat{U}(\hat{q})\) after the coupling are equivalent to
the corresponding operations \(\hat{U}(\hat{q}-\hat{x}_{M})\)
conditioned on \(\hat{q}-\hat{x}_{M}\) before the coupling.  Given
that the mode \(a\) and the system \(M\) play no further role
besides these conditional operations, the dilated model is equivalent
to initializing \(M\) in state \(\ket{\sqrt{\mu(\bm{\cdot})}}\), and
applying the needed conditional operations conditional on
\(\hat{q}-\hat{x}_{M}\), without any other operations applied to mode
\(a\) or apparatus \(M\).
  
We compare random displacement noise before \(\hat{q}\)-conditional
operations to an instance of the unitary measurement model of
Sect.~\ref{sec:functana} with operations conditioned on the apparatus,
whose quantum system is also \(M\).  For the unitary measurement
model, we prepare \(M\) with the same initial wavefunction
\(\ket{\sqrt{\mu(\bm{\cdot})}}\). But \(M\) is coupled to the mode of
\(\hat{q}\) by \(e^{i\hat{p}_{M}\hat{q}}\), where \(\hat{p}_{M}\) is
the momentum operator for \(M\) scaled so that the displacement
\(\hat{D}_{y}=e^{iy\hat{p}_{M}}\) satisfies
\(\hat{D}_{-y}\hat{x}_{M}\hat{D}_{y}= \hat{x}_{M}-y\).  Future
operations are of the form \(\hat{U}(-\hat{x}_{M})\), conditioned on
\(-\hat{x}_{M}\) rather than \(\hat{q}\). Here, to simplify the
desired correspondence, we chose the apparatus coupling so that the
measured quantity is encoded in the eigenvalues of \(-\hat{x}_{M}\)
rather than \(\hat{x}_{M}\).  Since
\((-\hat{x}_{M})e^{i\hat{p}_{M}\hat{q}} =
e^{i\hat{p}_{M}\hat{q}}(-\hat{x}_{M}+\hat{q})\), these conditional
operations are equivalent to operations
\(\hat{U}(\hat{q}-\hat{x}_{M})\) conditioned on
\(\hat{q}-\hat{x}_{M}\) before the apparatus coupling. The situation
is therefore equivalent to the dilated random displacement model,
except for the irrelevant different processing of mode \(a\) and
apparatus \(M\) after the conditional operations acting before the
coupling.

The above shows that added displacement noise in measurement according
to the unitary measurement model with subsequent apparatus-conditional
unitaries is equivalent to added displacement noise before the
corresponding quadrature-conditional unitaries. This added noise can
be accounted for in the analysis of the performance of GKP error
correction. Added infidelity due to the difference between
\(\hat{q}_{\delta}\) and \(\hat{q}\) at finite LO amplitude can then
be attributed to the infidelity between the global state after
apparatus coupling with the ideal coupling \(e^{i\hat{p}_{M}\hat{q}}\)
and the state obtained with the coupling
\(e^{i\hat{p}_{M}\hat{q}_{\delta}}\), where the apparatus initial
state reflects the detector added noise. The strategy is then to first
consider GKP error correction when the incoming state has been
affected by physical displacement noise with a given variance.  We add
to the physical noise the effective displacement noise contributed by
noisy measurements of the ideal quadratures and bound the infidelity
of the logical information for ideal error correction, where in the
ideal error correction, operations are conditional unitaries
conditioned on the ideal quadratures.  Second, we bound the infidelity
between the state obtained with the ideal apparatus coupling, and the
state obtained with the apparatus coupling with the homodyne
observable at LO amplitude \(1/\delta\), where the apparatus initial
state reflects the detector added noise. For this bound, we can use
Eqs.~\eqref{eq:outstate_fid}. For Gaussian added noise described by
\(\mu(x)=\frac{1}{r\sqrt{2\pi}}e^{-x^{2}/(2r^{2})}\), we can apply
Eq.~\eqref{eq:outstate_fid_gauss}.  For a worst-case bound on the
overall infidelity, one can then combine the two infidelity bounds.
Instead, we determine the LO amplitude and resolution required to meet
separate infidelity budgets for the two contributions to infidelity.
The resolution-amplitude tradeoffs for constant-added-noise
photodiodes discussed after Eq.~\eqref{eq:outstate_fid_gauss} apply
here, but now become a tradeoff between added displacement noise that
must be handled by the GKP codes, and infidelity in the
post-measurement state due to finite LO amplitude.

\begin{figure}

\begin{tikzpicture}[
    every path/.style={very thick},
    cnot/.style={draw, circle, inner sep=0pt, minimum size=12pt, fill=white, very thick},
    label node above/.style={inner sep=2pt, font=\small, fill=none, above=6pt},
    label node below/.style={inner sep=2pt, font=\small, fill=none, below=6pt}
]

    \def\rowsep{1.8} 
    \def\colsep{2.5}
    \def\wirelength{8.5}

    \foreach \i in {1,2,3} {
        \coordinate (L\i) at (0, -\i*\rowsep);
        \coordinate (R\i) at (\wirelength, -\i*\rowsep);
    }

    \draw (L1) -- (R1);
    \node[left=5pt] at (L1) {$\ket{\psi_{L,\text{in}}}_{a}$};
    \node[label node above] at ($(L1)!0.1!(R1)$) {$\hat{q}_{a}$};
    \node[label node below] at ($(L1)!0.1!(R1)$) {$\hat{p}_{a}$};
    
    \node[label node above] at ($(L1)!0.65!(R1)$) {$\hat{q}_{a}'=\hat{q}_{a}$};
    \node[label node below] at ($(L1)!0.65!(R1)$) {$\hat{p}_{a}'=\hat{p}_{a}- \hat{p}_{b}$};

    \draw (L2) -- (R2);
    \node[left=5pt] at (L2) {$\ket{0_L}_{b}$};
    \node[label node above] at ($(L2)!0.1!(R2)$) {$\hat{q}_{b}$};
    \node[label node below] at ($(L2)!0.1!(R2)$) {$\hat{p}_{b}$};

    \node[label node above] at ($(L2)!0.7!(R2)$) {$\hat{q}_{b}'=\hat{q}_{a}+\hat{q}_{b}+\hat{q}_{c}$};
    \node[label node below] at ($(L2)!0.7!(R2)$) {$\hat{p}_{b}'=\hat{p}_{b}$};

    \draw (L3) -- (R3);
    \node[left=5pt] at (L3) {$\ket{+_L}_{c}$};
    \node[right=5pt] at (R3) {$\ket{\psi_{L,\text{out}}}_{c}$};
    \node[label node above] at ($(L3)!0.1!(R3)$) {$\hat{q}_{c}$};
    \node[label node below] at ($(L3)!0.1!(R3)$) {$\hat{p}_{c}$};

    \node[label node above] at ($(L3)!0.5!(R3)$) {$\hat{q}_{c}'=\hat{q}_{c}$};
    \node[label node below] at ($(L3)!0.5!(R3)$) {$\hat{p}_{c}'=\hat{p}_{c}-\hat{p}_{b}$};

    
    \draw ($(L3)+(1.5,0)$) -- ($(L2)+(1.5,0)$);
    \fill ($(L3)+(1.5,0)$) circle (2.8pt);
    \node[cnot] at ($(L2)+(1.5,0)$) {};
    \draw ($(L2)+(1.5,-0.2)$) -- ($(L2)+(1.5,0.2)$);
    \draw ($(L2)+(1.5-0.2,0)$) -- ($(L2)+(1.5+0.2,0)$);

    \draw ($(L1)+(3.5,0)$) -- ($(L2)+(3.5,0)$);
    \fill ($(L1)+(3.5,0)$) circle (2.8pt);
    \node[cnot] at ($(L2)+(3.5,0)$) {};
    \draw ($(L2)+(3.5,-0.2)$) -- ($(L2)+(3.5,0.2)$);
    \draw ($(L2)+(3.5-0.2,0)$) -- ($(L2)+(3.5+0.2,0)$);

    \foreach \y/\txt in {1/\hat{p}_{a}', 2/\hat{q}_b'} {
        \begin{scope}[shift={(R\y)}]
            \draw[fill=white] (0, 0.6) -- (0, -0.6) -- (0.6, -0.6) 
                arc (-90:90:0.5 and 0.6) -- cycle;
            \node at (0.45, 0) {$\txt$};
        \end{scope}
    }

  \end{tikzpicture}
  
  \caption{GKP teleported error correction. The top line carries mode
    \(a\) which contains the incoming GKP qubit, the bottom line
    carries mode \(c\) which contains the outgoing GKP qubit. The mode
    operators for the incoming modes and for the modes after coupling
    are shown with their relationships. The couplings are denoted by
    CNOT symbolic gates. These are active linear optical. For example,
    the one that couples mode \(c\) to mode \(b\) commutes with
    \(\hat{q}_{c}\) and \(\hat{p}_{b}\) and the quadratures
    \(\hat{q}_{b}''\) and \(\hat{p}_{c}''=\hat{p}_{c}'\) are given by
    \(\hat{q}_{b}+\hat{q}_{c}\) and \(\hat{p}_{c}-\hat{p}_{b}\) in
    terms of the incoming quadratures.  The diagram is based on
    Fig.~10 of Ref.~\cite{grimsmo2021quantum}.}
  \label{fig:gkptelerr}
\end{figure}

To estimate the measurement fidelities and account for the effect of
the displacement noise on GKP encoded information we consider
teleported error correction with the couplings shown in
Fig.~\ref{fig:gkptelerr}, see also Fig.~10 of
Ref.~\cite{grimsmo2021quantum}. Teleported error correction requires
coupling the mode containing a GKP encoded qubit to two ancillas
prepared independently in GKP encoded states, followed by measuring
quadratures. The measured quadratures and their associated
displacement noise can be expressed in terms of the quadratures of the
three modes \(a,b,c\) consisting of the incoming GKP qubit in mode
\(a\) and two ancilla GKP qubits in modes \(b\) and \(c\).  Let
\(\hat{q}_{u},\hat{p}_{u}\) be the incoming GKP conjugate quadratures
of mode \(u\) with \(u\in\{a,b,c\}\), where we use the normalization
conventions from Sect.~\ref{sec:preliminaries} for these
quadratures. The teleported error correction circuit is shown in
Fig.~\ref{fig:gkptelerr}.  The quadratures \(\hat{q}_{a}'\) and
\(\hat{p}_{b}'\) are intended to be measured and subsequently used for
error correction or Pauli frame tracking. We describe such actions as
conditional correction unitaries.  We analyze the effect of displacement noise
generated by the relevant quadratures, propagating the effect to the
incoming and outgoing GKP qubits, where we add the effect on the
outgoing GKP qubit to the error on incoming GKP qubit in the next
error-correction cycle.  From this analysis, we can determine the
performance of the error-correction circuit in terms of noise acting
only on the incoming GKP qubit.

To propagate the displacement noise to the incoming GKP qubit, we
consider each displacement-generating quadrature. For this purpose we
express the quadratures acting just before measurements in terms of
incoming quadratures as shown in Fig.~\ref{fig:gkptelerr}. The identities
represent equivalent actions in terms of the overall effect on the circuit.
\begin{itemize}
\item Displacements generated by \(\hat{q}_{b}\): We can write
  \begin{align}
    \hat{q}_{b}=\hat{q}_{b}' -
    \hat{q}_{a}-\hat{q}_{c}=\hat{q}_{b}'-\hat{q}_{a}-\hat{q}_{c}'.
  \end{align}
  Displacements generated by \(\hat{q}_{b}'\) have no effect on the
  circuit since \(\hat{q}_{b}'\) is measured. Thus, displacements
  generated by \(\hat{q}_{b}\) are equivalent to simultaneous
  displacements of the incoming GKP qubit according to \(\hat{q}_{a}\)
  and the outgoing GKP qubit by \(\hat{q}_{c}'\).
\item Displacements generated by \(\hat{p}_{b}\): We can write
  \begin{align}
    \hat{p}_{b}=-\hat{p}_{a}' + \hat{p}_{a}.
  \end{align}
  Displacements generated by \(\hat{p}_{a}'\) have no effect since \(\hat{p}_{a}'\)
  is measured. Thus, displacements generated by \(\hat{p}_{b}\) are equivalent
  to displacements of the incoming GKP qubit according to \(\hat{p}_{a}\).
\item Displacements generated by \(\hat{q}_{c}\):
  Since \(\hat{q}_{c}=\hat{q}_{c}'\), these displacements act directly on
  the outgoing GKP qubit.
\item Displacements generated by \(\hat{p}_{c}\):
  We can write
  \begin{align}
    \hat{p}_{c}=\hat{p}_{c}'-\hat{p}_{a}'+\hat{p}_{a}.
  \end{align}
  The contribution from \(\hat{p}_{a}'\) is invisible, that from \(\hat{p}_{a}\)
  acts on the incoming GKP qubit, and that from \(\hat{p}_{c}'\) on the outgoing
  GKP qubit.
\item Displacements generated by \(\hat{q}_{a}'\): This is measurement-added
  noise in the measurement of \(\hat{p}_{a}'\). Since \(\hat{q}_{a}'=\hat{q}_{a}\),
  this is a displacement of the incoming GKP qubit according to \(\hat{q}_{a}\).
\item Displacements generated by \(\hat{p}_{b}'\): This is measurement-added
  noise in the measurement of \(\hat{q}_{b}'\). We can write
  \begin{align}
    \hat{p}_{b}' = \hat{p}_{b}= -\hat{p}_{a}' + \hat{p}_{a},
  \end{align}
  which generates displacements equivalent to displacements of the
  incoming GKP qubit according to \(\hat{p}_{a}\).
\end{itemize}

We consider square GKP qubits. With respect to conjugate canonical
quadratures \(\hat{x}\) and \(\hat{p}\) normalized as described after
Eq.~\eqref{eq:quaddef} and satisfying \([\hat{x},\hat{p}] = i\), the
ideal but improper logical states \(\ket{s}\) for \(s=0,1\) are joint
improper eigenstates of the stabilizers \(e^{i2\sqrt{\pi}\hat{x}}\)
and \(e^{i2\sqrt{\pi}\hat{p}}\) that in the position space for
\(\hat{x}\) can be written as
\(\sum_{j=-\infty}^{\infty}\ket{(2j+s)\sqrt{\pi}}\).  Our bounds are
expressed in terms of quadrature operators \(\hat{q}\) that are
normalized according to \(\hat{q}=\sqrt{2}\hat{x}\) or
\(\hat{q}=\sqrt{2}\hat{p}\), depending on which quadrature is being
measured.  The measurement added noise is expressed in terms of
measurements of quadratures with the latter normalization.  Thus noise
with variance \(r^{2}\) in measuring \(\hat{q}\) corresponds to noise
with variance \(r^{2}/2\) in measuring \(\hat{x}\) or \(\hat{p}\).
Below we refer to numerical calculations of GKP error correction
fidelities with respect to displacement noise. For these calculations,
displacement noise is with respect to canonical quadratures. So
Gaussian noise with variance \(\sigma_{g}^{2}\) for a canonical
quadrature is equivalent to noise with variance \(2\sigma_{g}^{2}\)
for the quadrature normalized according to our conventions. In
general, with the quadrature definitions of
Sect.~\ref{sec:preliminaries}, the variances are related by
\(\langle (\hat{q}_{\alpha}-\langle\hat{q}_{\alpha}\rangle)^{2}\rangle
=
2|\alpha|^{2}\langle(\hat{q}_{1/\sqrt{2}}-\langle\hat{q}_{1/\sqrt{2}}\rangle)^{2}\rangle\),
where \(\hat{q}_{1/\sqrt{2}}\) is canonical.  For the remainder of
this analysis, variances with the Greek letter \(\sigma\) refer to
displacement noise variances for canonical quadratures. We convert
measurement added noise with variance \(r^{2}\) for measuring
\(\hat{q}\) normalized according to our conventions to canonical
displacement noise with variance \(r^{2}/2\) before the measurement.

We assume that the incoming GKP qubit and ancilla qubits both
experience physical Gaussian isotropic displacement noise with
marginal variance \(\sigma_{0}^{2}\) for each canonical quadrature.
In consideration of the equivalences given above, we can redistribute
it and attribute it to noise on the incoming GKP qubit with no noise
acting on the ancillas. For this, we add the variances of the noisy
displacement for each quadrature from redistributing noise in the
previous error-correction cycle acting on the then outgoing and now
incoming GKP qubit to the physical noise and the redistributed noise
variances from the current error-correction cycle. The resulting noisy
displacements add independently from each source and after accounting
for redistributed displacement noise from both the previous and the
present error-correction cycle the noise remains isotropic.  Adding
the variances from the redistributed physical noise, we get a total
variance of \(3\sigma_{0}^{2}\) in each canonical quadrature acting
only on the incoming GKP qubit. For displacement noise generated by
canonical \(\hat{q}\) in Fig.~\ref{fig:gkptelerr}, this variance is
obtained by adding the initial noise on the GKP qubit coming in from
the previous error correction cycle, the noise on this qubit
propagated forward from the previous cycle's \(\hat{q}_{b}\)
displacement noise, and the noise propagated from the current cycle's
\(\hat{q}_{b}\) displacement noise. The total displacement noise
generated by \(\hat{p}\) can be determined similarly.  Let \(r\) be
the resolution of homodyne measurement, accounting for
measurement-added noise and the LO amplitude.  The measurement-added
noise contributes a variance \(r^{2}/2\) in the canonical quadratures.
The total noise variance affecting the fidelity of the
error-correction cycle with ideal quadrature-conditional correction
unitaries is therefore
\(\sigma_{\text{noise}}^{2}=3\sigma_{0}^{2}+r^{2}/2\).

It is necessary to estimate the expected photon numbers in the
measured modes. For the contribution to infidelity from the
implemented measurement, it is necessary to calculate the expected
photon numbers with respect to the physical incoming states, without
redistribution of the physical noise or applying the equivalence
between added measurement noise and displacement noise to the current
error-correction cycle. For the incoming GKP qubit, it is necessary to
take into account the effects from the previous cycle and possible
actions. We assume that these effects and actions are well represented
by the physical noise and a displacement that might be used to
implement a logical Pauli gate with displacements to one or both of
the quadratures of the GKP qubit.

We write \(\hat{n}'_{a}\) and \(\hat{n}'_{b}\) for
the number operators of the modes measured for \(\hat{p}_{a}'\) and
\(\hat{q}_{b}'\). For applying our bounds, these quadratures are
normalized according to our conventions.  From the equivalences in
Fig.~\ref{fig:gkptelerr} and applying the identity in
Eq.~\eqref{app:eq:quadsq-hatn},
\begin{align}
  \langle \hat{n}'_{a}\rangle
  &=
    \frac{1}{4}\left\langle\hat{q}_{a}^{2}+
    (\hat{p}_{a}-\hat{p}_{b})^{2}-2\right\rangle \notag
  \\
  &=
    \frac{1}{4}\left\langle\hat{q}_{a}^{2}+
    \hat{p}_{a}^{2}+\hat{p}_{b}^{2} - 2\hat{p}_{a}\hat{p}_{b}-2\right\rangle
    ,
    \ignore{    
    \notag\\
  &\leq
    \frac{1}{4}\qty(\langle\hat{q}_{a}^{2}\rangle+
    \langle\hat{p}_{a}^{2}\rangle
    +\langle\hat{q}_{b}^{2}\rangle+\langle\hat{p}_{b}^{2}\rangle -
    2\langle\hat{p}_{a}\hat{p}_{b}\rangle-2) \notag
  \\
  & \leq \langle
    \hat{n}_{a}\rangle+\langle\hat{n}_{b}\rangle
    -
    \frac{1}{2}
    \langle\hat{p}_{a}\hat{p}_{b}\rangle +
    \frac{1}{2},
    }
    \label{eq:gkp_p}
\end{align} and similarly
\begin{align}
  \langle \hat{n}'_{b}\rangle
  &=
    \frac{1}{4}\left\langle
    (\hat{q}_{a}+\hat{q}_{b}+\hat{q}_{c})^{2} +
    \hat{p}_{b}^{2} -2\right\rangle \notag
  \\
  &=
    \frac{1}{4}\qty(
    \langle\hat{q}_{a}^{2}\rangle+\langle\hat{q}_{b}^{2}\rangle
    +\langle\hat{p}_{b}^{2}\rangle+\langle\hat{q}_{c}^{2}\rangle +
    2\qty(
       \langle\hat{q}_{a}\hat{q}_{b}\rangle 
       +\langle\hat{q}_{a}\hat{q}_{c}\rangle 
       +\langle\hat{q}_{b}\hat{q}_{c}\rangle)
    -2)
    .
    \ignore{
    \notag\\
  &\leq
       \langle \hat{n}_{a}\rangle+\langle\hat{n}_{b}\rangle+
       \langle\hat{n}_{c}\rangle + \frac{1}{2}\qty(
       \langle\hat{q}_{a}\hat{q}_{b}\rangle 
       +\langle\hat{q}_{a}\hat{q}_{c}\rangle 
       +\langle\hat{q}_{b}\hat{q}_{c}\rangle 
    ) + 1.
    }
      \label{eq:gkp_q}
\end{align}
The three input modes are independent. 

For reference, we use the performance data for finite-energy square
GKP qubits shown in App.~\ref{app:gkp_numerics}
Fig.~\ref{fig:gkp_numerics}, where we use the curves that show an
analytic approximation for the entanglement fidelity of teleported
error-correction as a function of the total redistributed displacement
noise variance acting on the incoming GKP qubit.  The error-free
states for these square GKP qubits have close to zero-mean
quadratures.  Let \(\bar n\) be the expected number of photons for a
particular such qubit. Since, with our conventions,
\(4(\bar n+1/2)=\langle\hat{q}^{2}\rangle+\langle\hat{p}^{2}\rangle\),
and the square GKP qubits are approximately symmetric, we use the
estimate
\(v=\langle\hat{q}^{2}\rangle \approx \langle\hat{p}^{2}\rangle \approx
2\bar n+1\).
\newcommand{\gkpnbnd}{4.8}
\newcommand{\sigmagkp}{0.229}
\newcommand{\fsqueeze}[1]{-10 * ln ( 2 * (#1)^2 ) / ln ( 10  )}
\edef\gkpsqz{\fpeval{\fsqueeze{\sigmagkp}}}
\newcommand{\fquadsq}[1]{2 * (#1) +  1}%
\edef\quadsq{\fpeval{\fquadsq{\gkpnbnd}}}%
\newcommand{\infidbnd}{0.06}
\newcommand{\minfidbnd}{0.02}
\newcommand{\sigmabnd}{0.1}
\newcommand{\sigmaphys}{0.05}%
\newcommand{\fmeasres}[2]{sqrt( (2*(#1)^2 - 6 * (#2)^2))}%
\edef\measres{\fpeval{\fmeasres{\sigmabnd}{\sigmaphys}}}%
\newcommand{\fsigmanoise}[2]{sqrt(3*(#1)^2+(#2)^2/2)}%
\edef\sigmanoise{\fpeval{\fsigmanoise{\sigmaphys}{\measres}}}%
We choose the square GKP qubit with
\(\bar n = \num[round-mode=places,round-precision=1]{\gkpnbnd}\) for
illustration.  The squeezing associated with
this GKP qubit is
\(\num[round-mode=places,round-precision=1]{\gkpsqz}\). We fix an
infidelity bound of \(\epsilon_{\text{ec}}=\infidbnd\) for the
entanglement infidelity according to curves in
Fig.~\ref{fig:gkp_numerics}.  We expect our bounds to be quite
conservative and allow \(\epsilon_{m}=\minfidbnd\) for the sum of the
two infidelities from the two quadrature measurements.  We further
expect the three infidelities to add incoherently for an overall
infidelity bounded by
\(\num[round-mode=places,round-precision=3]{\fpeval{\infidbnd+\minfidbnd}}\).
To determine \(\sigma_{\text{noise}}\) we consult the entanglement
fidelity curves for the GKP qubit with
\(\bar n=\num[round-mode=places,round-precision=1]{\gkpnbnd}\) and let
\(\sigma_{\text{noise}}\) be a lower bound on the noise standard
deviation at which the curve crosses the chosen entanglement fidelity
bound of \(\infidbnd\).  Thus we set
\(\sigma_{\text{noise}}=\sigmabnd\).  Assuming a bound of
\(\sigmaphys\) on the physical displacement noise \(\sigma_{0}\), we
solve for \(r\) in the expression for \(\sigma_{\text{noise}}\) to get
\(r=\num[round-mode=places,round-precision=3]{\measres}\).

Based on the assumed values of \(\bar n\), we obtain
\(v=\num[round-mode=places,round-precision=1]{\quadsq}\) for the
expectations of the square quadratures in Eqs.~\eqref{eq:gkp_p}
and~\eqref{eq:gkp_q} before accounting for displacements from logical
operations and noise. This value is obtained from the expected photon
number by applying Eq.~\ref{app:eq:quadsq-hatn}, assuming the
expectations of the quadrature are \(0\) and the variances of the two
quadratures are identical.  The expectations of the square
quadratures of the incoming GKP qubit and the ancillas each increase
by the added noise variance of \(2\sigma_{0}^{2}\), accounting for the
normalization used here.  The incoming GKP qubit may have been
displaced by a logical operation Pauli gate, which can be implemented
by displacements of magnitude\(\sqrt{\pi}\) of the canonical
variables.  This can add \(2\pi\) to the expectation of each square
quadratures of the incoming GKP qubit in Eqs.~\eqref{eq:gkp_q}
and~\eqref{eq:gkp_p}, since the mean of the quadratures is close to
zero before the operation.  By independence of the modes, we obtain
the bounds
\newcommand{\fnprimep}[2]{(3*(#1) - 2 + 3(#2)+4*pi)/4}%
\newcommand{\fnprimeq}[2]{(4*(#1) - 2 + 4(#2) + 2*pi)/4}
\edef\nprimep{\fpeval{\fnprimep{\quadsq}{\sigmaphys}}}%
\edef\nprimeq{\fpeval{\fnprimeq{\quadsq}{\sigmaphys}}}%
\begin{align}
  \langle\hat{n}'_{a}\rangle
  &\leq \frac{1}{4}\qty(3v - 2
    +3\sigma_{0}^{2}+ 4\pi)
    = \num[round-mode=places,round-precision=1]{\nprimep},
    \notag\\
  \langle\hat{n}'_{b}\rangle
  &\leq
    \frac{1}{4}\qty(4v-2
    +4\sigma_{0}^{2}+2\pi)
    )
    = \num[round-mode=places,round-precision=1]{\nprimeq}.
\end{align}

The bound on the  measurement infidelity for one quadrature measurement is expressed
in terms of the expected photon number, the resolution, and the LO
photon number \(\delta^{2}\) in Eq.~\eqref{eq:outstate_fid_gauss_sph}.
We require that the sum of the bounds on the quadrature measurement infidelities
is less than the allowance \(\epsilon_{m}\). This gives the inequality
\begin{align}
  \epsilon_{m} &\geq
                 \frac{\delta^{2}}{(2r)^{4}}\bra{\psi}\qty(\qty(4(2r)^{2} +\frac{40
    \delta^{2}}{3 (2r)^{2}})(\langle\hat{n}'_{q}\rangle+\langle\hat{n}'_{p}\rangle)
    + \frac{20\delta^{2}}{3(2r)^{2}}+3)\ket{\psi}.
\end{align}
Let \(\delta_{m}\) be the value of \(\delta\) that achieves equality
in the inequality above. Provided that \(\delta\leq\delta_{m}\), the
sum of the measurement infidelities is less than our target
\(\epsilon_{m}\).  To solve for \(\delta_{m}\), we set the resolution
and expected photon numbers according to the previously obtained
values.
The identity to be solved is
\newcommand{\fgkptwo}[3]{(4/((2*(#1))^2)*((#2)+(#3)) + 3/((2*(#1))^4))}%
\newcommand{\fgkpfour}[3]{((40/3)/((2*(#1))^6)*((#2)+(#3)+1/2))}%
\edef\gkpctwo{\fpeval{\fgkptwo{\measres}{\nprimep}{\nprimeq}}}%
\edef\gkpcfour{\fpeval{\fgkpfour{\measres}{\nprimep}{\nprimeq}}}%
\begin{align}
  \num[exponent-mode=scientific,round-mode=figures,round-precision=4]{\minfidbnd}
  &=
    \num[exponent-mode=scientific,round-mode=figures,round-precision=4]{\gkpctwo}\delta_{m}^{2}
    + \num[exponent-mode=scientific,round-mode=figures,round-precision=4]{\gkpcfour}\delta_{m}^{4}.
    \label{eq:gkpeqdeltasq}
\end{align}
\newcommand{\fgkpideltasq}[2]{(#1)/(#2)}%
\edef\gkpideltasq{\fpeval{\fgkpideltasq{\minfidbnd}{\gkpctwo}}}%
\newcommand{\fequivdetnoise}[2]{(#1)/sqrt((#2))}%
\edef\equivdetnoise{\fpeval{\fequivdetnoise{\measres}{\gkpideltasq}}}%
\newcommand{\fainfidbnd}[3]{(#1)*(#2) + (#3)*(#2)^2}%
\edef\ainfidbnd{\fpeval{\fainfidbnd{\gkpctwo}{\gkpideltasq}{\gkpcfour}}}%
The degree four term in \(\delta_{m}\) contributes little. We
neglect it for
\( \delta_{m}^{2} =
\num[exponent-mode=scientific,round-mode=figures,round-precision=2]{\gkpideltasq}
\) .  If we don't neglect the degree four term, substituting this
value of \(\delta_{m}^{2}\) results in a value of
\(\num[exponent-mode=scientific,round-mode=figures,round-precision=7]{\ainfidbnd}\)
for the right-hand sides of Eq.~\eqref{eq:gkpeqdeltasq}.  The value of
\(\delta_{m}^{2}\) found corresponds to a minimum required LO photon
number of
\newcommand{\fnlo}[1]{1/(#1)}%
\edef\nlo{\fpeval{\fnlo{\gkpideltasq}}}%
\(n_{\text{LO}}=\num[exponent-mode=scientific,round-mode=figures,round-precision=2]{\nlo}\).
In terms of equivalent photodiode added noise, the resolution and LO
amplitude corresponds to an average added noise of
\(\num[round-mode=places,round-precision=1]{\equivdetnoise}\)
electrons. The bounds remain satisfied if we increase the LO photon
number by decreasing \(\delta^{2}\) while keeping the resolution \(r\)
constant.  Since \(r=\delta\sigma\), increasing the LO photon number
by a factor of \(k\) corresponds to an increase in equivalent
photodiode added noise a factor of \(\sqrt{k}\). Thus, if the actual
added noise is larger than the one computed above, it suffices to
increase the LO photon number accordingly.  If the actual added noise
is smaller than the computed one, we can, conservatively, add noise to
the measurement outcomes to compensate and ensure the bounds.  This
mainly guarantees that there are no problems with conditional actions that depend on
fine details of the measurement outcomes, for example such actions
that depend discontinuously on the outcomes.
\newcommand{\fnlon}[3]{(#2) < (#1) ? (#3)*((#1)/(#2))^2 : (#3)}%

We conclude our analysis with Table~\ref{tab:gkpbounds} showing the parameters
and bounds for a few choices of GKP qubits and fidelity bounds. \MKc{Consider adding more rows to the table to show effects of some relevant changes.}
\begin{table}[h]
  \centering
  \caption{Table of required homodyne parameters for square GKP qubits
    and fidelity bounds. The GKP qubit parameters \(\bar n\) with
    corresponding squeezing in \(\mySI{db}\) and entanglement fidelity
    performances are from Fig.~\ref{fig:gkp_numerics}.  For each GKP
    qubit, we choose a combination of noise variance
    \(\sigma_{\text{noise}}\) and entanglement infidelity
    \(\epsilon_{\text{ec}}\) for which the analytic approximation
    according to the figure gives an entanglement infidelity close to
    and less than \(\epsilon_{\text{ec}}\). We then choose an
    allowance for the physical noise \(\sigma_{0}\) on the GKP qubit
    and the prepared ancillas. The noise variance \(\sigma_{0}^{2}\)
    must be less than \(\sigma_{\text{noise}}^{2}/3\).  From this we
    determine the required measurement resolution \(r\).  We then
    choose an allowance for the sum \(\epsilon_{m}\) of the
    measurement infidelities from the homodyne measurements.  Assuming
    no coherent contribution when chaining the infidelities, the
    overall infidelity is \(\epsilon_{\text{ec}}+\epsilon_{m}\).  From
    these parameters, we compute a minimum required LO photon number
    \(n_{\text{LO}}\), and the equivalent photodiode added noise
    \(\sigma_{e}\) in electrons at this LO photon number.  If the
    photodiode added noise is less than \(\lowdetnoise\) or
    \(\highdetnoise\)~\cite{hansen2001ultrasensitive,gerrits2011balanced},
    we give the required LO photon number for which the photodiode
    added noise is consistent with \(r\). Using an LO with this average photon number
    guarantees the calculated bounds on infidelity.}
  \vspace*{\baselineskip}
  \begin{tabular}{|c|c|c|c|c|c|c|c|c|c|c|}
    \hline
    \(\bar n\)
    &
      \(\mySI{db}\)
    &
      \(\sigma_{\text{noise}}\)
    &
      \(\sigma_{0}\)
    &
      \(\epsilon_{\text{ec}}\)
    &
      \(\epsilon_{m}\)
    &
      \(r\)
    &
      \(n_{\text{LO}}\)
    &
      \(\sigma_{e}\)
    &
      \shortstack[c]{\(n_{\text{LO}}\) at\\ \(\sigma_{e}=\lowdetnoise\)}
    &
      \shortstack[c]{\(n_{\text{LO}}\) at\\ \(\sigma_{e}=\highdetnoise\)}
      \\
    \hline\hline
    \(\gkpnbnd\)
    &
      \(\num[round-mode=places,round-precision=1]{\gkpsqz}\)
    &
      \(\num[round-mode=places,round-precision=3]{\sigmanoise}\)
    &
      \(\num[round-mode=places,round-precision=3]{\sigmaphys}\)
    &
      \(\num[round-mode=places,round-precision=3]{\infidbnd}\)
    &
      \(\num[round-mode=places,round-precision=3]{\fpeval{\minfidbnd}}\)
    &
      \(\num[round-mode=places,round-precision=3]{\measres}\)
    &
      \(\num[exponent-mode=scientific,round-mode=figures,round-precision=2]{\nlo}\)
    &
      \(\num[round-mode=places,round-precision=1]{\equivdetnoise}\)
    &
\(\num[exponent-mode=scientific,round-mode=figures,round-precision=2]{\fpeval{\fnlon{\lowdetnoise}{\equivdetnoise}{\nlo}}}\)
    &
\(\num[exponent-mode=scientific,round-mode=figures,round-precision=2]{\fpeval{\fnlon{\highdetnoise}{\equivdetnoise}{\nlo}}}\)
    \\
    \noalign{
    \gdef\gkpnbnd{7.6}%
    \gdef\sigmagkp{0.182}%
    \gdef\infidbnd{0.015}%
    \gdef\minfidbnd{0.005}%
    \gdef\sigmabnd{0.1}%
    \gdef\sigmaphys{0.05}%
    \xdef\gkpsqz{\fpeval{\fsqueeze{\sigmagkp}}}%
    \xdef\measres{\fpeval{\fmeasres{\sigmabnd}{\sigmaphys}}}%
    \xdef\sigmanoise{\fpeval{\fsigmanoise{\sigmaphys}{\measres}}}%
    \xdef\quadsq{\fpeval{\fquadsq{\gkpnbnd}}}%
    \xdef\nprimep{\fpeval{\fnprimep{\quadsq}{\sigmaphys}}}%
    \xdef\nprimeq{\fpeval{\fnprimeq{\quadsq}{\sigmaphys}}}%
    \xdef\gkpctwo{\fpeval{\fgkptwo{\measres}{\nprimep}{\nprimeq}}}%
    \xdef\gkpideltasq{\fpeval{\fgkpideltasq{\minfidbnd}{\gkpctwo}}}%
    \xdef\nlo{\fpeval{\fnlo{\gkpideltasq}}}%
    \xdef\gkpideltasq{\fpeval{\fgkpideltasq{\minfidbnd}{\gkpctwo}}}%
    \xdef\equivdetnoise{\fpeval{\fequivdetnoise{\measres}{\gkpideltasq}}}%
    }
    \hline
    \(\gkpnbnd\)
    &
      \(\num[round-mode=places,round-precision=1]{\gkpsqz}\)
    &
      \(\num[round-mode=places,round-precision=3]{\sigmanoise}\)
    &
      \(\num[round-mode=places,round-precision=3]{\sigmaphys}\)
    &
      \(\num[round-mode=places,round-precision=3]{\infidbnd}\)
    &
      \(\num[round-mode=places,round-precision=3]{\fpeval{\minfidbnd}}\)
    &
      \(\num[round-mode=places,round-precision=3]{\measres}\)
    &
      \(\num[exponent-mode=scientific,round-mode=figures,round-precision=2]{\nlo}\)
    &
      \(\num[round-mode=places,round-precision=1]{\equivdetnoise}\)
    &
\(\num[exponent-mode=scientific,round-mode=figures,round-precision=2]{\fpeval{\fnlon{\lowdetnoise}{\equivdetnoise}{\nlo}}}\)
    &
\(\num[exponent-mode=scientific,round-mode=figures,round-precision=2]{\fpeval{\fnlon{\highdetnoise}{\equivdetnoise}{\nlo}}}\)
    \\
    \noalign{
    \gdef\gkpnbnd{12.0}%
    \gdef\sigmagkp{0.144}%
    \gdef\infidbnd{0.002}%
    \gdef\minfidbnd{0.0005}%
    \gdef\sigmabnd{0.1}%
    \gdef\sigmaphys{0.05}%
    \xdef\gkpsqz{\fpeval{\fsqueeze{\sigmagkp}}}%
    \xdef\measres{\fpeval{\fmeasres{\sigmabnd}{\sigmaphys}}}%
    \xdef\sigmanoise{\fpeval{\fsigmanoise{\sigmaphys}{\measres}}}%
    \xdef\quadsq{\fpeval{\fquadsq{\gkpnbnd}}}%
    \xdef\nprimep{\fpeval{\fnprimep{\quadsq}{\sigmaphys}}}%
    \xdef\nprimeq{\fpeval{\fnprimeq{\quadsq}{\sigmaphys}}}%
    \xdef\gkpctwo{\fpeval{\fgkptwo{\measres}{\nprimep}{\nprimeq}}}%
    \xdef\gkpideltasq{\fpeval{\fgkpideltasq{\minfidbnd}{\gkpctwo}}}%
    \xdef\nlo{\fpeval{\fnlo{\gkpideltasq}}}%
    \xdef\gkpideltasq{\fpeval{\fgkpideltasq{\minfidbnd}{\gkpctwo}}}%
    \xdef\equivdetnoise{\fpeval{\fequivdetnoise{\measres}{\gkpideltasq}}}%
    }
    \hline
    \(\gkpnbnd\)
    &
      \(\num[round-mode=places,round-precision=1]{\gkpsqz}\)
    &
      \(\num[round-mode=places,round-precision=3]{\sigmanoise}\)
    &
      \(\num[round-mode=places,round-precision=3]{\sigmaphys}\)
    &
      \(\num[round-mode=places,round-precision=4]{\infidbnd}\)
    &
      \(\num[round-mode=places,round-precision=4]{\fpeval{\minfidbnd}}\)
    &
      \(\num[round-mode=places,round-precision=3]{\measres}\)
    &
      \(\num[exponent-mode=scientific,round-mode=figures,round-precision=2]{\nlo}\)
    &
      \(\num[round-mode=places,round-precision=1]{\equivdetnoise}\)
    &
\(\num[exponent-mode=scientific,round-mode=figures,round-precision=2]{\fpeval{\fnlon{\lowdetnoise}{\equivdetnoise}{\nlo}}}\)
    &
\(\num[exponent-mode=scientific,round-mode=figures,round-precision=2]{\fpeval{\fnlon{\highdetnoise}{\equivdetnoise}{\nlo}}}\)
    \\
    \noalign{
    \gdef\gkpnbnd{12.0}%
    \gdef\sigmagkp{0.144}%
    \gdef\infidbnd{0.008}%
    \gdef\minfidbnd{0.002}%
    \gdef\sigmabnd{0.18}%
    \gdef\sigmaphys{0.09}%
    \xdef\gkpsqz{\fpeval{\fsqueeze{\sigmagkp}}}%
    \xdef\measres{\fpeval{\fmeasres{\sigmabnd}{\sigmaphys}}}%
    \xdef\sigmanoise{\fpeval{\fsigmanoise{\sigmaphys}{\measres}}}%
    \xdef\quadsq{\fpeval{\fquadsq{\gkpnbnd}}}%
    \xdef\nprimep{\fpeval{\fnprimep{\quadsq}{\sigmaphys}}}%
    \xdef\nprimeq{\fpeval{\fnprimeq{\quadsq}{\sigmaphys}}}%
    \xdef\gkpctwo{\fpeval{\fgkptwo{\measres}{\nprimep}{\nprimeq}}}%
    \xdef\gkpideltasq{\fpeval{\fgkpideltasq{\minfidbnd}{\gkpctwo}}}%
    \xdef\nlo{\fpeval{\fnlo{\gkpideltasq}}}%
    \xdef\gkpideltasq{\fpeval{\fgkpideltasq{\minfidbnd}{\gkpctwo}}}%
    \xdef\equivdetnoise{\fpeval{\fequivdetnoise{\measres}{\gkpideltasq}}}%
    }
    \hline
    \(\gkpnbnd\)
    &
      \(\num[round-mode=places,round-precision=1]{\gkpsqz}\)
    &
      \(\num[round-mode=places,round-precision=3]{\sigmanoise}\)
    &
      \(\num[round-mode=places,round-precision=3]{\sigmaphys}\)
    &
      \(\num[round-mode=places,round-precision=4]{\infidbnd}\)
    &
      \(\num[round-mode=places,round-precision=4]{\fpeval{\minfidbnd}}\)
    &
      \(\num[round-mode=places,round-precision=3]{\measres}\)
    &
      \(\num[exponent-mode=scientific,round-mode=figures,round-precision=2]{\nlo}\)
    &
      \(\num[round-mode=places,round-precision=1]{\equivdetnoise}\)
    &
\(\num[exponent-mode=scientific,round-mode=figures,round-precision=2]{\fpeval{\fnlon{\lowdetnoise}{\equivdetnoise}{\nlo}}}\)
    &
\(\num[exponent-mode=scientific,round-mode=figures,round-precision=2]{\fpeval{\fnlon{\highdetnoise}{\equivdetnoise}{\nlo}}}\)
    \\
    \noalign{
    \gdef\gkpnbnd{12.0}%
    \gdef\sigmagkp{0.144}%
    \gdef\infidbnd{0.008}%
    \gdef\minfidbnd{0.002}%
    \gdef\sigmabnd{0.18}%
    \gdef\sigmaphys{0.045}%
    \xdef\gkpsqz{\fpeval{\fsqueeze{\sigmagkp}}}%
    \xdef\measres{\fpeval{\fmeasres{\sigmabnd}{\sigmaphys}}}%
    \xdef\sigmanoise{\fpeval{\fsigmanoise{\sigmaphys}{\measres}}}%
    \xdef\quadsq{\fpeval{\fquadsq{\gkpnbnd}}}%
    \xdef\nprimep{\fpeval{\fnprimep{\quadsq}{\sigmaphys}}}%
    \xdef\nprimeq{\fpeval{\fnprimeq{\quadsq}{\sigmaphys}}}%
    \xdef\gkpctwo{\fpeval{\fgkptwo{\measres}{\nprimep}{\nprimeq}}}%
    \xdef\gkpideltasq{\fpeval{\fgkpideltasq{\minfidbnd}{\gkpctwo}}}%
    \xdef\nlo{\fpeval{\fnlo{\gkpideltasq}}}%
    \xdef\gkpideltasq{\fpeval{\fgkpideltasq{\minfidbnd}{\gkpctwo}}}%
    \xdef\equivdetnoise{\fpeval{\fequivdetnoise{\measres}{\gkpideltasq}}}%
    }
    \hline
    \(\gkpnbnd\)
    &
      \(\num[round-mode=places,round-precision=1]{\gkpsqz}\)
    &
      \(\num[round-mode=places,round-precision=3]{\sigmanoise}\)
    &
      \(\num[round-mode=places,round-precision=3]{\sigmaphys}\)
    &
      \(\num[round-mode=places,round-precision=4]{\infidbnd}\)
    &
      \(\num[round-mode=places,round-precision=4]{\fpeval{\minfidbnd}}\)
    &
      \(\num[round-mode=places,round-precision=3]{\measres}\)
    &
      \(\num[exponent-mode=scientific,round-mode=figures,round-precision=2]{\nlo}\)
    &
      \(\num[round-mode=places,round-precision=1]{\equivdetnoise}\)
    &
\(\num[exponent-mode=scientific,round-mode=figures,round-precision=2]{\fpeval{\fnlon{\lowdetnoise}{\equivdetnoise}{\nlo}}}\)
    &
\(\num[exponent-mode=scientific,round-mode=figures,round-precision=2]{\fpeval{\fnlon{\highdetnoise}{\equivdetnoise}{\nlo}}}\)
    \\
    \noalign{
    \gdef\gkpnbnd{12.0}%
    \gdef\sigmagkp{0.144}%
    \gdef\infidbnd{0.005}%
    \gdef\minfidbnd{0.005}%
    \gdef\sigmabnd{0.15}%
    \gdef\sigmaphys{0.075}%
    \xdef\gkpsqz{\fpeval{\fsqueeze{\sigmagkp}}}%
    \xdef\measres{\fpeval{\fmeasres{\sigmabnd}{\sigmaphys}}}%
    \xdef\sigmanoise{\fpeval{\fsigmanoise{\sigmaphys}{\measres}}}%
    \xdef\quadsq{\fpeval{\fquadsq{\gkpnbnd}}}%
    \xdef\nprimep{\fpeval{\fnprimep{\quadsq}{\sigmaphys}}}%
    \xdef\nprimeq{\fpeval{\fnprimeq{\quadsq}{\sigmaphys}}}%
    \xdef\gkpctwo{\fpeval{\fgkptwo{\measres}{\nprimep}{\nprimeq}}}%
    \xdef\gkpideltasq{\fpeval{\fgkpideltasq{\minfidbnd}{\gkpctwo}}}%
    \xdef\nlo{\fpeval{\fnlo{\gkpideltasq}}}%
    \xdef\gkpideltasq{\fpeval{\fgkpideltasq{\minfidbnd}{\gkpctwo}}}%
    \xdef\equivdetnoise{\fpeval{\fequivdetnoise{\measres}{\gkpideltasq}}}%
    }
    \hline
    \(\gkpnbnd\)
    &
      \(\num[round-mode=places,round-precision=1]{\gkpsqz}\)
    &
      \(\num[round-mode=places,round-precision=3]{\sigmanoise}\)
    &
      \(\num[round-mode=places,round-precision=3]{\sigmaphys}\)
    &
      \(\num[round-mode=places,round-precision=4]{\infidbnd}\)
    &
      \(\num[round-mode=places,round-precision=4]{\fpeval{\minfidbnd}}\)
    &
      \(\num[round-mode=places,round-precision=3]{\measres}\)
    &
      \(\num[exponent-mode=scientific,round-mode=figures,round-precision=2]{\nlo}\)
    &
      \(\num[round-mode=places,round-precision=1]{\equivdetnoise}\)
    &
\(\num[exponent-mode=scientific,round-mode=figures,round-precision=2]{\fpeval{\fnlon{\lowdetnoise}{\equivdetnoise}{\nlo}}}\)
    &
\(\num[exponent-mode=scientific,round-mode=figures,round-precision=2]{\fpeval{\fnlon{\highdetnoise}{\equivdetnoise}{\nlo}}}\)
    \\
    \noalign{
    \gdef\gkpnbnd{15.4}%
    \gdef\sigmagkp{0.127}%
    \gdef\infidbnd{0.0008}%
    \gdef\minfidbnd{0.0002}%
    \gdef\sigmabnd{0.14}%
    \gdef\sigmaphys{0.05}%
    \xdef\gkpsqz{\fpeval{\fsqueeze{\sigmagkp}}}%
    \xdef\measres{\fpeval{\fmeasres{\sigmabnd}{\sigmaphys}}}%
    \xdef\sigmanoise{\fpeval{\fsigmanoise{\sigmaphys}{\measres}}}%
    \xdef\quadsq{\fpeval{\fquadsq{\gkpnbnd}}}%
    \xdef\nprimep{\fpeval{\fnprimep{\quadsq}{\sigmaphys}}}%
    \xdef\nprimeq{\fpeval{\fnprimeq{\quadsq}{\sigmaphys}}}%
    \xdef\gkpctwo{\fpeval{\fgkptwo{\measres}{\nprimep}{\nprimeq}}}%
    \xdef\gkpideltasq{\fpeval{\fgkpideltasq{\minfidbnd}{\gkpctwo}}}%
    \xdef\nlo{\fpeval{\fnlo{\gkpideltasq}}}%
    \xdef\gkpideltasq{\fpeval{\fgkpideltasq{\minfidbnd}{\gkpctwo}}}%
    \xdef\equivdetnoise{\fpeval{\fequivdetnoise{\measres}{\gkpideltasq}}}%
    }
    \hline
    \(\gkpnbnd\)
    &
      \(\num[round-mode=places,round-precision=1]{\gkpsqz}\)
    &
      \(\num[round-mode=places,round-precision=3]{\sigmanoise}\)
    &
      \(\num[round-mode=places,round-precision=3]{\sigmaphys}\)
    &
      \(\num[round-mode=places,round-precision=4]{\infidbnd}\)
    &
      \(\num[round-mode=places,round-precision=4]{\fpeval{\minfidbnd}}\)
    &
      \(\num[round-mode=places,round-precision=3]{\measres}\)
    &
      \(\num[exponent-mode=scientific,round-mode=figures,round-precision=2]{\nlo}\)
    &
      \(\num[round-mode=places,round-precision=1]{\equivdetnoise}\)
    &
\(\num[exponent-mode=scientific,round-mode=figures,round-precision=2]{\fpeval{\fnlon{\lowdetnoise}{\equivdetnoise}{\nlo}}}\)
    &
\(\num[exponent-mode=scientific,round-mode=figures,round-precision=2]{\fpeval{\fnlon{\highdetnoise}{\equivdetnoise}{\nlo}}}\)
    \\
    \hline
   \end{tabular}
  \label{tab:gkpbounds}
\end{table}

\pagebreak


\section{Acknowledgments}
We thank Karl Mayer for discussions and suggestions that helped us
improve this paper. We also thank Zachary Levine and Michael Mazurek
for assistance with reviewing the paper before submission.  AI
technology was used to assist with creating diagrams and in the final
editing stages for checking grammar, spelling and style. E.S. and
J.R.v.M. acknowledge support from the Professional Research Experience
Program (PREP) operated jointly by NIST and the University of Colorado
under the financial assistance Award No. 70NANB18H006 from the
U.S. Department of Commerce, National Institute of Standards and
Technology.  A.K. acknowledges support by the NSA-LPS Qubit
Collaboratory (LQC) National Quantum Fellowship administered by the
Oak Ridge Institute for Science and Education (ORISE) through an
interagency agreement between the U.S. Department of Energy (DOE) and
the Department of Defense (DOD). ORISE is managed by ORAU under DOE
contract number DE-SC0014664.  The use of trade names and software
does not imply endorsement by the US government, nor does it imply
these are necessarily the best available for the purpose used here.
The views and conclusions contained herein are those of the authors
and should not be interpreted as necessarily representing the official
policies or endorsements, either expressed or implied, of the DOD,
DOE, ORAU/ORISE, or the U.S. Government.  This work includes
contributions of the National Institute of Standards and Technology,
which are not subject to U.S. copyright.

\bibliography{Contents}

\appendix

\section{Functional analysis}
\label{app:pvm}

Hilbert spaces in this paper are assumed to be separable. We make frequent use of projection-valued measures and their representations. For definitions of projection-valued measures and their use in the spectral theorem and the Borel functional calculus for self-adjoint operators, see Ref.~\cite{ReedSimon}, Chpt. VIII. In this paper, the
domains of defined projection-valued measures are assumed to be standard Borel spaces. Functions that are introduced or defined are required to be Borel measurable. We do not explicitly specify the required Borel measurability in our results. All measures introduced are \(\sigma\)-finite by assumption. For the relevant measure theoretic
concepts see real analysis textbooks such as~\cite{royden1968real}.
When we write expressions such as
\(\cH=\cH_{A}\otimes \cH_{B}\) or \(\cH=\bigoplus_{k\in J}\cH_{k}\), the equality symbols imply an implicit isometry that we do not need to refer to explicitly in calculations.

We symbolize projection-valued measures by the expression \(dQ(x)\)
for \(x\) in a standard Borel space \(\cB\). The projection determined
by the projection-valued measure \(dQ(x)\) for the Borel subset \(X\)
of \(\cB\) is \(\int_{x\in X}d Q(x)\). For every state \(\ket{\psi}\),
\(\bra{\psi}d Q(x)\ket{\psi}\) is a bounded measure on \(\cB\). The
measure is defined by
\(\int_{x\in X}\bra{\psi}d Q(x)\ket{\psi} = \bra{\psi}\int_{x\in
  X}dQ(x) \ket{\psi}\). According to the spectral theorem in
projection-valued measure form, for every self-adjoint operator
\(\hat{A}\), its projection-valued measure \(dA(x)\) on the reals
satisfies that if \(h(x)\) a Borel function on the reals, then
\(h(\hat{A})=\int_{x\in rls}h(x)dA(x)\) is a self-adjoint operator
whose domain consists of the states \(\ket{\psi}\) for which the
measure \(|h(x)|^{2}\bra{\psi}d A(x)\ket{\psi}\) is bounded. For
\(h(x) = x\), \(h(\hat{A})=\hat{A}\).

Every projection-valued measure \(dQ(x)\) on a standard Borel space
\(\cY\) has a direct-integral representation. The direct-integral
representation simplifies computations and helps to define conditional
unitary operations. The theory of direct integrals is described in
many functional analysis books. For example, see
Ref.~\cite{kadison:qf1997a} Chpt. 14 and
Ref.~\cite{takesaki1979theory} Chpt. IV.  Let \(\cH_{k}\) be the
Hilbert space of dimension \(k\) for \(k=1,\ldots,\infty\), where
\(\cH_{\infty}\) is the infinite-dimensional separable Hilbert
space. According to the direct-integral representation, there is a
\(\sigma\)-finite measure \(d\mu(x)\) on \(\cY\) and a measurable
partition \(\cY=\bigcup_{k=1}^{\infty}\cY_{k}\) such that the Hilbert
space \(\cH\) of \(dQ(x)\) can be decomposed as
\(\cH=\bigoplus_{k=1}^{\infty}L^{2}(\cY_{k},d\mu(x)|_{\cY_{k}},\cH_{k})\),
and the projector onto \(L^{2}(\cY_{k},d\mu(x)|_{\cY_{k}},\cH_{k})\)
in this decomposition is \(\int_{x\in\cY_{k}}dQ(x)\). We write
\(\cH(x)=\cH_{k}\) for \(x\in\cY_{k}\).  The Hilbert space
\(L^{2}(\cY_{k},d\mu(x)|_{\cY_{k}},\cH_{k})\) is the Hilbert space of
functions from \(\cY_{k}\) to \(\cH_{k}\) that are
\(\cY_{k}\)-measurable and square integrable with respect to
\(d\mu(x)\) restricted to \(\cY_{k}\). Such functions are unique up to
null-sets of \(d\mu(x)\).  This Hilbert space is naturally isomorphic
to \(L^{2}(\cY_{k},d\mu(x)|_{\cY_{k}})\otimes \cH_{k}\), where
\(L^{2}(\cY_{k},d\mu(x)|_{\cY_{k}})\) is the Hilbert space of
\(\cY_{k}\)-measurable functions from \(\cY_{k}\) to the complex
numbers.  \(L^{2}(\cY_{k},d\mu(x)|_{\cY_{k}})\) has the natural
projection-valued measure \(d\Pi_{k}(x)\) defined by
\(\int_{X}d\Pi_{k}(x)\ket{g(\cdot)} = \ket{\chi_{X}(\cdot)g(\cdot)}\),
where \(\chi_{X}(x)\) is the characteristic function of the set \(X\).
The projection valued measure \(d Q(x)\) restricted to \(\cY_{k}\) can
then be written as \(d\Pi_{k}(x)\otimes\one\) in the factorization
\(L^{2}(\cY_{k},d\mu(x)|_{\cY_{k}})\otimes \cH_{k}\).  We formally
write \(\cH=\int_{x\in\cY}^{\oplus}\cH(x)\), where \(\cH(x)=\cH_{k}\)
for \(x\in\cY_{k}\).  Here the \(\oplus\) in the superscript specifies
that this is a direct integral, not a conventional
integral. Accordingly, states \(\ket{\psi}\) of \(\cH\) can be written
formally in the form
\(\ket{\psi}=\int_{x\in\cY}^{\oplus} d\mu(x) \ket{\psi(x)}\), where
for \(x\in\cY_{k}\), \(\ket{\psi(x)}\in\cH_{k}\).  Restricted to
\(\cY_{k}\), \(\ket{\psi(x)}\) is a measurable function from
\(\cY_{k}\) to \(\cH_{k}\) that is square-integrable on
\(\cY_{k}\) with respect to \(d\mu(x)|_{\cY_{k}}\).
This means that for all \(\ket{\phi}\) in \(\cH_{k}\),
\(\braket{\phi}{\psi(x)}\) is measurable on \(\cY_{k}\) and
\(\sum_{k=1}^{\infty}\int_{x\in\cY_{k}}
d\mu(x)\braket{\psi(x)}<\infty\).  In the complete direct integral
expression
\(\ket{\psi}=\int_{x\in\cY}^{\oplus} d\mu(x) \ket{\psi(x)}\), the
superscript \(\oplus\) on the integral indicates that this is to be
interpreted as a vector-valued measure.  Informally, the expression
\(d\mu(x)\) in the integral is a kind of square-root measure. It is
determined up to equivalence of measures with respect to absolute
continuity, and change of measure is accompanied by the square root of
the Radon-Nikodym derivative. Decomposable operators are bounded
operators \(\hat{A}\) on \(\cH\) that can be written in the form
\(\hat{A}=\int_{x\in\cY} dQ(x) \hat{A}(x)\), where for
\(x\in\cY_{k}\), \(\hat{A}(x)\) is an operator on \(\cH_{k}\), and the
function \(\hat{A}(x)\) satisfies certain measurability conditions for
\(\hat{A}\) to be well-defined. Intuitively, \(\hat{A}\) is an
operator that acts conditionally on \(x\in\cY\).  Decomposable
operators are characterized as operators that are in the commutant of
the von-Neumann algebra of operators of the form
\(\int_{x\in\cY}dQ(x) h(x)\) for bounded Borel functions \(h(x)\).

If \(\cH\) is a tensor factor of a larger Hilbert space
\(\cH_{T}=\cH_{A}\otimes\cH\), then the direct integral tensors with
\(\cH_{A}\) as
\begin{align}
  \cH_{T}&=\bigoplus_{k=1}^{\infty}\cH_{A}\otimes L^{2}(\cY_{k},d\mu(k)|_{\cY_{k}},\cH_{k})
             \notag\\
    &=\bigoplus_{k=1}^{\infty}L^{2}(\cY_{k},d\mu(k)|_{\cY_{k}},\cH_{k}\otimes\cH_{A}),
\end{align}
We can treat \(dQ(x)\) as a projection
valued measure of \(\cH_{T}\) in the usual way and write product
states \(\ket{\phi}_{A}\otimes\ket{\psi}\) of \(\cH_{T}\) in the form
\(\int_{x\in\cY}^{\oplus}d\mu(x)\ket{\phi}_{A}\otimes \ket{\psi(x)}\).

We make use of bounded complex measures on \(\rls\).  The definitions
and properties of signed and complex measures can be found in standard
books on real analysis such as Ref.~\cite{royden1968real}, Chpt. 11
Sect. 5 and Chpt. 11 Sect. 6, Prob. 36.  For our purposes, a bounded
complex measure \(d\tilde\mu(\kappa)\) on the reals is defined by
\(d\tilde\mu(\kappa)=e^{i\theta(\kappa)}d\mu(\kappa)\), where
\(\theta(\kappa)\) is a measurable real function and
\(d\mu(\kappa)\)is a bounded positive measure. When introducing
bounded complex measures, we denote the complex measure with a tilde,
such as \(d\tilde\mu(\kappa)\) and denote the associated positive
measure without the tilde, such as \(d\mu(\kappa)\). The measure
\(d\mu(\kappa)\) can be thought of as the absolute value of the
measure \(d\tilde\mu(\kappa)\).

For a function \(\tilde f(\kappa)\) on the reals in a function space
for which Fourier transforms can be defined, we use the convention
that its Fourier transform is
\(f(x)=\int_{\kappa\in\rls}e^{i\kappa x}\tilde f(\kappa)d\kappa\).
The inverse Fourier transform of \(f(x)\) is
\(\tilde f(\kappa) = \frac{1}{2\pi}\int_{x\in\rls}e^{-i\kappa
  x}f(x)dx\). With this definition, the convolution
\((\tilde f *\tilde g)(\kappa)\) of \(\tilde f(\kappa)\) and
\(\tilde g(\kappa)\) has Fourier transform \(f(x)g(x)\). The
\(L_{2}\)-norm of \(f(x)\) is
\(\|f(x)\|_{2}= \sqrt{2\pi} \| \tilde f(\kappa)\|_{2}\).

The Fourier transform
\(f(x)=\int_{\kappa\in\rls} e^{i\kappa x} d\tilde\mu(\kappa)\) of a
bounded complex measure is a continuous, bounded function. Functions
such as \(f(x)=e^{ix\kappa}\) and functions with integrable Fourier
transforms are the Fourier transforms of bounded complex measures.

We claimed that for observables with continuous spectrum, exact
measurements do not exist. For the measurement of \(\hat{q}_{0}\), the
final state of the system and apparatus after a putative exact
measurement would be
\begin{align}
  \ket{\phi_{0}} &= \int_{x} d\Pi_{0}(x)\ket{\psi}\otimes\ket{x}_{M},
\end{align}
where $\ket{\psi}$ is the state of the measured system and $\ket{x}_M$
is the state of the apparatus recording the measurement result, $x$.
Consider the case where the system and apparatus Hilbert spaces are
\(L^{2}(\rls)\).  Then the final state ought to be in
\(L^{2}(\rls^{2})\). However, the ideal measurement ought to result in
a state that is supported on the diagonal \(\{(x,x): x\in\rls\}\) of
\(\rls^{2}\). But since the diagonal has measure zero, no such state
exists. Similar problems occur when attempting to define the
classical-quantum postmeasurement state. These problems are associated
with the fact that for a given value of \(x\), \(\Pi_{0}(x)\) cannot
be interpreted as a bounded or unbounded operator. The expression
\(\Pi_{0}(x)\ket{\psi}\) ought to be a state whose wavefunction is
supported on \(\{x\}\) but no such state exists in the Hilbert space.

\section{Often used relationships}
\label{app:inequalities}
\label{app:fidelities}

We use the following basic inequalities:
\begin{align}
  \Re\qty(e^{i\theta})
  & = \cos(\theta) \geq 1- {\theta^{2}}/{2},
    \label{eq:reeith-ineq}\\
  \sqrt{|x|+|y|}
  &\leq \sqrt{|x|} + \sqrt{|y|},
    \label{eq:sqrtx+y-ineq}\\
  (x+y)^{2}
  &\leq 2(x^{2}+y^{2}),
    \label{eq:x+ysqrt-ineq}            
\end{align}
where in the last inequality, \(x\) and \(y\) are real.

For bounded operators \(\hat{B}\) we define
\(\Re\qty(\hat{B}) = \hat{B}+\hat{B}^{\dagger}\).  For a self-adjoint
operator \(\hat{A}\), theinequality Eq.~\eqref{eq:reeith-ineq} implies
the operator inequality
\begin{align}
  \Re\qty(e^{i\hat{A}}) \geq 1-\hat{A}^{2}/2,
  \label{eq:reeith-ineq-op}
\end{align}
where the right-hand side has expectation \(-\infty\)  for states not in the domain
of \(\hat{A}\).

For self-adjoint operators \(\hat A\) and \(\hat B\) with \(\hat{B}\)
bounded, the operator \(\hat{A}+\hat{B}\) is self-adjoint with the
same domain as \(\hat{A}\). This is a consequence of the Kato-Rellich
theorem, see Ref.~\cite{reed1975ii} Thm. X.12.  For a state
\(\ket{\psi}\) in the domain of \(\hat{A}\), we have
\begin{align}
  |\langle \hat{A}\rangle - \langle \hat{B}\rangle|^{2}
  &=
    |\bra{\psi} \hat{A}-\hat{B}\ket{\psi}|^{2}
    \notag\\
  &= \bra{\psi} (\hat{A}-\hat{B})^{\dagger}
    \ketbra{\psi} (\hat{A}-\hat{B})\ket{\psi}
    \notag\\
  &\leq \bra{\psi}
    (\hat{A}-\hat{B})^{\dagger}(\hat{A}-\hat{B})\ket{\psi}
    \notag\\
  &= |(\hat{A}-\hat{B})\ket{\psi}|^{2}.
    \label{app:eq:a-bexp}
\end{align}

Let \(\hat{p}\) be a normalized quadrature of a mode.  Let \(\hat{q}\)
be its normalized conjugate quadrature in the mode so that
\([\hat{q},\hat{p}] = 2i\) with our normalization conventions. Then
\begin{align}
  \hat{p}^{2} \leq \hat{p}^{2}+\hat{q}^{2} = 4(\hat{n}+1/2).
  \label{app:eq:quadsq-hatn}
\end{align}

We take advantage of inequalities relating fidelity to overlaps and
distances.  If  \(\Re(\braket{\phi_{1}}{\phi_{2}}) \geq 1-\epsilon\), then
\begin{align}
  F(\ketbra{\phi_{1}},\ketbra{\phi_{2}})
  &= |\braket{\phi_{1}}{\phi_{2}}|^{2}
    \notag\\
  & = \Re(\braket{\phi_{1}}{\phi_{2}})^{2}
    + \Im(\braket{\phi_{1}}{\phi_{2}})^{2}
    \notag\\
  & \geq \Re(\braket{\phi_{1}}{\phi_{2}})^{2}
    \notag\\
  & = (1-\epsilon)^{2}
    \notag\\
  &\geq 1-2\epsilon.
    \label{eq:fidfromoverlap}
\end{align}
If \(|\ket{\phi_{1}}-\ket{\phi_{2}}|^{2}\leq \epsilon\), then from
\(|\ket{\phi_{1}}-\ket{\phi_{2}}|^{2} =
2-2\Re(\braket{\phi_{1}}{\phi_{2}})\) we obtain
\(\Re(\braket{\phi_{1}}{\phi_{2}}) \geq 1-\epsilon/2\) and from
Eq.~\eqref{eq:fidfromoverlap},
\(F(\ketbra{\phi_{1}},\ketbra{\phi_{2}}) \geq 1-\epsilon\).
Conversely, suppose that \(F(\ketbra{\phi_{1}},\ketbra{\phi_{2}}) \geq 1-\epsilon\).
There exists a phase factor \(e^{i\phi}\) such that
\(\bra{\phi_{1}}e^{i\phi}\ket{\phi_{2}}\) is real and positive and
\(|\ket{\phi_{1}}-e^{i\phi}\ket{\phi_{2}}|^{2} = 2-2|\braket{\phi_{1}}{\phi_{2}}|
=2\qty(1-\sqrt{F(\ketbra{\phi_{1}},\ketbra{\phi_{2}})})\leq 2\qty(1-\sqrt{1-\epsilon})\leq 2\epsilon\).

\section{Bounds on the distance between states coupled with \(\hat{q}_{\delta}\) and \(\hat{q}\)}
\label{app:appendix3}

For convenience, we omit the tensor
product symbol \(\otimes\) and use the convention that operators and
states labeled with \(M\) belong to the apparatus system.
We wish to determine a bound on
\begin{align}
  |e^{-i\hat{q}_{\delta}\hat{s}_{M}}\ket{\psi} -
  e^{-i\theta}e^{-i\hat{q}\hat{s}_{M}}\ket{\psi}|^{2}
  &= 2-2\Re(e^{-i\theta}\bra{\psi}e^{i\hat{q}_{\delta}\hat{s}_{M}}
    e^{-i\hat{q}\hat{s}_{M}}\ket{\psi})
    \label{eq:measbnd1}
\end{align}
for states \(\ket{\psi}\) that are vacuum on the LO modes, where we
can minimize the bound over real \(\theta\) since we do not
distinguish states with different global phases.  Because the action on \(M\)
is diagonal with respect to the projection valued measure
\(d\Pi(s)_{M}\) of \(\hat{s}_{M}\), we can write
\begin{align}
  e^{i\hat{q}_{\delta}\hat{s}_{M}}e^{-i\hat{q}\hat{s}_{M}}
  &= \int_{s}d\Pi(s)_{M}  \qty(e^{i\hat{q}_{\delta}\hat{s}_{M}}e^{-i\hat{q}\hat{s}_{M}})
    \notag\\
  &=\int_{s}d\Pi(s)_{M}\qty(e^{i\hat{q}_{\delta}s}e^{-i\hat{q}s})
    \label{eq:couplexps}
\end{align}
and substitute in Eq.~\eqref{eq:measbnd1}. For an upper bound on the
left-hand side of this identity, it suffices to determine a lower
bound on
\(\Re\qty(\bra{\psi}e^{i\hat{q}_{\delta}s}e^{-i\hat{q}s}\ket{\psi})\)
as a function of \(s\) and integrate.  For this purpose, we first
express \(e^{i\hat{q}_{\delta}s}e^{-i\hat{q}s}\) as a convenient
conjugation by displacements.  The terms for different mode pairs
\(a_{k}, b_{k}\) in \(\hat{q}_{\delta}\) and \(\hat{q}\) commute. 
In consideration of Eq.~\eqref{eq:bbpdiffop}, the definition of \(\delta\) and the relationship between \(\bm{\alpha}\)
and \(\bm{\beta}\), we can write
\begin{align}
  e^{i\hat{q}_{\delta}s}e^{-i\hat{q}s} &=
  \prod_{k=1}^{N}e^{i( \omega_{k} ( \hat{a}_{k}^{\dagger} \beta_{k} + \beta_{k}^{*} \hat{a}_{k} ) + \delta \omega_{k} ( \hat{a}_{k}^{\dagger} \hat{b}_{k} + \hat{b}_{k}^{\dagger} \hat{a}_{k} ))s}
  e^{-i( \omega_{k} ( \hat{a}_{k}^{\dagger} \beta_{k} + \beta_{k}^{*} \hat{a}_{k} ))s}.
    \label{eq:logbymode}
\end{align}
For the next steps, fix \(k\in\{1,\ldots,N\}\) and define
\begin{align}
  \hat{q}_{a}&=
               \hat{a}_{k}^{\dagger} \beta_{k} + \beta_{k}^{*} \hat{a}_{k},
               \notag\\
  \hat{q}_{b} &=
                -i(\hat{b}_{k}^{\dagger} \beta_{k} - \beta_{k}^{*} \hat{b}_{k}),
                \notag\\
  \hat{B} &=\hat{a}_{k}^{\dagger} \hat{b}_{k} + \hat{b}_{k}^{\dagger} \hat{a}_{k} .
            \label{eq:genericmodealgebra1}
\end{align}
Let \(t=\omega_{k}s\) so that the mode \(k\) factor of
Eq.~\eqref{eq:logbymode} can be written as
\begin{align}
  \hat{F}_{k}=e^{it(\hat{q}_{a}+\delta \hat{B})}e^{-it\hat{q}_{a}}.
\end{align}
We have
the following commutation relationships between \(\hat{q}_{a}\),
\(\hat{q}_{b}\) and \(\hat{B}\)
\begin{align}
  [\hat{q}_{a},\hat{q}_{b}] &=0
                              ,\notag\\
  [\hat{q}_{a},\hat{B}] &= -i \hat{q}_{b}
                          ,\notag\\
  [\hat{q}_{b},\hat{B}] &=i\hat{q}_{a}
                          , 
\end{align}
which implies that  the three operators generate an affine extension
of the circle group with \(\hat{q}_{a}\) and \(\hat{q}_{b}\) generating
displacements and \(\hat{B}\) generating rotations.
Let \(\vec{q}=(\hat{q}_{a},\hat{q}_{b})^{T}\), and
define the rotation matrix \(R_{\phi} =
\begin{pmatrix}
  \cos(\phi)& -\sin(\phi)\\
  \sin(\phi)& \cos(\phi)
\end{pmatrix}
\) and the matrix
\(S =
\begin{pmatrix}
  0&1\\-1&0
\end{pmatrix}
\).

\Pc{
  \newcommand{\Adj}{{\rm adj.}}
  A representation \(\pi\) of the generated group as affine transformations that is faithful
  for the adjoint representation of the unitary operators generated by
  \(\hat{B}\), \(\hat{q}_{a}\) and \(\hat{q}_{b}\) can be defined by
  \begin{align}
    \pi\qty(\Adj e^{-i\vec{u}^{T}\vec{q}}) \vec{x} &= \vec{x}+\vec{u}
                                 \notag\\
    \pi\qty(\Adj e^{-i\phi \hat{B}})\vec{x} &= R_{\phi}\vec{x},
                                   \notag
  \end{align}
  for vectors \(\vec{x}\) and \(\vec{r}\) in \(\rls^{2}\), where
  \(\Adj U = \rho \mapsto U \rho U^{\dagger}\) denotes the adjoint
  action of \(U\).  As \(3\times 3\) matrices acting on \(\cmplxs^{3}\) 
  the group elements and their generators can be represented by
  \begin{align}
    \pi(u^{T}\vec{q}) &=
                        \begin{pmatrix}
                          0&0&iu_{1}\\
                          0&0&iu_{2}\\
                          0&0&0
                        \end{pmatrix}
                               ,
                             &
                               \pi\qty(\Adj e^{-i u^{T}\vec{q}}) &=
                                                              \begin{pmatrix}
                                                                1&0&u_{1}\\
                                                                0&1&u_{2}\\
                                                                0&0&1
                                                              \end{pmatrix}
                                                                     ,\notag\\
    \pi(\hat{B}) &=
                        \begin{pmatrix}
                          0&-i&0\\
                          i&0&0\\
                          0&0&0
                        \end{pmatrix}
                               ,
                             &
   \pi\qty( \Adj e^{-i\phi\hat{B}}) &=
                  \begin{pmatrix}
                    \cos(\phi)&-\sin(\phi)&0\\
                    \sin(\phi)&\cos(\phi)&0\\
                    0&0&1
                  \end{pmatrix}
                         .
                         \notag
  \end{align}
  The representation relates to the adjoint representation with the commutation
  relationships of Eq.~\eqref{eq:commutationofbq} given below according to
  \begin{align}
    e^{-i\vec{u}^{T}\vec{q}}(\vec{x}^{T}S^T\vec{q} + \hat{B})e^{i\vec{u}^{T\vec{q}}}
    &= \vec{x}^{T}S^{T}\vec{q}+\hat{B}-\vec{u}^{T}S\vec{q}
      \notag\\
    &= (\vec{x}+\vec{u})^{T}S^T\vec{q} + \hat{B},
      \notag\\
    e^{-i\phi B}(\vec{x}^{T}S^T\vec{q} + \hat{B})e^{i\phi B}
    &=\vec{x}^{T}S^TR_{-\phi}\vec{q}+\hat{B}
      \notag\\
    &=(xR_{\phi})^{T}S^T\vec{q}+\hat{B},
      \notag
  \end{align}
  because \(S^{T}=-S\) and
  \( \vec{x}^{T}S^TR_{-\phi} = \vec{x}^{T}R_{-\phi}S^T =
  (R_{\phi}\vec{x})^{T}S^{T}\).  }

The commutation
relationships imply conjugation relationships
\begin{align}
  e^{-i\hat{B}\phi}\vec{q}e^{i\hat{B}\phi}
  &= R_{-\phi}\vec{q},
    \notag\\
  e^{-iu^{T}\vec{q}}\hat{B}e^{i\vec{u}^{T}\vec{q}}
  &=\hat{B} - \vec{u}^{T}S\vec{q}
    \label{eq:commutationofbq}
\end{align}
for two-dimensional, real row vectors \(\vec{u}\).
Consequently
\begin{align}
  e^{-i\vec{u}^{T}\vec{q}} e^{i\phi \hat{B}} e^{i\vec{u}^{T}\vec{q}}
  &=e^{i\phi (B- \vec{u}^{T}S\vec{q}))}
    \label{eq:expqb1}\\
  &=e^{i\phi \hat{B}}e^{i\vec{u}^{T}(1- R_{-\phi})\vec{q}},
\end{align}
where the second line is obtained from the left-hand side of the first
by moving \(e^{i\phi \hat{B}}\) to the left and merging the two
commuting exponentials that result on the right.  By solving for
\(\vec{u}^{T}\) in the identity
\(\vec{v}^{T}=\vec{u}^{T}(1-R_{-\phi})\), we can rewrite the identity
between the two operators on the right-hand sides as
\begin{align}
  e^{i\phi \hat{B}} e^{i \vec{v}^{T}\vec{q}}
  &= e^{i\phi (\hat{B} - \vec{v}^{T}(1-R_{-\phi})^{-1}S\vec{q})},
    \label{eq:expqb2}
\end{align}
provided that \(R_{-\phi}\ne \one\).  We modify \(\hat{F}_{k}\) by
replacing \(\hat{q}_{a}\) with \(\vec{w}^{T}\vec{q}\) and compute
\begin{align}
  e^{it( \vec{w}^{T}\vec{q}+\delta\hat{B})}e^{-it \vec{w}^{T}\vec{q}}
    &=
    e^{-i \vec{w}^{T}S\vec{q}/\delta}e^{it\delta\hat{B}}e^{-it\vec{w}^{T}\vec{q}}e^{i\vec{w}^{T}S\vec{q}/\delta}.
    \label{eq:simplerotationandshifts1}
\end{align}
For this computation, we applied Eq.~\eqref{eq:expqb1} to the first
factor, where we replaced \(\vec{u}^{T}\) by \(\vec{w}^{T}S/\delta\)
and \(\phi\) by \(\delta\), and exchanged commuting terms.  Let
\(\phi=[t\delta]_{\pi} = ((t\delta+\pi)\mod(2\pi)) - \pi\) be the
angle \(t\delta\) expressed in the interval \([-\pi,\pi]\) modulo
\(2\pi\).  The term \(e^{it\delta\hat{B}}\) is \(2\pi\) periodic in
\(t\delta\) as an operator, so
\(e^{it\delta\hat{B}}= e^{i\phi\hat{B}}\). This is because \(\hat{B}\)
is the beam-splitter Hamiltonian and can be written as the difference
of two number operators of orthogonal modes with annihilation
operators \((\hat{a}_{k}\pm\hat{b}_{k})/\sqrt{2}\).  Therefore
\begin{align}
  e^{it( \vec{w}^{T}\vec{q}+\delta\hat{B})}e^{-it \vec{w}^{T}\vec{q}}
  &=
    e^{-i \vec{w}^{T}S\vec{q}/\delta}e^{i\phi\hat{B}}e^{-it\vec{w}^{T}\vec{q}}e^{i\vec{w}^{T}S\vec{q}/\delta}
    \notag\\
  &=
  e^{-i \vec{w}^{T}S\vec{q}/\delta}e^{i\phi(\hat{B}+
  t\vec{w}^{T}(1-R_{-\phi})^{-1}S\hat{q})}e^{i\vec{w}^{T}S\vec{q}/\delta}
    \notag\\
  &=
    e^{-i \vec{w}^{T}S\vec{q}/\delta}e^{it\vec{w}^{T}(1-R_{-\phi})^{-1}\vec{q}}e^{i\phi\hat{B}}
    e^{-it\vec{w}^{T}(1-R_{-\phi})^{-1}\vec{q}}e^{i\vec{w}^{T}S\vec{q}/\delta}.
    \label{eq:shiftedrotation1}
\end{align}
To obtain the second line, we applied Eq.~\eqref{eq:expqb2} to the middle two terms.
For the third line, we applied Eq.~\eqref{eq:expqb1} to the middle term.
The matrix \((1-R_{-\phi})^{-1}\) evaluates to 
\begin{align}
  \begin{pmatrix}
    1-\cos(\phi) & -\sin(\phi)\\
    \sin(\phi)& 1-\cos(\phi)
  \end{pmatrix}^{-1}
               &=
                 \frac{1}{(1-\cos(\phi))^{2}+\sin(\phi)^{2}}
                 \begin{pmatrix}
                   1-\cos(\phi) & \sin(\phi)\\
                   -\sin(\phi)&1-\cos(\phi)
                 \end{pmatrix}
                               \notag\\
                 &=
                   \frac{1}{2(1-\cos(\phi))}
                 \begin{pmatrix}
                   1-\cos(\phi) & \sin(\phi)\\
                   -\sin(\phi)&1-\cos(\phi)
                 \end{pmatrix}
                               \notag\\
                 &=
                   \frac{1}{2}
                   \begin{pmatrix}
                     1& \sin(\phi)/(1-\cos(\phi))\\
                     -\sin(\phi)/(1-\cos(\phi)) &1
                   \end{pmatrix}
                                                  \notag\\
                 &=
                   \frac{1}{2}
                   \begin{pmatrix}
                     1& \cot(\phi/2)\\
                     -\cot(\phi/2) &1
                   \end{pmatrix},
\end{align}
where we used the trigonometric identities
\(\sin(\phi)=2\sin(\phi/2)\cos(\phi/2)\) and
\(1-\cos(\phi) = 2\sin(\phi/2)^{2}\) for the last line.
We substitute \(\vec{w}=(1,0)\) in Eq.~\eqref{eq:shiftedrotation1}
and write 
\begin{align}
  e^{it( \hat{q}_{a}+\delta\hat{B})}e^{-it \hat{q}_{a}}
  &=
    e^{-i (\hat{q}_{b}/\delta-t(\hat{q}_{a}+\cot(\phi/2)\hat{q}_{b})/2)}e^{i\phi\hat{B}}
    e^{i(\hat{q}_{b}/\delta-t(\hat{q}_{a}+\cot(\phi/2)\hat{q}_{b})/2)}
    \notag\\
  &=
    e^{i t\qty(\frac{1}{2}\hat{q}_{a}
    - \qty(\frac{1}{\delta t}-\frac{\cot(\phi/2)}{2})\hat{q}_{b})}
    e^{i\phi\hat{B}}
    e^{-i t\qty(\frac{1}{2}\hat{q}_{a}
    - \qty(\frac{1}{\delta t}-\frac{\cot(\phi/2)}{2})\hat{q}_{b})}
    \notag\\
  &=
    e^{i t\qty(\frac{1}{2}\hat{q}_{a})}
     e^{-it\qty(\frac{1}{\delta t}-\frac{\cot(\phi/2)}{2})\hat{q}_{b}}
    e^{i\phi\hat{B}}
     e^{it\qty(\frac{1}{\delta t}-\frac{\cot(\phi/2)}{2})\hat{q}_{b}}
    e^{-i t\qty(\frac{1}{2}\hat{q}_{a})}
    \notag\\
  &=
    e^{i t\qty(\frac{1}{2}\hat{q}_{a})}
    e^{i\qty(\phi\hat{B}+t\qty(\frac{\phi}{t\delta}-\frac{\phi}{2}\cot(\phi/2))\hat{q}_{a})}
    e^{-i t\qty(\frac{1}{2}\hat{q}_{a})}.
    \label{eq:permodeops}
\end{align}

Before we continue, we evaluate \(g(\phi) = (\phi/2) \cot(\phi/2)\) for
\(\phi\in [-\pi,\pi]\).  From the double angle formula for the cotangent,
\((\phi/2)\cot(\phi/2) = (\phi/4)\cot(\phi/4) -(\phi/4) \tan(\phi/4)\). Recursively
applying the double angle formula to the cotangent on the right-hand
side gives
\begin{align}
  g(\phi) &= (\phi/2^{k})\cot(\phi/2^{k}) - \sum_{j=2}^{k}(\phi/2^{j})\tan(\phi/2^{j}).
\end{align}
The limit as \(y\) goes to zero of \(y\cot(y)\) is \(1\), so letting \(k\rightarrow\infty\)
we get
\begin{align}
  g(\phi) &= 1-\phi \sum_{j=2}^{\infty}\tan(\phi/2^{j})/2^{j}.
\end{align}
For \(z\in[0,\pi/4]\), we have \(z\leq \tan(z) \leq (4/\pi)z\). Therefore, for \(\phi\in [0,\pi]\)
\begin{align}
  \frac{1}{12} \phi = \sum_{j=2}^{\infty}\phi \,2^{-2j} &\leq
  \sum_{j=2}^{\infty}\tan(\phi/2^{j})/2^{j}
  \leq \frac{4}{\pi}\phi\sum_{j=2}^{\infty}2^{-2j}
    =\frac{1}{3\pi} \phi \leq \frac{1}{9}\phi.
\end{align}
Since \(g(\phi)\) is symmetric in \(\phi\), we deduce that for \(\phi\in[-\pi,\pi]\),
\begin{align}
  g(\phi) &\in 1-\phi^{2}\qty[\frac{1}{12},\frac{1}{3\pi}]
            \subseteq 1-\phi^{2}\qty[\frac{1}{12},\frac{1}{9}].
\end{align}

We return to Eq.~\eqref{eq:logbymode}. For this purpose we make the
mode index explicit by writing \(\hat{q}_{a,k}\), \(\hat{q}_{b,k}\),
\(\hat{B}_{k}\), \(\phi_{k}\) for \(\hat{q}_{a}\), \(\hat{q}_{b}\),
\(\hat{B}\), \(\phi\) defined above.  We also write
\(\hat{q}_{k}=\omega_{k}\hat{q}_{a,k}\) and substitute \(\omega_{k}s\)
for \(t\). Since \(\alpha_{k} = i\omega_{k}\beta_{k}\) we have
\(\sum_{k}\omega_{k}\hat{q}_{a,k}=\hat{q}\).  The 
quadratures \(\hat{q}_{k}\) have normalization \(|\alpha_{k}|^{2}\).
After applying Eq.~\eqref{eq:permodeops} to each mode,
Eq.~\eqref{eq:logbymode} becomes
\begin{align}
  e^{i\hat{q}_{\delta}s}e^{-i\hat{q}s}
  &=
    e^{is\hat{q}/2}
    e^{i\sum_{k}\qty(
    \phi_{k}\hat{B}_{k}
    +s\qty(\frac{\phi_{k}}{\omega_{k}s\delta} - \frac{\phi_{k}}{2}\cot(\phi_{k}/2))\hat{q}_{k}
    )}
    e^{-is\hat{q}/2}.
\end{align}
This is of the form \(\hat{U} e^{i\hat{H}}\hat{U}^{\dagger}\) for the
unitary \(\hat{U}=e^{is\hat{q}/2}\) and the self-adjoint operator
\begin{align}
  \hat{H} &=\sum_{k} \phi_{k}\hat{B}_{k} + s u(\delta\,s)\hat{u}_{\delta\,s},
\end{align}
where \(\hat{u}_{\delta\,s}\) is a normalized quadrature that commutes with the \(\hat{q}_{k}\).
The quadrature \(\hat{u}_{\delta\,s}\) is given by
\begin{align}
  \hat{u}_{\delta\,s} &=
            \frac{1}{u(\delta\,s)}\sum_{k}\qty(\frac{\phi_{k}}{\omega_{k}\delta\,s}
            - \frac{\phi_{k}}{2}\cot(\phi_{k}/2))\hat{q}_{k}
\end{align}
with \(u(\delta\,s)\geq 0\) defined by
\begin{align}
  u(\delta\,s)^{2} &= \sum_{k}|\alpha_{k}|^{2}\qty(\frac{\phi_{k}}{\omega_{k}\delta\,s}
            - \frac{\phi_{k}}{2}\cot(\phi_{k}/2))^{2}.
\end{align}
For the case where the \(\omega_{k}\) are identical, \(\hat{u}_{\delta\,s}=\hat{q}\).

We now have
\begin{align}
  \qty|e^{-i\hat q_{\delta}s}\ket{\psi} - e^{-i\theta}e^{-i\hat{q}s}\ket{\psi}|^{2}
  &=
  \bra{\psi}\qty(2-2\Re\qty(e^{is\hat{q}/2}e^{i(\hat{H}-\theta)}e^{-is\hat{q}/2}))
    \ket{\psi}.
\end{align}
By applying Eq.~\eqref{eq:reeith-ineq-op} we can bound
\begin{align}
  2-2\Re\qty(\hat{U} e^{i\hat{H}-\theta}\hat{U}^{\dagger})
  &\leq
    \hat{U} (\hat{H}-\theta)^{2}\hat{U}^{\dagger}.
\end{align}
For states \(\ket{\psi}\) that are
vacuum on the LO modes, \(\ket{\psi'} = U\ket{\psi}\) is also
vacuum on the LO modes.  When contracting the operator \(\hat{H}^{2}\)
with vacuum on the LO modes, terms that are linear in each LO mode's
annihilation and creation operators are eliminated. The surviving
terms are those that involve no operators on the LO modes or those
that act as \(\hat{b}_{k}\hat{b}_{k}^{\dagger}\) on the LO modes.
Therefore
\begin{align}
  \qty|e^{-i\hat q_{\delta}s}\ket{\psi} - e^{-i\theta}e^{-i\hat{q}s}\ket{\psi}|^{2}
 \hspace*{-2in}&\notag\\
  &\leq \bra{\psi}e^{is\hat{q}/2}
    \qty(\sum_{k}\phi_{k}^{2}\hat{a}_{k}^{\dagger}\hat{a}_{k} 
    +\qty(su(\delta\,s)\hat{u}_{\delta\,s}-\theta)^{2})e^{-is\hat{q}/2}\ket{\psi}
    \notag\\
  &=
    \sum_{k}\phi_{k}^{2}
    \bra{\psi}e^{is\hat{q}_{k}/2}
    \hat{a}_{k}^{\dagger}\hat{a}_{k}
    e^{-is\hat{q}_{k}/2}\ket{\psi}
    +\bra{\psi}\qty(s u(\delta\,s)\hat{u}_{\delta\,s}-\theta)^{2}\ket{\psi}
    \notag\\
  &=
    \sum_{k}\phi_{k}^{2}
    \bra{\psi}
    \qty(\hat{a}_{k}^{\dagger}-s\alpha_{k}^{*}/2)
    \qty(\vphantom{\hat{a}_{k}^{\dagger}}\hat{a}_{k}-s\alpha_{k}/2)
    \ket{\psi}
    +\bra{\psi}\qty(s u(\delta\,s)\hat{u}_{\delta\,s}-\theta)^{2}\ket{\psi}
    \notag\\
  &\leq
    \sum_{k}\phi_{k}^{2}
    \qty(2\bra{\psi}
    \hat{n}_{k}    \ket{\psi}+s^{2}|\alpha_{k}|^{2}/2)
    +\bra{\psi}\qty(s u(\delta\,s)\hat{u}_{\delta\,s}-\theta)^{2}\ket{\psi}.
    \label{eq:detailedbnd0}
\end{align}
For the last line, we applied the inequality
\(\qty(\hat{A}-x)^{\dagger}
\qty(\hat{A}-x)
\leq 2\hat{A}^{\dagger}\hat{A}+2|x|^{2}\). To see this, for states \(\ket{\phi}\) in the
domain of \(\hat{A}\) we have
\begin{align}
  \bra{\phi}\qty(\hat{A}-x)^{\dagger}
  \qty(\hat{A}-x)\ket{\phi}
  &\leq   \bra{\phi}\qty(\hat{A}-x)^{\dagger}
  \qty(\hat{A}-x)\ket{\phi} +   \bra{\phi}\qty(\hat{A}+x)^{\dagger}
    \qty(\hat{A}+x)\ket{\phi}
    \notag\\
  & = 2\bra{\phi}\hat{A}^{\dagger}\hat{A}\ket{\phi}
         + 2|x|^{2}\braket{\phi} = \bra{\phi}\qty( 2\hat{A}^{\dagger}\hat{A}+2|x|^{2})\ket{\phi}.
\end{align}
In Eq.~\eqref{eq:detailedbnd0}, for fixed \(s\) we can optimize
the choice of \(\theta\) by setting
\(\theta=\bra{\psi}su(\delta\,s)\hat{u}_{\delta\,s}\ket{\psi}\).

The bound of Eq.~\eqref{eq:detailedbnd0} is our most specific bound on
the difference between the states obtained after evolving according to
\(\hat{q}_{\delta}\) or \(\hat{q}\) for  the fixed eigenvalue \(s\) of the
apparatus operator \(\hat{s}_{M}\). In principle, it suffices to
substitute the bound in Eq.~\eqref{eq:measbnd1} after inserting the integral
over the apparatus projection-valued measure \(d\Pi(s)_{M}\).  Because
of the unwieldy functional dependence on \(s\) in
Eq.~\eqref{eq:detailedbnd0}, this does not result in readily
interpretable bounds.  In order to get more interpretable bounds, we
derive conservative simplified bounds.  For simplicity, and because
\(s\) is to be replaced by \(\hat{s}_{M}\), we set \(\theta=0\).
We first treat the general case and then obtain specific bounds for the
case of standard pulsed homodyne, in which case \(\omega_{k}=\omega=1\).

We begin with an upper bound on
\(\bra{\psi}u(\delta\,s)^{2}\hat{u}_{\delta
  s}^{2}\ket{\psi}\). The bound in Eq.~\eqref{eq:detailedbnd0} is
infinite if \(\ket{\psi}\) is not in the domain of every
\(\hat{n}_{k}\) with non-zero \(\phi_{k}\).  We therefore assume that
\(\bra{\psi}\hat{n}_{k}\ket{\psi}\) is finite for every \(k\).
Equivalently, we assume that the total photon number expectation is
finite.  Define
\(h_{k}(\delta\,s) = \phi_{k}/(\omega_{k}\delta\,s) -
(\phi_{k}/2)\cot(\phi_{k}/2)\).  Let \(\hat{n}_{\delta\,s}\) be the
number operator for the mode associated with the normalized quadrature
\(\hat{u}_{\delta\,s}\). With our conventions for quadrature
normalization, we have
\(\hat{u}_{\delta\,s}^{2}\leq 4(\hat{n}_{\delta\,s}+1/2)\), see
Eq.~\eqref{app:eq:quadsq-hatn}.  An annihilation operator for the mode
of \(\hat{u}_{\delta\,s}\) is given by
\begin{align}
  \hat{a}_{u}&=\frac{1}{u(\delta\,s)}\sum_{k}\alpha_{k}^{*}h_{k}(\delta\,s) \hat{a}_{k},
\end{align}
where we overloaded the suffixes of \(a\) for convenience.
This gives
\begin{align}
  \bra{\psi}u(\delta\,s)^{2}\hat{n}_{u}\ket{\psi}
  &= \qty| u(\delta\,s)\hat{a}_{u}\ket{\psi}|^{2}
    \notag\\
  &=
    \qty|\sum_{k}\alpha_{k}^{*}h_{k}(\delta\,s) \hat{a}_{k}\ket{\psi}|^{2}.
\end{align}
We need the following version of the Cauchy-Schwarz inequality.  For
coefficients \(\gamma_{k}\), operators \(\hat{A}_{k}\) and a state
\(\ket{\varphi}\) in the domain of every operator \(\hat{A}_{k}\), we
have
\begin{align}
  \qty|\sum_{k}\gamma_{k}\hat{A}_{k}\ket{\varphi}|^{2}
  &\leq \sum_{k}|\gamma_{k}|^{2}
    \sum_{k}\qty|\hat{A}_{k}\ket{\varphi}|^{2}.
\end{align}
To apply this inequality, we set \(\gamma_{k}=\alpha_{k}^{*}\sqrt{h_{k}(\delta\,s)}\)
and \(\hat{A}_{k}=\sqrt{h_{k}(\delta\,s)}\hat{a}_{k}\). This gives
\begin{align}
  \bra{\psi}u(\delta\,s)^{2}\hat{n}_{\delta\,s}\ket{\psi}
  &\leq \qty(\sum_{k}|\alpha_{k}|^{2}|h_{k}(\delta\,s)|)
            \sum_{k}|h_{k}(\delta\,s)|\bra{\psi} \hat{n}_{k}\ket{\psi}.
\end{align}
Define \(\bar h(\delta\,s) = \sum_{k}|\alpha_{k}|^{2}|h_{k}(\delta\,s)|\),
\(\overline{h^{2}}(\delta\,s) =
\sum_{k}|\alpha_{k}|^{2}h_{k}(\delta\,s)^{2} = u(\delta\,s)^{2}\) and
\(\overline{\phi^{2}}(\delta\,s) = \sum_{k}|\alpha_{k}|^{2}\phi_{k}^{2}\), which are weighted averages of the \(|h_{k}(\delta\,s)|\), \(h_{k}(\delta\,s)^{2}\) and
\(\phi_{k}^{2}\).  Substituting in Eq.~\eqref{eq:detailedbnd0} with
\(\theta=0\) gives
\begin{align}
    \qty|e^{-i\hat q_{\delta}s}\ket{\psi} - e^{-i\theta}e^{-i\hat{q}s}\ket{\psi}|^{2}
  \hspace*{-1in}&\notag\\
  &\leq
    \sum_{k}\qty(2\phi_{k}^{2}+4 s^{2}|h_{k}(\delta\,s)| \bar h(\delta\,s) )
    \bra{\psi}\hat{n}_{k}\ket{\psi}
    + s^{2}\qty(\overline{\phi^{2}}(\delta\,s)/2 + 2\overline{h^{2}}(\delta\,s)).
    \label{eq:detailedbnd1}
\end{align}

To bound the coefficients in Eq.~\eqref{eq:detailedbnd1} with more simple
expressions, we use the following inequalities:
\begin{align}
  \phi_{k}^{2} &\leq (\omega_{k} \delta\,s )^{2}
  , \;
  \phi_{k}^{2} \leq \pi^{2},
    \notag\\
  \qty|\frac{\phi_{k}}{\omega_{k}\delta\,s}-1|
   &\leq\frac{2}{\pi^{2}}(\omega_{k}\delta\,s)^{2}
  ,\;
    \qty|\frac{\phi_{k}}{\omega_{k} \delta\,s}-1|  \leq 2,
    \notag\\
  \qty|1-\frac{\phi_{k}}{2}\cot(\frac{\phi_{k}}{2})|
  &\leq \frac{1}{3\pi}\phi_{k}^{2} \leq \frac{1}{3\pi} (\omega_{k}\delta\,s)^{2}
  ,\;
    \qty|1-\frac{\phi_{k}}{2}\cot(\frac{\phi_{k}}{2})|
    \leq 1,
    \notag\\
  |h_{k}(\delta\,s)| &\leq \qty|\frac{\phi_{k}}{\omega_{k}\delta\,s} -1|
                    + \qty|1 - \frac{\phi_{k}}{2}\cot(\frac{\phi_{k}}{2})|
                    \notag\\
               &\leq \frac{1}{\pi^{2}}(2+\pi/3)(\omega_{k}\delta\,s)^{2}
                 \leq \frac{1}{3}(\omega_{k}\delta\,s)^{2},
    \notag\\
  |h_{k}(\delta\,s)| &\leq 3.
                      \label{eq:phi+bnds}
\end{align}
For the last inequality for \(|h_{k}(\delta\,s)|\), we simplified the
factor by applying \(2+\pi/3\leq \pi\) and \(\pi>3\).  Define
\(\overline{\omega^{2}} = \sum_{k}|\alpha_{k}|^{2}\omega_{k}^{2}\).
For substitution into Eq.~\eqref{eq:detailedbnd1} we bound the terms
according to
\begin{align}
  \phi_{k}^{2}
  & \leq (\delta\,s)^{2}\omega _{k}^{2},
    \notag\\
  |h_{k}(\delta\,s)|\bar h(\delta\,s)
  &\leq 3 |h_{k}(\delta\,s)| \leq (\delta\,s)^{2}\omega_{k}^{2},
    \notag\\
  \overline{\phi^{2}}(\delta\,s)
  &\leq (\delta\,s)^{2}\overline{\omega^{2}}
    \notag\\
  \overline{h^{2}}(\delta\,s)
  &=\sum_{k}|\alpha_{k}|^{2}h_{k}(\delta\,s)^{2}
    \notag\\
  &\leq 3\sum_{k}|\alpha_{k}|^{2}|h_{k}(\delta\,s)|
    \notag\\
   &=3 \bar{h}(\delta\,s) \leq (\delta\,s)^{2} \overline{\omega^{2}}.
\end{align}
Define \(\hat{\Omega} = \sum_{k}\omega_{k}^{2}\hat{n}_{k}\). We obtain
\begin{align}
    \qty|e^{-i\hat q_{\delta}s}\ket{\psi} - e^{-i\hat{q}s}\ket{\psi}|^{2}
  &\leq
    (\delta\,s)^{2}\bra{\psi}
    \qty(2\qty(1+2s^{2})\hat{\Omega} + s^{2}(5/2)\overline{\omega^{2}})
    \ket{\psi}
    \label{eq:simplifiedbndm}
    \\
  &\leq
    4(\delta\,s)^{2}\bra{\psi}
    \qty(\qty(1+s^{2})\hat{\Omega} + s^{2}\overline{\omega^{2}})
    \ket{\psi}.
    \label{eq:simplifiedbnd0}
\end{align}
The last inequality is conservative for the sake of simplicity but
retains the same factor for the coefficient of
\(\delta^{2}s^{4}\hat{\Omega}\).  The second bound in
Thm.~\ref{thm:mainbnd} is obtained by moving the leading factor of
\(s^{2}\) into the bra-ket expression and substituting \(\hat{s}_{M}\)
for \(s\).

The bound of Eq.~\eqref{eq:simplifiedbndm} is not optimized for the
case of small \(\delta\,s\) in that it conservatively estimated extra
factors of order \((\delta\,s)^{2}\) for small \(\delta\,s\). To obtain
a bound that includes these extra factors we bound the coefficients of
Eq.~\eqref{eq:detailedbnd1} according to
\begin{align}
  \bar h(\delta\,s)
  & \leq (\delta\,s)^{2}\overline{\omega^{2}}/3
  \notag\\
  |h_{k}(\delta\,s)|\bar h(\delta\,s)
  &\leq (\delta\,s)^{4}\omega_{k}^{2}\overline{\omega ^{2}}/9
    \notag\\
  \overline{h^{2}}(\delta\,s)
  &\leq (\delta\,s)^{4}\sum_{k}|\alpha_{k}|^{2}\omega_{k}^{4}/9
  = (\delta\,s)^{4}\overline{\omega^{4}}/9,
\end{align}
where we introduced \(\overline{\omega^{4}}=\sum_{k}|\alpha_{k}|^{2}\omega_{k}^{4}\) .
We get
\begin{align}
  \qty|e^{-i\hat q_{\delta}s}\ket{\psi} - e^{-i\hat{q}s}\ket{\psi}|^{2}
  \hspace*{-1.5in}&
                  \notag\\
  &\leq
    (\delta\,s)^{2}\bra{\psi}
    \qty(2\qty(1+(2/9)s^{2}(\delta\,s)^{2}\overline{\omega^{2}})\hat{\Omega} + s^{2}\qty((1/2)\overline{\omega^{2}} +(2/9)(\delta\,s)^{2}\overline{\omega^{4}}))
    \ket{\psi}
    \notag\\
  &=2(\delta\,s)^{2}\bra{\psi}
    \qty(\hat{\Omega} + (1/4)s^{2}\overline{\omega^{2}})
    \ket{\psi} + O((\delta\,s)^{4})
  \label{eq:simplifiedbnd1}
\end{align}

Next we specialize to the case of standard pulsed homodyne to
obtain tighter bounds. For standard pulsed homodyne,
\(\omega_{k}=\omega\) for all \(k\), and with our conventions,
\(\omega=\omega_{1}=1\). This gives
\(\phi_{k}=\phi=[\delta\,s]_{\pi}\) and \(h_{k}(\delta\,s) =
h(\delta\,s) = \phi/(\delta\,s)-(\phi/2)\cot(\phi/2)\)
for all \(k\). As noted when we
defined \(\hat{u}_{\delta\,s}\), we now have
\(\hat{u}_{\delta\,s}=\hat{q}\).
According to the definitions, \(u(\delta\,s)^{2} = h(\delta\,s)\).
From Eq.~\eqref{eq:phi+bnds},  \(\phi^{2}\leq (\delta\,s)^{2}\) and \(|h(\delta\,s)|\leq
(\delta\,s)^{2}/3\).
Substituting in the bound of
Eq.~\eqref{eq:detailedbnd0} with \(\theta=0\) gives
\begin{align}
  \qty|e^{-i\hat q_{\delta}s}\ket{\psi} - e^{-i\hat{q}s}\ket{\psi}|^{2}
  &\leq
    (\delta\,s)^{2}\qty(2\bra{\psi}\hat{n}_{\tot}\ket{\psi} 
    +s^{2}\qty(\frac{1}{9}(\delta\,s)^{2}\bra{\psi}\hat{q}^{2}\ket{\psi}+\frac{1}{2})).
    \label{eq:detailedbnd_standard}
\end{align}

For comparison in the regime of small \(\delta\,s\) and \(s\), we determine lower
bounds on the minimum over angles \(\theta\) of
\(|e^{-i\hat{q}_{\delta}\hat{s}_{M}}\ket{\psi} -
e^{-i\theta}e^{-i\hat{q}\hat{s}_{M}}\ket{\psi}|^{2}\), where
\(\ket{\psi}\) is in a family of coherent states.  We begin by
considering a generic mode with the operators defined in
Eq.~\eqref{eq:genericmodealgebra1}, where we omit the mode index \(k\).
Without loss of generality, we can assume that \(\alpha\) and
\(\beta\) are non-negative real.  For
\(\vec{\gamma}=(\gamma_{a},\gamma_{b})^{T}\) with \(\gamma_{a}\) and
\(\gamma_{b}\) real, let \(\ket{\vec{\gamma}}=\ket{i\gamma_{a},\gamma_{b}}\)
be the state with amplitudes \(i\gamma_{a}\) and \(\gamma_{b}\) on
modes \(a\) and \(b\).  We have 
\begin{align}
  e^{-i\vec{u}^{T}\vec{q}}\ketbra{\gamma}e^{i\vec{u}^{T}\vec{q}}
  &=
    \ketbra{\vec{\gamma}-\beta \vec{u}},
    \notag\\
  e^{-i\phi \hat{B}}\ketbra{\vec{\gamma}}e^{i\phi \hat{B}}
  &=
    \ketbra{R_{\phi}\vec{\gamma}}.
\end{align}
\Pc{
  To calculate the effect of \(e^{-i\phi\hat{b}}\), write
  \begin{align}
    \pmatr{\hat{a}\\\hat{b}}    \ket{\vec{\gamma}}
    = \pmatr{i&0\\0&1}
                              \vec{\gamma}\ket{\vec{\gamma}}.
  \end{align}
  Let \(\hat{U}_{\phi}=e^{-i\phi\hat{B}}\) and compute
  \begin{align}
    \pmatr{\hat{a}\\\hat{b}}\hat{U}_{\phi}\ket{\vec{\gamma}}
    &=
      \hat{U}_{\phi}
      \qty(\hat{U}_{\phi}^{\dagger}\pmatr{\hat{a}\\\hat{b}}\hat{U}_{\phi})
    \vec{\gamma}.
  \end{align}
  To check the rotation direction, we take the derivative with respect
  to \(\phi\) at \(\phi=0\) of
  \(\hat{U}_{\phi}^{\dagger}\pmatr{\hat{a}\\\hat{b}}\hat{U}_{\phi}\).
  This derivative is given by \(\qty[i\hat{B},\pmatr{\hat{a}\\\hat{b}}]\).
  We have \([\hat{B},\hat{a}] = -b\) and \([\hat{B},\hat{b}] = -a\),
  so 
  \begin{align}
    \qty[i\hat{B},\pmatr{\hat{a}\\\hat{b}}]
    &= \pmatr{0&-i\\-i&0}\pmatr{\hat{a}\\\hat{b}}.
  \end{align}
  From this we infer that
  \begin{align}
    \hat{U}_{\phi}^{\dagger}\pmatr{\hat{a}\\\hat{b}}\hat{U}_{\phi}
    &= \exp(\phi\pmatr{0&-i\\-i&0})\pmatr{\hat{a}\\\hat{b}}
    \notag\\
    &=\pmatr{\cos(\phi)&-i\sin(\phi)\\-i\sin(\phi)&\cos(\phi)}\pmatr{\hat{a}\\\hat{b}}.
  \end{align}
  Accordingly
  \begin{align}
    \pmatr{\hat{a}\\\hat{b}}\hat{U}_{\phi}\ket{\vec{\gamma}}
    &=
      \hat{U}_{\phi}
      \pmatr{\cos(\phi)&-i\sin(\phi) \\-i\sin(\phi)&\cos(\phi)}
                    \pmatr{\hat{a}\\\hat{b}}
    \ket{\vec{\gamma}}
    \notag\\
    &=  \hat{U}_{\phi}
      \pmatr{\cos(\phi)&-i\sin(\phi) \\-i\sin(\phi)&\cos(\phi)}
                                                     \pmatr{i&0\\0&1}
                                                                    \vec{\gamma}\ket{\vec{\gamma}}
                                                                    \notag\\
    &=
      \pmatr{i&0\\0&1}\pmatr{\cos(\phi)&-\sin(\phi) \\\sin(\phi)&\cos(\phi)}\vec{\gamma}
                                                                \hat{U}_{\phi}\ket{\vec{\gamma}}
                                                                \notag\\
    &=
      \pmatr{i&0\\0&1}R_{\phi}\vec{\gamma} \hat{U}_{\phi}\ket{\vec{\gamma}},
  \end{align}
  From which we deduce the action of \(e^{-i\phi\hat{B}}\) given
  above.  } We consider initial states of the form
\(\ket{\vec{\gamma}_{0}}=\ket{i\gamma_{a},0}\) with
\(\vec{\gamma}_{0}=(\gamma_{a},0)^{T}\) that are vacuum on the LO
mode.  We apply Eq.~\eqref{eq:simplerotationandshifts1} to determine
\begin{align}
  e^{it( \vec{w}^{T}\vec{q}+\delta\hat{B})}e^{-it \vec{w}^{T}\vec{q}}
  \ket{\vec{\gamma}_{0}}
  &= e^{i\varphi}
    \ket{R_{-\phi}(\vec{\gamma}_{0}-\beta t \vec{w}) + (\beta/\delta) (1-R_{-\phi}) S \vec{w}}
\end{align}
for some angle \(\varphi\), where \(\phi=[t\delta]_{\pi}\) as defined
after Eq.~\eqref{eq:simplerotationandshifts1}.  Substituting \(\vec{w}=(1,0)^{T}\) gives
\begin{align}
    e^{it( \hat{q}_{a}+\delta\hat{B})}e^{-it \hat{q}_{a}}\ket{\gamma_{0}}
  &=e^{i\varphi}
    \big|i(\cos(\phi)(\gamma_{a}-\beta t) + (\beta/\delta)\sin(\phi)),
    \notag\\
  &\hphantom{=e^{i\varphi}\big|\;}
     -\sin(\phi)(\gamma_{a}-\beta t) -(\beta/\delta)(1-\cos(\phi))\big\rangle.
\end{align}
Accordingly, \(\qty|\bra{\gamma_{0}}e^{it( \hat{q}_{a}+\delta\hat{B})}e^{-it \hat{q}_{a}}\ket{\gamma_{0}}| = e^{-|\Delta(\gamma_{a})|^{2}/2}\) with
\begin{align}
  \Delta(\gamma_{a}) &= \begin{pmatrix}
    (\cos(\phi)-1)\gamma_{a}+(\sin(\phi)/(\delta t)-\cos(\phi))\beta t\\
    -\sin(\phi)(\gamma_{a}-\beta t) -(\beta/\delta)(1-\cos(\phi))
  \end{pmatrix}.
\end{align}
We now reintroduce the mode index. Let \(\gamma_{k}\) be the coherent
amplitude \(\gamma_{a_{k}}\) and \(\Delta_{k}(\gamma_{k})\) the corresponding
amplitude difference vector. Let \(\ket{\psi}\) be the coherent state
whose amplitudes on mode \(a_{k}\) are \(i\gamma_{a_{k}}\) and vacuum
on the LO modes. Multiplying over the modes gives
\begin{align}
  \qty|\bra{\psi} e^{i\hat{q}_{\delta}s}e^{-i\hat{q}s}\ket{\psi}|
  &=
    e^{-\sum_{k}|\Delta_{k}(\gamma_{k})|^{2}/2},
\end{align}
where
\begin{align}
  \Delta_{k}(\gamma_{k}) &= \begin{pmatrix}
    (\cos(\phi_{k})-1)\gamma_{k}+(\sin(\phi_{k})/(\omega_{k}\delta\,s)-\cos(\phi)_{k})\alpha_{k}s\\
    -\sin(\phi_{k})(\gamma_{k}-\alpha_{k} s)
    -(\alpha_{k}s/(\omega_{k} \delta\,s))(1-\cos(\phi_{k}))
  \end{pmatrix}.
\end{align}
To compare with the bound of Eq.~\eqref{eq:simplifiedbnd1}, we
consider small \(\delta\,s\) and \(s\), and choose \(\gamma_{k}\) to
witness a lower bound approaching that of
Eq.~\eqref{eq:simplifiedbnd1}. Consider the case where for all \(k\),
\(\omega_{k}\delta\,s \in [-\pi/4,\pi/4]\), 
so that \(\phi_{k}=\omega_{k}\delta\,s\). The second coordinate of
\(\Delta_{k}(\gamma_{k})\) has contributions of order \(\delta\,s\) and
\(\delta\,s^{2}\) and dominate the contributions of order
\((\delta\,s)^{2}\) of the first coordinate.  We neglect the first
coordinate for the bound
\begin{align}
  |\Delta_{k}(\gamma_{k})|^{2}
  &\geq
    \qty(\sin(\phi_{k})(\gamma_{k}-\alpha_{k} s)
    +(\alpha_{k}s/\phi_{k})(1-\cos(\phi_{k})))^{2}
    \notag\\
  &=
    \qty(\sin(\phi_{k})\gamma_{k}
    +\alpha_{k}s((1-\cos(\phi_{k}))/\phi_{k}-\sin(\phi_{k})))^{2}.
\end{align}
The expression on the right-hand side that is being squared is odd in \(\phi_{k}\).
For  \(0\leq x\leq \pi/4\), we have
\begin{align}
  \sin(x)&\geq (4/(\sqrt{2}\pi))x\geq (9/10)x,
           \notag\\
  (1-\cos(x)) &\leq x^{2}/2
                \notag\\
  \sin(x) - (1-\cos(x))/x & \geq (4/(\sqrt{2}\pi) - 1/2)x \geq (2/5)x.
\end{align}
We choose \(\gamma_{k}\) to have the opposite sign of \(s\), so that
\(\sin(\phi_{k})\gamma_{k}\) and
\(\alpha_{k}s((1-\cos(\phi_{k}))/\phi_{k}-\sin(\phi_{k}))\) have
the same sign.
Then, omitting the cross term when squaring,
\begin{align}
  |\Delta_{k}(\gamma_{k})|^{2}
  &\geq
    (9/10)^{2}\phi_{k}^{2}|\gamma_{k}|^{2}
    + (2/5)^{2}\phi_{k}^{2}|\alpha_{k} s|^{2}
    \notag\\
  &= (9/10)^{2}(\delta\,s)^{2}
    \qty(\omega_{k}^{2}|\gamma_{k}|^{2} 
    + (4/9)^{2} \omega_{k}^{2}|\alpha_{k}|^{2}s^{2}).
\end{align}
We sum this over \(k\) for
\begin{align}
  \sum_{k}  |\Delta_{k}(\gamma_{k})|^{2}
  &\geq
    (9/10)^{2}(\delta\,s)^{2}\bra{\vec{\gamma}_{0}}
    \hat{\Omega} +
    (4/9)^{2} s^{2}\overline{\omega^{2}}
    \ket{\vec{\gamma}_{0}}.
\end{align}
By applying \(e^{-x^{2}/2}= 1-x^{2}/2+O(x^{4})\), we can now estimate
\begin{align}
  \min_{\theta}|e^{-i\hat{q}_{\delta}\hat{s}_{M}}\ket{\psi} -
  e^{-i\theta}e^{-i\hat{q}\hat{s}_{M}}\ket{\psi}|^{2}
  &=
    2\qty(1- e^{-\sum_{k}|\Delta_{k}(\gamma_{k})|^{2}/2})
    \notag\\
  &=
    (9/10)^{2}(\delta\,s)^{2}\bra{\vec{\gamma}_{0}}
    \hat{\Omega} +
    (4/9)^{2} s^{2}\overline{\omega^{2}}
    \ket{\vec{\gamma}_{0}} + O((\delta\,s)^{4}).
\end{align}
This may be compared to Eq.~\eqref{eq:simplifiedbnd1}, which gives an upper
bound of the same form up to order \((\delta\,s)^{4}\) with somewhat larger
constants.

\section{Bounds for standard pulsed homodyne}
\label{app:standardhomodyne}

We establish the results of Sect.~\ref{sec:sconv} for standard pulsed
homodyne to obtain better bounds. See Sect.~\ref{sec:onemode} for
comparisons. The proofs parallel those of
Sect.~\ref{sec:sconv}, and we point out the differences as needed.
Throughout this section, we set \(\omega_{k}=\omega=1\). The total
photon number operator for modes \(\bm{a}\) is \(\hat{n}_{\tot}\).

\begin{theorem}[Thm.~\ref{thm:mainbnd} for standard pulsed
homodyne]\label{thm:mainbnd_sph}
Let \(\ket{\psi}\) be a joint state of the signal, LO modes, the
apparatus \(M\) and any other relevant systems including those needed
for purifying the state. Assume that these states are vacuum on the LO
modes.
  Then
  \begin{align}
    \qty|e^{-i\hat q_{\delta}s}\ket{\psi} -
e^{-i\hat{q}s}\ket{\psi}|^{2}
    &
      \leq
      (\delta\,s)^{2}\qty(2\bra{\psi_{s}}\hat{n}_{\tot}\ket{\psi_{s}}
      +s^{2}\qty(\frac{1}{9}(\delta
s)^{2}\bra{\psi_{s}}\hat{q}^{2}\ket{\psi_{s}}+\frac{1}{2})).
      \label{thm:eq:mains_sph}
  \end{align}
  For every self-adjoint apparatus operator \(\hat{s}_{M}\),
  \begin{align}
    \qty|e^{-i\hat q_{\delta}\hat{s}_{M}}\ket{\psi}
    - e^{-i\hat{q}\hat{s}_{M}}\ket{\psi}|^{2}
    &\leq
      \delta^{2}\bra{\psi_{s}}\hat{s}_{M}^{2}\qty(2\hat{n}_{\tot}
      +\hat{s}_{M}^{2}\qty(\frac{1}{9}\delta^{2}\hat{s}_{M}^{2}\hat{q}^{2}+\frac{1}{2}))\ket{\psi_{s}}.
      \label{thm:eq:mainhats_sph}
  \end{align}
\end{theorem}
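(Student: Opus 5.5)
The plan is to follow the proof of Thm.~\ref{thm:mainbnd} step by step, stopping at the detailed bound of Eq.~\eqref{eq:detailedbnd0} rather than passing through the simplified BBP bounds. For the first inequality, fix \(s\) and set \(\theta=0\). Everything in App.~\ref{app:appendix3} up to Eq.~\eqref{eq:detailedbnd0} holds for general \(\omega_{k}\), so it holds in particular for \(\omega_{k}=1\).

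In that case every mode has the same angle, \(\phi_{k}=\phi=[\delta\,s]_{\pi}\). The functions \(h_{k}\) all equal \(h(\delta\,s)=\phi/(\delta\,s)-(\phi/2)\cot(\phi/2)\), and the quadrature \(\hat{u}_{\delta\,s}\) collapses to \(\hat{q}\), with \(u(\delta\,s)=|h(\delta\,s)|\). Eq.~\eqref{eq:detailedbnd0} then reads
\begin{align}
  \qty|e^{-i\hat q_{\delta}s}\ket{\psi}-e^{-i\hat{q}s}\ket{\psi}|^{2}
  &\leq \phi^{2}\qty(2\bra{\psi}\hat{n}_{\tot}\ket{\psi}+s^{2}/2)
  + s^{2}h(\delta\,s)^{2}\bra{\psi}\hat{q}^{2}\ket{\psi},
\end{align}
where we used \(\sum_{k}|\alpha_{k}|^{2}=1\).

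Two elementary estimates from Eq.~\eqref{eq:phi+bnds} finish this step: \(\phi^{2}\leq(\delta\,s)^{2}\) and \(|h(\delta\,s)|\leq(\delta\,s)^{2}/3\). The second gives \(s^{2}h^{2}\leq s^{2}(\delta\,s)^{4}/9\). Factoring out \((\delta\,s)^{2}\) yields exactly Eq.~\eqref{thm:eq:mains_sph}, which is the content of Eq.~\eqref{eq:detailedbnd_standard}. The state \(\ket{\psi_{s}}\) in the statement should be read as \(\ket{\psi}\) here.

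The point to check is that the estimate on \(|h|\) holds for all \(s\), not only for \(|\delta\,s|\leq\pi\). For large \(|\delta\,s|\) the angle \(\phi\) is reduced mod \(2\pi\), so \(|\phi/(\delta\,s)-1|\leq 2\) and \(|1-(\phi/2)\cot(\phi/2)|\leq 1\). The bound \(|h|\leq 3\leq(\delta\,s)^{2}/3\) then holds whenever \(|\delta\,s|\geq 3\). For \(|\delta\,s|<3<\pi\) we have \(\phi=\delta\,s\), so the first term vanishes and \(|h|=|1-g(\phi)|\leq\phi^{2}/(3\pi)\leq(\delta\,s)^{2}/3\), using the bound on \(g\) derived in App.~\ref{app:appendix3}. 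With these, the first inequality holds for all real \(s\).

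For the operator version Eq.~\eqref{thm:eq:mainhats_sph}, repeat the direct-integral argument from the proof of Thm.~\ref{thm:mainbnd} verbatim. Decompose \(\ket{\psi}=\int^{\oplus}d\mu(s)\ket{\psi_{s}}\) with respect to the spectral measure of \(\hat{s}_{M}\). Each \(\ket{\psi_{s}}\) is vacuum on the LO modes, and \(\hat{A}=e^{-i\hat q_{\delta}\hat s_{M}}-e^{-i\hat q\hat s_{M}}\) acts fiberwise as \(e^{-i\hat q_{\delta}s}-e^{-i\hat q s}\). So \(\bra{\psi}\hat A^{\dagger}\hat A\ket{\psi}=\int d\mu(s)\,|\hat A\ket{\psi_{s}}|^{2}\).

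Next, apply the scalar bound fiberwise. Since \(\hat n_{\tot}\) and \(\hat q\) commute with \(\hat s_{M}\), the integral reassembles as the expectation of \(\delta^{2}\hat s_{M}^{2}(2\hat n_{\tot}+\hat s_{M}^{2}(\tfrac19\delta^{2}\hat s_{M}^{2}\hat q^{2}+\tfrac12))\). The only care needed is with domains: if the right-hand side is infinite there is nothing to prove, and otherwise the fiberwise expectations are finite \(\mu\)-almost everywhere.
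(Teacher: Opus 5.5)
Your proposal is correct and is the paper's argument. You specialize Eq.~\eqref{eq:detailedbnd0} (with \(\theta=0\)) to \(\omega_{k}=1\), where \(\phi_{k}=\phi\) and \(u(\delta\,s)\hat{u}_{\delta\,s}=h(\delta\,s)\hat{q}\); the estimates \(\phi^{2}\leq(\delta\,s)^{2}\) and \(|h(\delta\,s)|\leq(\delta\,s)^{2}/3\) then give Eq.~\eqref{eq:detailedbnd_standard}, and you lift to \(\hat{s}_{M}\) fiberwise exactly as in the proof of Thm.~\ref{thm:mainbnd}. Your case split confirming the \(|h|\) bound for all real \(s\) (which the paper takes from Eq.~\eqref{eq:phi+bnds}) is valid; on unnormalized fibers the constant \(\tfrac12\) carries a factor \(\braket{\psi_{s}}\), which is what reassembles into \(\tfrac12\hat{s}_{M}^{4}\).
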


\begin{proof}
  The inequality of Eq.~\eqref{thm:eq:mains_sph} is identical to that
  of Eq.~\eqref{eq:detailedbnd_standard}. The proof of
  Eq.~\eqref{thm:eq:mainhats_sph} parallels the proof of the
  corresponding bound in Thm.~\ref{thm:mainbnd}, which justifies
  substituting the operator \(\hat{s}_{M}\) for \(\hat{s}\) in
  Eq.~\eqref{thm:eq:mains_sph}.
\end{proof}

\begin{theorem}[Thm.~\ref{thm:qmeas} for standard pulsed homodyne]
  \label{thm:qmeas_sph}
Let \(\hat{s}_{M}\) have non-degenerate spectrum so that the
apparatus Hilbert space has a representation \(L^{2}(\rls, d\mu(s))\),
and let \(\ket{g(\bm{\cdot})}\) be the initial state of the
apparatus. Define
\begin{align} b_{g,2l} &= \int_{s\in\rls} d\mu(s) s^{2l}|g(s)|^{2}.
\end{align} Let \(\ket{\psi}\) the initial state of the signal, LO
modes and systems other than the apparatus, where \(\ket{\psi}\) is
vacuum on the LO modes. Then, with \(\ket{\phi_{\delta}}\) as in
Eq.~\eqref{eq:psimdelta_general}, we have
\begin{align} \qty|\ket{\phi_{\delta}}-\ket{\phi_{0}}|^{2} &\leq
\delta^{2}\qty( 2 b_{ g,2}\bra{\psi}\hat{n}_{\tot}\ket{\psi} + \frac{1}{9}\delta^{2}b_{
g,6}\bra{\psi}\hat{q}^{2}\ket{\psi}+ \frac{1}{2}b_{
g,4} ).
\label{eq:qmeastheorem1_sph}
\end{align}
\end{theorem}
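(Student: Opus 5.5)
The plan follows the proof of Thm.~\ref{thm:qmeas}, with Thm.~\ref{thm:mainbnd_sph} in place of Thm.~\ref{thm:mainbnd}. The first step is to substitute the product state \(\ket{\psi}\otimes\ket{g(\bm{\cdot})}\) for \(\ket{\psi}\) in Eq.~\eqref{thm:eq:mainhats_sph}. This is legitimate because the product state is vacuum on the LO modes whenever \(\ket{\psi}\) is. By Eq.~\eqref{eq:psimdelta_general}, the left-hand side then becomes exactly \(\qty|\ket{\phi_{\delta}}-\ket{\phi_{0}}|^{2}\).

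The second step is to evaluate the right-hand side of Eq.~\eqref{thm:eq:mainhats_sph}. It is the expectation of a sum of three products of the form (apparatus operator)\(\times\)(system operator):
\begin{align}
  2\,\hat{s}_{M}^{2}\hat{n}_{\tot},\qquad \frac{1}{9}\delta^{2}\hat{s}_{M}^{6}\hat{q}^{2},\qquad \frac{1}{2}\hat{s}_{M}^{4}.
\end{align}
In each term the two factors act on different tensor factors and commute. The expectation in the product state therefore factorizes into \(\bra{g}\hat{s}_{M}^{2l}\ket{g}\) times the system expectation. In the representation \(L^{2}(\rls,d\mu(s))\), where \(\hat{s}_{M}\) acts by multiplication by \(s\), we have \(\bra{g}\hat{s}_{M}^{2l}\ket{g}=\int d\mu(s)\,s^{2l}|g(s)|^{2}=b_{g,2l}\). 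Collecting the three terms gives
\begin{align}
  \delta^{2}\qty(2b_{g,2}\bra{\psi}\hat{n}_{\tot}\ket{\psi}+\frac{1}{9}\delta^{2}b_{g,6}\bra{\psi}\hat{q}^{2}\ket{\psi}+\frac{1}{2}b_{g,4}),
\end{align}
which is Eq.~\eqref{eq:qmeastheorem1_sph}.

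The only point needing care is the domain question for unbounded operators. If any of \(b_{g,2}\), \(b_{g,4}\), \(b_{g,6}\), \(\langle\hat{n}_{\tot}\rangle\) or \(\langle\hat{q}^{2}\rangle\) is infinite, the bound holds trivially and there is nothing to prove. Otherwise, one can avoid operator-product issues by using the direct-integral argument from the proof of Thm.~\ref{thm:mainbnd}. The conditional states over \(s\) are \(g(s)\ket{\psi}\), which satisfy the hypotheses of the theorem. For each \(s\), apply Eq.~\eqref{thm:eq:mains_sph} to \(\ket{\psi}\), multiply by \(|g(s)|^{2}\), and integrate over \(d\mu(s)\). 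This yields the \(s\)-moments \(b_{g,2l}\) directly. I expect no real obstacle beyond this bookkeeping.
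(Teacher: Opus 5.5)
Your proposal is correct and matches the paper's proof: it substitutes $\ket{\psi}\otimes\ket{g(\bm{\cdot})}$ into Eq.~\eqref{thm:eq:mainhats_sph}, just as in Thm.~\ref{thm:qmeas}, and identifies $\bra{g}\hat{s}_{M}^{2l}\ket{g}=b_{g,2l}$. Your direct-integral fallback for domain issues is the same argument the paper uses to pass from Eq.~\eqref{thm:eq:mains} to Eq.~\eqref{thm:eq:mainhats}, so it adds rigor without changing the route.
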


\begin{proof}
The proof parallels that of Thm~\ref{thm:qmeas}.
\end{proof}

\begin{proposition}[Prop.~\ref{prop:fbndedmeas} for standard pulsed
  homodyne]
  \label{prop:fbndedmeas_sph}
  Suppose that \(f(x)\) is the Fourier transform of a bounded
  complex measure \(d\tilde\mu(\kappa)\). Write
\(d\tilde\mu(\kappa)=e^{i\theta(\kappa)}d\mu(\kappa)\) for a bounded
positive measure \(d\mu(\kappa)\).
  Define \(\tilde{f}_{l}=\int_{\kappa\in\rls} d\mu(\kappa)
|\kappa|^{l}\). Then for all states \(\ket{\psi}\) of the signal and
LO modes and other relevant systems,
  where \(\ket{\psi}\) is vacuum on the LO modes, we have the
following bound:
  \begin{align}
    \qty|f(\hat{q}_{\delta})\ket{\psi}-f(\hat{q}_{0})\ket{\psi}|^{2}
    &\leq
          \delta^{2}(\tilde f_{0}+\tilde f_{2})\qty(
      2\bra{\psi}\hat{n}_{\tot}\ket{\psi} + \frac{1}{9}\delta^{2}\tilde f_{4}\bra{\psi}\hat{q}^{2}\ket{\psi} + \frac{1}{2}\tilde f_{2}).
                                                              \label{eq:cor:cfunq_sph}
  \end{align}
\end{proposition}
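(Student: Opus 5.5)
The plan is to repeat the proof of Prop.~\ref{prop:fbndedmeas} verbatim up to the point where the per-$\kappa$ bound on $|(e^{i\kappa\hat{q}_{\delta}}-e^{i\kappa\hat{q}_{0}})\ket{\psi}|^{2}$ is used. At that point we substitute Eq.~\eqref{thm:eq:mains_sph} of Thm.~\ref{thm:mainbnd_sph} in place of Eq.~\eqref{thm:eq:mains}. We begin as before. The spectral theorem and Fubini--Tonelli (valid since $\bra{\phi}d\Pi_{\delta}(x)\ket{\psi}$ and $d\tilde\mu(\kappa)$ are bounded complex measures) give
\begin{align}
  \bra{\phi}f(\hat{q}_{\delta})-f(\hat{q}_{0})\ket{\psi}
  = \int_{\kappa\in\rls} d\tilde\mu(\kappa)\,
  \bra{\phi}e^{i\kappa\hat{q}_{\delta}}-e^{i\kappa\hat{q}_{0}}\ket{\psi}
\end{align}
for every normalized $\ket{\phi}$. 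We then bound the modulus by $\int d\mu(\kappa)\,|(e^{i\kappa\hat{q}_{\delta}}-e^{i\kappa\hat{q}_{0}})\ket{\psi}|$. Next we insert $\sqrt{1+\kappa^{2}}/\sqrt{1+\kappa^{2}}$ and apply Cauchy--Schwarz. This yields the prefactor $\int d\mu(\kappa)(1+\kappa^{2})=\tilde f_{0}+\tilde f_{2}$ times $\int d\mu(\kappa)(1+\kappa^{2})^{-1}|(e^{i\kappa\hat{q}_{\delta}}-e^{i\kappa\hat{q}_{0}})\ket{\psi}|^{2}$.

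For the remaining integral, the sign convention $e^{i\kappa\hat{q}}$ versus $e^{-is\hat{q}}$ is handled by taking $s=-\kappa$. The bound of Eq.~\eqref{thm:eq:mains_sph} is even in $s$, so we get
\begin{align}
  |(e^{i\kappa\hat{q}_{\delta}}-e^{i\kappa\hat{q}_{0}})\ket{\psi}|^{2}
  \leq \delta^{2}\kappa^{2}\qty(2\bra{\psi}\hat{n}_{\tot}\ket{\psi}
  +\kappa^{2}\qty(\tfrac{1}{9}\delta^{2}\kappa^{2}\bra{\psi}\hat{q}^{2}\ket{\psi}+\tfrac{1}{2})).
\end{align}
Here $\ket{\psi_{s}}$ in the theorem statement is just $\ket{\psi}$, since no apparatus operator is involved. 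We multiply this bound by $(1+\kappa^{2})^{-1}$ and treat the three terms separately:
\begin{itemize}
\item For the $\hat{n}_{\tot}$ term we use $\kappa^{2}/(1+\kappa^{2})\leq 1$, which gives $2\tilde f_{0}\langle\hat{n}_{\tot}\rangle$.
\item For the $\hat{q}^{2}$ term we use $\kappa^{6}/(1+\kappa^{2})\leq\kappa^{4}$, which gives $\tfrac{1}{9}\delta^{2}\tilde f_{4}\langle\hat{q}^{2}\rangle$.
\item For the constant term we use $\kappa^{4}/(1+\kappa^{2})\leq\kappa^{2}$, which gives $\tfrac{1}{2}\tilde f_{2}$.
\end{itemize}
Collecting, we obtain $\delta^{2}(\tilde f_{0}+\tilde f_{2})\qty(2\tilde f_{0}\langle\hat{n}_{\tot}\rangle+\tfrac{1}{9}\delta^{2}\tilde f_{4}\langle\hat{q}^{2}\rangle+\tfrac{1}{2}\tilde f_{2})$, and taking the supremum over $\ket{\phi}$ converts this into the norm bound.

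The main obstacle is the first term. It naturally produces $\tilde f_{0}$, whereas the statement displays $2\langle\hat{n}_{\tot}\rangle$ with no moment factor attached. Two options are available, and the choice depends on which constant the statement intends:
\begin{itemize}
\item Use $\kappa^{2}/(1+\kappa^{2})\leq\kappa^{2}$ instead. This gives a $\tilde f_{2}$ coefficient, consistent with Prop.~\ref{prop:fbndedmeas}, where $\tilde f_{2}$ multiplies $\hat{\Omega}$.
\item Reorganize the Cauchy--Schwarz weights so that the bracket is exactly as displayed.
\end{itemize}
We would check the normalization intended by the statement against these two and adopt whichever weighting reproduces it. If neither matches exactly, we would conclude that the $\hat{n}_{\tot}$ coefficient carries an implicit factor $\tilde f_{0}$ or $\tilde f_{2}$, as in the general proposition. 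Finiteness of $\tilde f_{4}$ is tacitly needed, and the bound is read as trivially true when $\tilde f_{4}$ is infinite.
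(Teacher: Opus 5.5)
Your derivation is the paper's proof line for line. It uses the same Fubini rewriting, the same Cauchy--Schwarz split with weight $1+\kappa^{2}$, Eq.~\eqref{thm:eq:mains_sph} at $s=-\kappa$, and the same termwise bounds $\kappa^{6}/(1+\kappa^{2})\leq\kappa^{4}$ and $\kappa^{4}/(1+\kappa^{2})\leq\kappa^{2}$. The step you flag is a genuine defect, but it lies in the statement rather than in your argument. The $\hat{n}_{\tot}$ term contributes $2\bra{\psi}\hat{n}_{\tot}\ket{\psi}\int d\mu(\kappa)\,\kappa^{2}/(1+\kappa^{2})$, and this integral is bounded by $\min(\tilde f_{0},\tilde f_{2})$, not by $1$. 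The paper's final line just writes $2\bra{\psi}\hat{n}_{\tot}\ket{\psi}$. That is justified only under an extra normalization such as $\tilde f_{0}\leq 1$.

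So drop your second option: no reweighting of Cauchy--Schwarz can produce the displayed bracket. Under $f\mapsto cf$ with $c>0$, the left-hand side scales as $c^{2}$. So does every term on the right except $2\delta^{2}(\tilde f_{0}+\tilde f_{2})\bra{\psi}\hat{n}_{\tot}\ket{\psi}$, which scales only as $c$.

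Concretely, take one signal mode and $f(x)=c\,e^{ix}$, so $\tilde f_{l}=c$ for all $l$. Take $\ket{\psi}=\ket{\gamma}_{a}\ket{0}_{b}$ with $\gamma$ real, so that $\bra{\psi}\hat{q}^{2}\ket{\psi}=1$ and $\bra{\psi}\hat{n}_{\tot}\ket{\psi}=\gamma^{2}$. Write $e^{-i\hat{q}_{\delta}}$ as the beam-splitter rotation $e^{-i\delta\hat{B}}$, with $\hat{B}=\hat{a}^{\dagger}\hat{b}+\hat{b}^{\dagger}\hat{a}$, conjugated by the LO displacement. A coherent-state computation then gives $|(e^{i\hat{q}_{\delta}}-e^{i\hat{q}})\ket{\psi}|^{2}\geq 2\bigl(1-|\bra{\psi}e^{-i\hat{q}_{\delta}}e^{i\hat{q}}\ket{\psi}|\bigr)\approx\delta^{2}(\gamma+\tfrac{1}{2})^{2}$ for $\delta\gamma\ll 1$. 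The displayed right-hand side equals $\delta^{2}c\,(4\gamma^{2}+c+\tfrac{2}{9}\delta^{2}c)$. With $c=\gamma=10$ and $\delta=10^{-3}$, the left side $c^{2}|(e^{i\hat{q}_{\delta}}-e^{i\hat{q}})\ket{\psi}|^{2}$ is about $1.1\times10^{-2}$, while the right side is about $4.1\times10^{-3}$.

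The correct statement therefore has $2\tilde f_{0}\bra{\psi}\hat{n}_{\tot}\ket{\psi}$ (or $2\tilde f_{2}\bra{\psi}\hat{n}_{\tot}\ket{\psi}$) in the bracket. Equivalently, it needs the hypothesis $\tilde f_{0}\leq 1$, which holds for $f(x)=e^{i\kappa x}$. Your derivation proves exactly this corrected version, so state it as the result instead of hedging.
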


\begin{proof} The proof of Prop.~\ref{prop:fbndedmeas} needs to be
modified starting at the third to last line of
Eq.~\eqref{eq:propfbndedmeas}, where we now apply Eq.~\eqref{thm:eq:mains_sph}
with \(s\) replaced by \(-\kappa\):
\begin{align}
  \qty|\bra{\phi}f(\hat{q}_{\delta}) - f(\hat{q}_{0})\ket{\psi}|^{2}
  \hspace*{-1in}
  &\notag\\
  &\leq
      \int_{\kappa\in\rls}d\mu(\kappa)(1+\kappa^{2})
      \int_{\kappa\in\rls}d\mu(\kappa)\frac{1}{1+\kappa^{2}}
      \qty|\qty(e^{i\kappa\hat{q}_{\delta}}-
      e^{i\kappa\hat{q}_{0}})\ket{\psi}|^{2}
      \notag\\
  &\leq
    \delta^{2}(\tilde f_{0}+\tilde f_{2})
      \int_{\kappa\in\rls}d\mu(\kappa)\frac{1}{1+\kappa^{2}}\kappa^{2}
      \qty(2 \bra{\psi}\hat{n}_{\tot}\ket{\psi}
      +
      \kappa^{2}\qty(\frac{1}{9}\delta^{2}\kappa^{2}\bra{\psi}\hat{q}^{2}\ket{\psi}
      + \frac{1}{2})
      )
      \notag\\
  &\leq
    \delta^{2}(\tilde f_{0}+\tilde f_{2})\qty(
      2\bra{\psi}\hat{n}_{\tot}\ket{\psi} + \frac{1}{9}\delta^{2}\tilde f_{4}\bra{\psi}\hat{q}^{2}\ket{\psi} + \frac{1}{2}\tilde f_{2})
    .
\end{align}
The result then follows as before.
\end{proof}

\section{A generalization of Prop.~\ref{prop:fbndedmeas}}
\label{app:generalize_fbndedmeas}

In general, defining functions of two non-commuting self-adjoint
operators is problematic and requires operator ordering conventions
and care with the operators' projection valued measures .  A
systematic approach can be based on multiple operator integrals as
explained in Ref.~\cite{peller2016multiple}. Because for \(\delta>0\)
\(\hat{q}_{\delta}\) has discrete spectrum, for bounded measurable
functions \(f(x,y)\), we can directly define
\begin{align}
  f(\hat{q}_{\delta},\hat{q})\ket{\psi}
  &=
    \int_{x\in\rls,y\in\rls} f(x,y)d\Pi_{\delta}(x)d\Pi(y)\ket{\psi}
    = \sum_{l}\Pi_{\delta,l}\int_{y\in\rls}f(x_{l},y)d\Pi(y)\ket{\psi},
\end{align}
where \(\{x_{l}\}_{l}\) enumerates the distinct values in the spectrum
of \(\hat{q}_{\delta}\), and \(\Pi_{\delta,l}\) are the corresponding
eigenspace projectors. Because \(f(x,y)\) is bounded, we have that
\(f(\hat{q}_{\delta},\hat{q})\) is a bounded operator.

The following proposition generalizes Prop.~\ref{prop:fbndedmeas} with
a better bound, and its proof expands on that of the latter
proposition.  We consider multiple BBP homodyne measurements in
parallel.  For this purpose we vectorize the measurement operators.
We consider \(m\) independent pairs of signal and LO modes and let
\(\bm{\hat{q}}_{\delta}\) be the vector whose \(j\)'th entry consists
of the \(j\)'th pair's operator \(\hat{q}_{\delta}\), which we write
as \(\hat{q}_{j,\delta}\).  We let \(d\Pi_{\delta}(\bm{x})\) be the
joint projection valued measure for \(\bm{\hat{q}}_{\delta}\).  Let
\(\hat{\Omega}_{j}\) be the operator \(\hat{\Omega}\) defined in
Thm.~\ref{thm:mainbnd} for the \(j\)'th pair, and
\(\overline{\omega_{j}^{2}}\) the quantity \(\overline{\omega^{2}}\)
for the \(j\)'th pair.

\begin{proposition}\label{prop:fxybndedmeas}
  Suppose that \(f(\bm{x},\bm{y})\) be a function on
  \(\rls^{m}\times\rls^{m}\) that is the Fourier transform of a
  bounded complex measure \(d\tilde\mu(\bm{\kappa},\bm{\kappa}')\) and
  satisfies \(f(\bm{x},\bm{x})=0\).  Write
  \(d\tilde\mu(\bm{\kappa},\bm{\kappa}')
  =e^{i\theta(\bm{\kappa},\bm{\kappa}')}d\mu(\bm{\kappa},\bm{\kappa}')\)
  with \(d\mu(\bm{\kappa},\bm{\kappa}')\) positive.  Define the
  marginal distributions
  \(d\mu(\bm{\kappa})
  =\int_{\bm{\kappa}'\in\rls^{m}}d\mu(\bm{\kappa},\bm{\kappa}')\) and
  \(d\mu(\kappa_{j}) =
  \int_{\bm{\kappa}\in\rls^{m},\bm{\kappa}'\in\rls^{m}:(\bm{\kappa})_{j}=\kappa_{j}}
  d\mu(\bm{\kappa},\bm{\kappa}')\).
  Let
  \(\tilde{f}_{j,l}=\int_{\kappa_{j}\in\rls} d\mu(\kappa_{j})
  |\kappa_{j}|^{l}\), \(\tilde{f}_{\tot,l}=\sqrt{\sum_{j=1}^{m}\tilde{f}_{j,l}^{2}}\),
  \(\hat{\Omega}_{\tot} = \sum_{j=1}^{m}\hat{\Omega}_{j}\), and
  \(\overline{\omega^{2}_{\tot}} =
  \sum_{j=1}^{m}\overline{\omega^{2}_{j}}\).  Then for all states
  \(\ket{\psi}\) of the signal and LO modes and other relevant system
  that are vacuum on the LO modes
  \begin{align}
    |f(\hat{\bm{q}}_{\delta},\hat{\bm{q}})\ket{\psi}|^{2}
    &\leq 4\delta^{2}
            \bra{\psi}\qty(
      \qty(\tilde f_{\tot,1}^{2}+\tilde f_{\tot,2}^{2})\hat{\Omega}_{\tot}
         +\tilde f_{\tot,2}^{2}\overline{\omega^{2}_{\tot}})\ket{\psi}
      .
      \label{prop:appD1}
  \end{align}
\end{proposition}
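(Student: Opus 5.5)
The plan is to reduce the statement to the single-exponential estimates of Appendix~\ref{app:appendix3}. The vanishing of \(f\) on the diagonal will remove the zeroth-order term, and the bound will then come from a Cauchy--Schwarz argument like the one in the proof of Prop.~\ref{prop:fbndedmeas}.

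\emph{Spectral representation.} Using the discrete spectrum of \(\hat{\bm{q}}_{\delta}\) and Fubini--Tonelli exactly as in the proof of Prop.~\ref{prop:fbndedmeas}, I would write, for every normalized \(\ket{\phi}\),
\begin{align}
  \bra{\phi}f(\hat{\bm{q}}_{\delta},\hat{\bm{q}})\ket{\psi}
  &= \int d\tilde\mu(\bm{\kappa},\bm{\kappa}')\,
     \bra{\phi}e^{i\bm{\kappa}\cdot\hat{\bm{q}}_{\delta}}
     e^{i\bm{\kappa}'\cdot\hat{\bm{q}}}\ket{\psi}.
\end{align}
The order of the operators follows the convention \(d\Pi_{\delta}(\bm{x})d\Pi(\bm{y})\).

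\emph{Subtracting the diagonal.} Since \(f(\bm{x},\bm{x})=0\), we have \(\int d\tilde\mu(\bm{\kappa},\bm{\kappa}')\,e^{i(\bm{\kappa}+\bm{\kappa}')\cdot\hat{\bm{q}}}=0\). Subtracting this gives
\begin{align}
  f(\hat{\bm{q}}_{\delta},\hat{\bm{q}})\ket{\psi}
  = \int d\tilde\mu(\bm{\kappa},\bm{\kappa}')\,
  \qty(e^{i\bm{\kappa}\cdot\hat{\bm{q}}_{\delta}}e^{-i\bm{\kappa}\cdot\hat{\bm{q}}}-1)
  e^{i(\bm{\kappa}+\bm{\kappa}')\cdot\hat{\bm{q}}}\ket{\psi}.
\end{align}
The operators for different pairs \(j\) commute, and each has exactly the form treated in Appendix~\ref{app:appendix3}. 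Telescoping over \(j\) then expresses the bracket as a sum of \(m\) single-pair terms \(e^{i\kappa_{j}\hat{q}_{j,\delta}}e^{-i\kappa_{j}\hat{q}_{j}}-1\), each conjugated by unitaries that act on other pairs. Those unitaries preserve the LO vacuum and the \(j\)'th pair's number operators, so they do not affect the \(j\)'th estimate.

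\emph{Per-pair estimate.} For each pair I would use the representation \(e^{i\hat{q}_{\delta}s}e^{-i\hat{q}s}=\hat{U}e^{i\hat{H}}\hat{U}^{\dagger}\) with \(\hat{U}=e^{is\hat{q}/2}\), together with \(2-2\Re(e^{i\hat{H}})\leq \hat{H}^{2}\). The LO-vacuum contraction gives \(\hat{H}^{2}\mapsto\sum_{k}\phi_{k}^{2}\hat{n}_{k}+s^{2}u^{2}\hat{u}^{2}\), with \(\phi_{k}^{2}\leq(\omega_{k}\delta s)^{2}\) and \(|h_{k}|\leq(\omega_{k}\delta s)^{2}/3\). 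The conjugating displacement is now \(e^{i(\kappa_{j}/2+\kappa_{j}+\kappa'_{j})\hat{q}_{j}}\) rather than \(e^{is\hat{q}/2}\) alone. I want to arrange the bookkeeping so that the term linear in \(\delta\) is carried only by the rotation part \(\phi_{k}\hat{B}_{k}\), which contracts to \(\phi_{k}^{2}\hat{n}_{k}\). Because \(\hat{B}_{k}\) has zero LO-vacuum expectation and shifting \(\hat{a}_{k}\) only changes the \(\hat{b}_{k}\)-linear part, I expect this contraction to be insensitive to the extra displacement. That would produce the \(\kappa_{j}^{2}\hat{\Omega}_{j}\) contributions, which after integration give the \(\tilde f_{j,1}\) and \(\tilde f_{j,2}\) coefficients. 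The \(\hat{u}\)-quadrature term would produce the \(\kappa_{j}^{4}\)-type pieces \((\hat{\Omega}_{j}+\overline{\omega^{2}_{j}})\).

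\emph{Combining.} I would pair with \(\ket{\phi}\) and bound the modulus inside the integral by vector norms. Applying Cauchy--Schwarz over \(j\) yields the \(\sqrt{\sum_{j}\tilde f_{j,l}^{2}}\) structure, and Cauchy--Schwarz over \(d\mu\) yields squares of marginal moments. The result would then follow by taking the supremum over \(\ket{\phi}\).

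The main obstacle is the dependence on \(\bm{\kappa}'\). Displacing \(\ket{\psi}\) by \(e^{i(\bm{\kappa}+\bm{\kappa}')\cdot\hat{\bm{q}}}\) shifts \(\langle\hat{n}_{k}\rangle\) by amounts depending on \(\bm{\kappa}'\), yet the claimed bound involves only the \(\bm{\kappa}\)-marginals. So the key verification is that in the exact formula of Appendix~\ref{app:appendix3} every \(\bm{\kappa}'\)-dependent term either cancels or drops out under the LO-vacuum contraction. If that fails, I would instead try moving \(e^{i\bm{\kappa}'\cdot\hat{\bm{q}}}\) onto \(\bra{\phi}\), where it is harmless because \(\ket{\phi}\) is arbitrary. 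This requires commuting it past \(e^{i\bm{\kappa}\cdot\hat{\bm{q}}_{\delta}}\) via the displacement--rotation relations in Eq.~\eqref{eq:commutationofbq}, and that commutator generates only \(\delta\)-order LO-linear terms of the same type already controlled.
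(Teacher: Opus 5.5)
\textbf{Gap: the $\bm{\kappa}'$-dependent terms neither cancel nor drop out.} The step that fails is the one you single out as the main obstacle.

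\emph{Why the per-pair estimate fails.} What enters the per-pair estimate is the LO-vacuum contraction of $\hat{H}^{2}$, not of $\hat{H}$. Under a shift $\hat{a}_{k}\to\hat{a}_{k}+c_{k}$, $\hat{B}_{k}$ acquires the term $c_{k}^{*}\hat{b}_{k}+c_{k}\hat{b}_{k}^{\dagger}$. Since $\bra{0}\hat{b}_{k}\hat{b}_{k}^{\dagger}\ket{0}=1$, the contraction of $\hat{B}_{k}^{2}$ becomes $(\hat{a}_{k}^{\dagger}+c_{k}^{*})(\hat{a}_{k}+c_{k})$ instead of $\hat{n}_{k}$. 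This is why Eq.~\eqref{eq:detailedbnd0} already contains displaced number operators. Carried through exactly, your bookkeeping evaluates $\hat{\Omega}_{j}$ on $e^{i(\kappa_{j}'+\kappa_{j}/2)\hat{q}_{j}}\ket{\psi}$. If $\ket{\psi}$ is vacuum on the signal modes, this expectation equals $(\kappa_{j}'+\kappa_{j}/2)^{2}\,\overline{\omega_{j}^{2}}$, not $0$.

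\emph{Why the fallback fails.} Conjugating $e^{i\kappa_{j}\hat{q}_{j,\delta}}$ by $e^{i\kappa_{j}'\hat{q}_{j}}$ shifts $\hat{a}_{k}\to\hat{a}_{k}+\kappa_{j}'\alpha_{k}$ inside $\hat{q}_{j,\delta}$. This adds $\delta\kappa_{j}\kappa_{j}'\sum_{k}\omega_{k}(\alpha_{k}^{*}\hat{b}_{k}+\alpha_{k}\hat{b}_{k}^{\dagger})$ to the generator. Its LO-vacuum second moment is $\overline{\omega_{j}^{2}}\neq 0$, so it contributes at order $\delta^{2}\kappa_{j}^{2}(\kappa_{j}')^{2}$. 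No moment of the $\bm{\kappa}$-marginals controls such a term.

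\textbf{The inequality as stated is false.} Take $m=1$ and a single signal mode, so $\hat{\Omega}=\hat{n}$ and $\overline{\omega^{2}}=1$. Let $\ket{\psi}$ be the vacuum, and let $f(x,y)=e^{i(\kappa_{0}x+\kappa_{1}y)}-e^{i(\kappa_{0}+\kappa_{1})y}$. Its measure has mass $+1$ at $(\kappa_{0},\kappa_{1})$ and $-1$ at $(0,\kappa_{0}+\kappa_{1})$. Then $f(x,x)=0$, the $\kappa$-marginal of $d\mu$ consists of unit masses at $\kappa_{0}$ and $0$, and the right-hand side is $4\delta^{2}\kappa_{0}^{4}$.

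On the other hand, $f(\hat{q}_{\delta},\hat{q})\ket{\psi}=(e^{i\kappa_{0}\hat{q}_{\delta}}-e^{i\kappa_{0}\hat{q}})\ket{\kappa_{1}\alpha}\ket{0}_{b}$. A first-order Duhamel expansion, or the coherent-state formula at the end of App.~\ref{app:appendix3}, gives
$|f(\hat{q}_{\delta},\hat{q})\ket{\psi}|^{2}=\delta^{2}\kappa_{0}^{2}(\kappa_{1}+\kappa_{0}/2)^{2}+O(\delta^{3})$.
The factor $\kappa_{1}+\kappa_{0}/2$ is exactly the shift identified above. This exceeds $4\delta^{2}\kappa_{0}^{4}$ for small $\delta$ whenever $|\kappa_{1}+\kappa_{0}/2|>2|\kappa_{0}|$.

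\textbf{Relation to the paper's proof.} The paper's proof takes your route: Fubini, subtract the diagonal, telescope over pairs, and apply Eq.~\eqref{thm:eq:mains} to each term. It makes this step without comment. It applies Eq.~\eqref{thm:eq:mains} to $\hat{V}_{j}\ket{\psi}$ but writes the result with $\bra{\psi}\hat{\Omega}_{j}\ket{\psi}$, even though $\hat{V}_{j}$ contains $e^{i\kappa_{j}'\hat{q}_{j}}$.

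\textbf{What the route does establish.} It gives the bound with $\bra{\psi}\hat{\Omega}_{j}\ket{\psi}$ replaced, inside the $d\mu(\bm{\kappa},\bm{\kappa}')$ integral, by
$\bra{\psi}e^{-i\kappa_{j}'\hat{q}_{j}}\hat{\Omega}_{j}e^{i\kappa_{j}'\hat{q}_{j}}\ket{\psi}\leq 2\bra{\psi}\hat{\Omega}_{j}\ket{\psi}+2(\kappa_{j}')^{2}\overline{\omega_{j}^{2}}$.
The coefficients must therefore be joint moments of $(\kappa_{j},\kappa_{j}')$, not moments of the $\bm{\kappa}$-marginals. When $\kappa_{j}\kappa_{j}'=0$ holds $d\mu$-almost everywhere for each $j$, the marginal form is recovered; an example is $f(\bm{x},\bm{y})=g(\bm{x})-g(\bm{y})$. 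Even then, the final squaring via Eq.~\eqref{eq:x+ysqrt-ineq} gives the prefactor $8\delta^{2}$ rather than $4\delta^{2}$.
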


Before we prove the proposition, to see that this bound improves on
that in Prop.~\ref{prop:fbndedmeas}, apply the Cauchy-Schwarz inequality
twice for
\begin{align}
  \tilde f_{\tot,1}^{2}
  &= \sum_{j=1}^{m}\qty(\int_{\kappa_{j}\in\rls}|\kappa_{j}|d\mu(\kappa_{j}))^{2}
    \notag\\
  &\leq
    \sum_{j=1}^{m}\int_{\kappa_{j}\in\rls}d\mu(\kappa_{j})
    \int_{\kappa_{j}\in\rls}|\kappa_{j}|^{2}d\mu(\kappa_{j})
    \notag\\
  &= \sum_{j=1}^{m}\tilde f_{j,0}\tilde f_{j,2}
    \notag\\
  &\leq \sqrt{\sum_{j=1}^{m}\tilde f_{j,0}^{2}}\sqrt{\sum_{j=1}^{m}\tilde f_{j,2}^{2}}
    \notag\\
  &=\tilde f_{\tot,0}\tilde f_{\tot,2}.
\end{align}
Therefore, the coefficient of \(\hat{\Omega}_{\tot}\)
is bounded by \((\tilde f_{\tot,0}+\tilde f_{\tot,2})\tilde f_{\tot,2}\),
and so is the coefficient of \(\overline{\omega^2_\tot}\).

\begin{proof}
  The quantity \(  |f(\hat{\bm{q}}_{\delta},\hat{\bm{q}})\ket{\psi}|\)
  can be expressed as
  \begin{align}
    |f(\hat{\bm{q}}_{\delta},\hat{\bm{q}})\ket{\psi}|
    &= \max_{\ket{\phi}}\Re(\bra{\phi}f(\hat{\bm{q}}_{\delta},\hat{\bm{q}})\ket{\psi}).
  \end{align}
  In computing this maximum it suffices to consider \(\ket{\phi}\)
  such that \(\bra{\phi}f(\hat{\bm{q}}_{\delta},\hat{\bm{q}})\ket{\psi}\)
  is real. Consider an arbitrary such \(\ket{\phi}\).
  We have
  \begin{align}
    \bra{\phi}f(\hat{\bm{q}}_{\delta}, \hat{\bm{q}})\ket{\psi}
    &= \int_{\bm{x}\in\rls^{m},\bm{y}\in\rls^{m}}f(\bm{x},\bm{y})\bra{\phi}d\Pi_{\delta}(\bm{x})d\Pi(\bm{y})\ket{\psi},
      \notag\\
    &= \int_{\bm{x}\in\rls^{m},\bm{y}\in\rls^{m}}\qty(\int_{\bm{\kappa}\in\rls^{m},\bm{\kappa}'\in\rls^{m}}d\tilde\mu(\bm{\kappa},\bm{\kappa}')
      e^{i(\bm{\kappa}\cdot \bm{x}+\bm{\kappa}'\cdot\bm{y})})
      \bra{\phi}d\Pi_{\delta}(\bm{x})d\Pi(\bm{y})\ket{\psi},
  \end{align}
  The integrand in this integral is bounded and in view of the
  comments at the beginning of the proof of
  Prop.~\ref{prop:fbndedmeas} and taking advantage of the discrete
  spectrum of \(\hat{\bm{q}}_{\delta}\) for \(\delta>0\), the
  expression
  \(d\tilde \mu(\bm{\kappa},\bm{\kappa}')
  \bra{\phi}d\Pi_{\delta}(\bm{x})d\Pi(\bm{y})\ket{\psi}\) defines a
  bounded complex measure on \(\rls^{m}\times \rls^{m}\).  We can
  therefore apply the Fubini-Tonelli theorem to change order of
  integration.  We obtain
  \begin{align}
    \bra{\phi}f(\hat{\bm{q}}_{\delta}, \hat{\bm{q}})\ket{\psi}
    &=   \int_{\bm{\kappa}\in\rls^{m},\bm{\kappa}'\in\rls^{m}}d\tilde\mu(\bm{\kappa},\bm{\kappa}')
      \qty(\int_{\bm{x}\in\rls^{m},\bm{y}\in\rls^{m}}
      e^{i(\bm{\kappa}\cdot \bm{x}+\bm{\kappa}'\cdot\bm{y})}
      \bra{\phi}d\Pi_{\delta}(\bm{x})d\Pi(\bm{y})\ket{\psi}).
      \label{eq:fxyqdq1}
  \end{align}
  The inner integral is 
  \begin{align}
    \int_{\bm{x}\in\rls^{m},\bm{y}\in\rls^{m}}
    e^{i(\bm{\kappa}\cdot \bm{x}+\bm{\kappa}'\cdot \bm{y})}
    \bra{\phi}d\Pi_{\delta}(\bm{x})d\Pi(\bm{y})\ket{\psi}
    \hspace*{-1in}&\notag\\
    &=
      \bra{\phi} e^{i\bm{\kappa}\cdot \hat{\bm{q}}_{\delta}}
      e^{i\bm{\kappa}'\cdot \hat{\bm{q}}}\ket{\psi}
      \notag\\
    &=
      \bra{\phi}e^{i(\bm{\kappa}+\bm{\kappa}')\cdot \hat{\bm{q}}}\ket{\psi}
      +
      \bra{\phi}\qty(e^{i\bm{\kappa}\cdot \hat{\bm{q}}_{\delta}}
      -e^{i\bm{\kappa}\cdot \hat{\bm{q}}})
      e^{i\bm{\kappa}'\cdot \hat{\bm{q}}}\ket{\psi}.
      \label{eq:prechainedtr}
  \end{align}
  To continue, we chain the triangle inequality as follows:
  Define \(\hat{\bm{q}}_{\delta|j}\) so that
  the \(j'\)'th entry of \(\hat{\bm{q}}_{\delta|j}\) is \(\hat{q}_{j',0}\)
  for \(j'\leq j\) and \(\hat{q}_{j',\delta}\) otherwise. Then
  \(\hat{\bm{q}}_{\delta|0}=\hat{\bm{q}}_{\delta}\)
  and \(\hat{\bm{q}}_{\delta|m}=\hat{\bm{q}}\). Consequently
  \begin{align}
    e^{i\bm{\kappa}\cdot\hat{q}_{\delta}}-e^{i\bm{\kappa}\cdot\hat{q}}
    &=
      \sum_{j=0}^{m-1}    e^{i\bm{\kappa}\cdot\hat{\bm{q}}_{\delta|j}}
      -e^{i\bm{\kappa}\cdot\hat{\bm{q}}_{\delta|j+1}}
      \notag\\
    &=
      \sum_{j=0}^{m-1}
      \qty(e^{i\kappa_{j+1}\hat{q}_{j+1,\delta}}-e^{i\kappa_{j+1}\hat{q}_{j+1,0}})
     e^{i\qty(\bm{\kappa}\cdot\hat{\bm{q}}_{\delta|j+1}-\kappa_{j+1}\hat{q}_{j+1,0})}.
  \end{align}
  After substituting into the last summand of Eq.~\eqref{eq:prechainedtr} we get
  a sum of terms of the form
  \begin{align}
    \bra{\phi}
    \qty(e^{i\kappa_{j+1}\hat{q}_{\delta|j}}-e^{i\kappa_{j+1}\hat{q}_{j}})\hat{V}_{j+1}
    \ket{\psi},
  \end{align}
  where \(\hat{V}_{j}\) is unitary and does not affect the vacuum
  state of the LO modes of the \(j\)'th pair of signal and LO modes.
  Each term can therefore be bounded by applying Eq.~\eqref{thm:eq:mains}
  for
  \begin{align}
    \qty|\bra{\phi}
    \qty(e^{i\kappa_{j+1}\hat{q}_{\delta,j}}-e^{i\kappa_{j+1}\hat{q}_{j}})\hat{V}_{j}
    \ket{\psi}|
    &\leq
      2\delta|\kappa_{j+1}|\sqrt{ \bra{\psi}
      \qty(\hat{\Omega}_{j+1}
      + \kappa_{j+1}^{2}\qty(\hat{\Omega}_{j+1}+\overline{\omega_{j+1}^{2}}))
      \ket{\psi}},      
  \end{align}
  with \(j\in\{0,\ldots, m-1\}\).
  Continuing from Eq.~\eqref{eq:prechainedtr} we obtain
  \begin{align}
    \int_{\bm{x}\in\rls^{m},\bm{y}\in\rls^{m}}
    e^{i(\bm{\kappa}\cdot \bm{x}+\bm{\kappa}'\cdot \bm{y})}
    \bra{\phi}d\Pi_{\delta}(\bm{x})d\Pi(\bm{y})\ket{\psi}
    \hspace*{-1in}&\notag\\
    &\in
      \bra{\phi}e^{i(\bm{\kappa}+\bm{\kappa}')\cdot \hat{\bm{q}}}\ket{\psi}
      + \bm{B}_{1} 2\delta\sum_{j=1}^{m}
      |\kappa_{j}|\sqrt{ \bra{\psi}
    \qty(\hat{\Omega}_{j} + \kappa_{j}^{2}\qty(\hat{\Omega}_{j}+\overline{\omega_{j}^{2}}))
      \ket{\psi}},
      \label{eq:postchainedtr}
  \end{align}
  with \(\bm{B}_{1}\) the unit disk in the complex plain.
  Substituting the right-hand side into Eq.~\eqref{eq:fxyqdq1} and
  applying the bounds gives
  \begin{align}
    \bra{\phi}f(\hat{\bm{q}}_{\delta}, \hat{\bm{q}})\ket{\psi}
    &\leq
      \qty|\int_{\bm{\kappa}\in\rls^{m},\bm{\kappa}'\in\rls^{m}}d\tilde\mu(\bm{\kappa},\bm{\kappa}')
      \bra{\phi}e^{i(\bm{\kappa}+\bm{\kappa}')\hat{\bm{q}}}\ket{\psi}|
      \notag\\
    &\hphantom{{}\leq \int}
      {}+2\delta\sum_{j=1}^{m}\int_{\bm{\kappa}\in\rls^{m},\bm{\kappa}'\in\rls^{m}}d\mu(\bm{\kappa},\bm{\kappa}')
      |\kappa_{j}|\sqrt{ \bra{\psi}
      \qty(\hat{\Omega}_{j} +
      \kappa_{j}^{2}\qty(\hat{\Omega}_{j}+\overline{\omega_{j}^{2}}))
      \ket{\psi}}.
  \end{align}
  The integral under the absolute value evaluates to
  \begin{align}
     \int_{\bm{\kappa}\in\rls^{m},\bm{\kappa}'\in\rls^{m}}d\tilde\mu(\bm{\kappa},\bm{\kappa}')
    \bra{\phi}e^{i(\bm{\kappa}+\bm{\kappa}')\hat{\bm{q}}}\ket{\psi}
    &=
      \int_{\bm{x}\in\rls^{m}}\int_{\bm{\kappa}\in\rls^{m},\bm{\kappa}'\in\rls^{m}}
      d\tilde\mu(\bm{\kappa},\bm{\kappa}')e^{i(\bm{\kappa}+\bm{\kappa}')\cdot\bm{x}}
      \bra{\phi}d\Pi(\bm{x})\ket{\psi}
      \notag\\
    &=
      \int_{\bm{x}\in\rls^{m}} f(\bm{x},\bm{x})      \bra{\phi}d\Pi(\bm{x})\ket{\psi}
      \notag\\
    &= 0.
  \end{align}
  For the second integral we get
  \begin{align}
    2\delta\sum_{j=1}^{m}
    \int_{\bm{\kappa}\in\rls^{m},\bm{\kappa}'\in\rls^{m}}d\mu(\bm{\kappa},\bm{\kappa}')
    |\kappa_{j}|\sqrt{ \bra{\psi}
    \qty(\hat{\Omega}_{j}
    + \kappa_{j}^{2}\qty(\hat{\Omega}_{j}+\overline{\omega_{j}^{2}}))
    \ket{\psi}}
    \hspace*{-2.5in}
    &\notag\\
    &\leq
      {2}\delta \sum_{j=1}^{m}\int_{\kappa_{j}\in\rls} d\mu(\kappa_{j})\,
      |\kappa_{j}|\qty(\sqrt{
      \bra{\psi}\hat{\Omega}_{j}\ket{\psi}}
      + |\kappa_{j}|
      \sqrt{\bra{\psi}\hat{\Omega}_{j}\ket{\psi}+\overline{\omega_{j}^{2}}})
      \notag\\
    &=
      {2}\delta\sum_{j=1}^{m}
      \qty(\tilde f_{j,1}\sqrt{\bra{\psi}\hat{\Omega}_{j}\ket{\psi}}+ \tilde f_{j,2}
      \sqrt{\bra{\psi}\hat{\Omega}_{j}+\overline{\omega_{j}^{2}}\ket{\psi}})
      \notag\\
    &\leq
      {2}\delta\qty(
      \tilde f_{\tot,1}\sqrt{\bra{\psi}\hat{\Omega}_{\tot}\ket{\psi}}
      + \tilde f_{\tot,2}\sqrt{\bra{\psi}\hat{\Omega}_{\tot}
      +\overline{\omega^{2}_{\tot}}\ket{\psi}}
      ),
  \end{align}
  where we used the inequality of Eq.~\eqref{eq:sqrtx+y-ineq} for the second line and
  applied the
  the Cauchy-Schwarz inequality to each summand for the last line.
  The bound in the proposition follows by substitution
  and the inequality of Eq.~\eqref{eq:x+ysqrt-ineq}.
\end{proof}

\section{Simulation of GKP fidelity}
\label{app:gkp_numerics}

\begin{figure}[h!]
  \includegraphics[scale = 0.75]{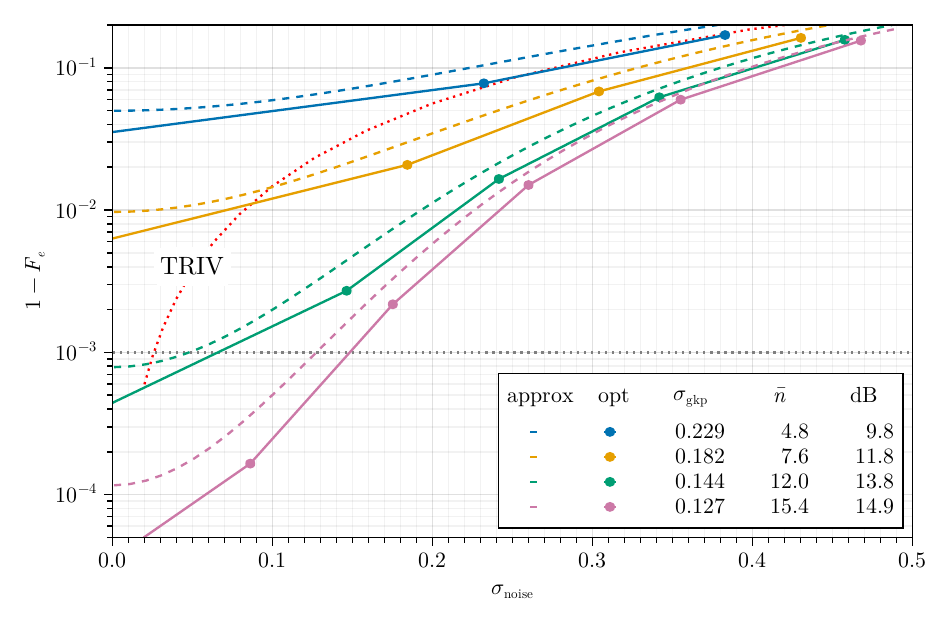} 
  \caption{ Entanglement infidelity as a function of displacement
    noise strength for square GKP qubits. The displacement noise is
    Gaussian and isotropic with variance
    \(\sigma_{\text{noise}}^2\). Each series corresponds to a GKP code
    with with variance \(\sigma_{\text{gkp}}^2\), average photon
    number $\bar{n}$ of the code space projector, or the given
    squeezing in decibels (dB).  According to
    Ref.~\cite{aghaee2025scaling}, for general GKP fault tolerance, a
    squeezing safely above \(9.75\mySI{dB}\) is sufficient. See
    Fig. S15 of the supplementary information of the reference, which
    shows that the logical error rate at the threshold is greater than
    \(10\mySI{\%}\) and error rates for reasonable coding distances do
    not drop below \(.1\mySI{\%}\) for squeezings less than
    \(10\mySI{dB}\).  Ref.\cite{vahlbruch2024detection} claims
    \(15\mySI{dB}\) optical squeezing.  We also include the TRIV curve
    for the trivial qubit code spanned by the oscillator ground
    $\ket{0}$ and first excited $\ket{1}$ states. The solid lines
    correspond to the numerically optimized recovery while the dashed
    lines correspond to an analytic approximation for teleported error
    correction circuit in Fig.~\ref{fig:gkptelerr}. For the analytic
    approximation, \(\sigma_{\text{gkp}}^{2}\) is treated as an
    effective noise on ideal GKP qubits. Because each of the two
    ancillas has the same effective noise, by the error propagation
    analysis of Sect.~\ref{sect:cvec}, this corresponds to an
    effective noise variance of \(3\sigma_{\text{gkp}}^{2}\) on the
    incoming GKP qubit with ideal ancillas. For the curves shown,
    noise with variance \(\sigma_{\text{noise}}^{2}\) is added to the
    incoming GKP qubit only.  For the numerically optimized recovery,
    there are no ancillas. The incoming GKP qubit is the same as for
    the analytic approximation with squeezing corresponding to
    \(\sigma_{\text{gkp}}\).  For comparing to the analytic
    approximation, we optimized the recovery for an added displacement
    noise with variance
    \(\sigma_{\text{opt}}^{2}=2\sigma_{\text{gkp}}^{2}+\sigma_{\text{noise}}^{2}\).
    The straight solid lines going off to the left connect to the infidelity of
    the optimal recovery for a value of \(\sigma_{\text{opt}}^{2}\)
    that is less than \(2\sigma_{\text{gkp}}^{2}\). Their horizontal
    coordinates correspond to a negative values \(-x\) of
    \(\sigma_{\text{noise}}^{2}\), which for illustration purposes were
    identified with the unplotted locations of \(-\sqrt{x}\) on the
    horizontal axis }
  \label{fig:gkp_numerics}
\end{figure}

Fig.~\ref{fig:gkp_numerics} shows the performance of square GKP qubits with
different average photon numbers. The infidelities for recovery maps after
displacement noise applied to the GKP qubit is shown. Two recovery maps are
considered. The first is a numerically optimized recovery map based on
Ref.~\cite{totey2023performance}. The second is recovery based on teleported
error correction. For the second we use an analytic approximation from
Ref.~\cite{hillmann2022performance}. The numerical optimization was implemented
using the \textsf{Julia} programming language~\cite{bezanson2017julia}. The code
used for these simulations was also used in Ref.~\cite{totey2023performance} to
obtain fidelities of rotation-symmetric codes against the simultaneous
loss-dephasing channel.

The noise channel $\mathcal N(\hat\rho)$ we consider here is Gaussian
distributed random displacement noise which can be obtained by solving the
master equation
\begin{equation}\label{eq:master}
\dot {\hat{\rho}} = \kappa\left(\kappa\mathcal D[\hat a] + \mathcal D[\hat a^\dagger]\right)(\hat\rho)
\end{equation}
where
$\mathcal D[\hat L] (\hat\rho) = \hat L \hat\rho \hat L^\dagger - \frac{1}{2}
\hat L^\dagger \hat L \hat\rho - \frac{1}{2} \hat\rho \hat L^\dagger \hat L$.
The solution to this master equation can be obtained by exponentiating the
superoperator representation $\mathcal{L}[\hat{L}]$ of the Lindbladian
$\mathcal{D}[\hat{L}]$. The superoperator representation is given by
$\mathcal{L}[\hat{L}] = \hat L^* \otimes \hat L - \frac{1}{2} \hat I\otimes \hat
L^\dagger \hat L - \frac{1}{2} (\hat L^\dagger \hat L)^T \otimes \hat I$.
The noise channel evolved for time $t$ is then given by
\begin{equation}\label{eq:disp_ch}
\mathcal{N}_{\kappa t}(\hat{\rho}) = \left[e^{\kappa t(\mathcal L[\hat{a}] + \mathcal L[\hat{a}^\dagger])}\right] (\hat{\rho})\;.
\end{equation}
where $[\cdot](\hat{\rho})$ represents application of the given functional to
$\hat{\rho}$. Thus $\kappa t$ represents the unitless strength of the random
displacement rate. This noise strength can be related to the variance of the
random displacement distribution given by the Kraus-operator representation of
this channel
\begin{equation}\label{eq:disp_ch_krauss}
\mathcal N_\sigma(\hat\rho) = \int \frac{\dd[2]{\alpha}}{\pi\sigma^2} \,
\exp(-\frac{\abs{\alpha}^2}{\sigma^2}) \hat{D}_{\alpha}\, \hat{\rho}\, \hat{D}^\dagger_{\alpha}\,.
\end{equation}
By writing Eq.~\eqref{eq:disp_ch} and Eq.~\eqref{eq:disp_ch_krauss} in terms of
the characteristic function, we find that $\kappa t = \sigma^2$ which sets the
\(x\)-coordinates in Fig.~\eqref{fig:gkp_numerics}. Note that for canonical
quadratures, this channel corresponds to noise with variance $\sigma^2$ for both
$\hat{x}$ and $\hat{p}$.

For the numerically optimized recovery operation, the total quantum channel we consider is
\begin{equation}\label{eq:ec:totalchannel}
  \mathcal{L} = \mathcal R^{\text{SDP}} \circ \mathcal N_{\sigma} \circ \mathcal E,
\end{equation}
where \(\mathcal{E}\) is the encoding isometry from a qubit into the GKP qubit
space, $\mathcal N_{\sigma}$ is the noise map (Eq.~\eqref{eq:disp_ch_krauss}),
$\mathcal R^{\text{SDP}}$ is the numerically optimized recovery isometry from
the GKP qubit space back to a qubit. The encoder is given by %
$\mathcal{E}[\hat\Pi](\hat\rho) = \hat\Pi \hat\rho {\hat\Pi}^\dagger$ where %
$\hat\Pi = \ket{0_L}\bra{0} + \ket{1_L}\bra{1}$ is the codespace projector while
$\{\ket{i_L}\}_{i}$ is a basis for the logical subspace. We quantify the
performance of the recovery map $\mathcal{R}^{\text{SDP}}$ using the entanglement fidelity
\begin{equation}\label{eq:chan_fid}
  F_e(\mathcal L) = \frac{1}{d^{2}}\sum_k \abs{\tr(\hat M_k)}^2\, ,
\end{equation}
where $\{ \hat M_k\}_{k}$ are the Kraus operators for $\mathcal{L}$ and $d$ is the
dimension of the channel's input/output Hilbert
space. Thus $d=2$ for $\mathcal{L}$ given by
Eq.~\eqref{eq:ec:totalchannel} and so the \(y\)-coordinate in
Fig.~\ref{fig:gkp_numerics} gives the entanglement infidelity
$1 - F_e(\mathcal{L})$ for the logical qubit channel.

We numerically implement the procedure in Ref.~\cite{flectcher2007optimum} which
uses a semidefinite program (SDP) to compute the recovery
$\mathcal R^\text{SDP}$ that optimizes the channel fidelity in
Eq.~\eqref{eq:chan_fid}. The finite energy GKP codewords are numerically
represented in the Fock basis with a finite occupation cutoff using structures
provided by the \textsf{QuantumOptics.jl}
package~\cite{kramer2018quantumoptics}. They are created by superposing coherent
states on a square grid in phase space with amplitudes exponentially damped by their
distance from the origin. This procedure results in only approximately
orthogonal code states $\bra{1_L}\ket{0_L} \neq 0$ and at large damping factors,
this nonzero codeword overlap limits the achievable channel fidelity. To avoid
this issue, we enforce orthogonality by removing the overlapping component from
$\ket{1}_L$ and renormalizing. The resulting code states then always have zero
mean \(\bra{i_L}\hat{q}_{\theta}\ket{i_L} = 0\) for all quadratures
\(\hat{q}_{\theta}\).

We take advantage of the algorithm for sparse matrix exponentiation in
Refs.~\cite{hogben2011software,kuprov2011diagonalization} to
efficiently numerically compute Eq.~\eqref{eq:disp_ch} and obtain the
channel with noise strength $\kappa t$. In the Fock basis
representation, the tensor products used to form the superoperator
representation $\mathcal{L}[\hat{L}]$ can be realized using the usual
Kronecker matrix product and applied through right multiplication with
the column-stacked vectorized $|\rho\rangle\rangle$. The SDP is then a
function of $\mathcal{L}[\hat{L}]$ and $\hat{\Pi}$ with both
numerically represented in the Fock basis. To solve this SDP, we use
the \textsf{COSMO.jl}~\cite{garstka2021cosmo} convex optimizer via the
\textsf{JuMP.jl}~\cite{lubin2023jump} interface, both of which are
\textsf{Julia}-based packages. We set the relevant numerical
tolerances for convergence of the solver to $10^{-8}$, well beyond the
logical infidelities shown in Fig.~\ref{fig:gkp_numerics}. We
furthermore ensure that the Fock cutoff is sufficiently large so that
its effect on the calculations remains negligible.  This cutoff is set
so that there is at most 1e-6 of infidelity between the codespace
projectors calculated at the actually used cutoff versus the maximum
cutoff (set to 400). This allows for dynamically scaling the cutoff
based on the amount of GKP squeezing, which greatly reduces the
simulation time and memory usage.

We also compute an analytic approximation of the channel fidelity for the linear
optical recovery circuit shown in Fig.~\ref{fig:gkptelerr}. This analytic
approximation is based off of Section V.B.1. and Appendix D of
Ref.~\cite{hillmann2022performance} along with Appendix A of
Ref.~\cite{noh2020fault}. The approximation relies on displacement twirling,
which allows approximation of a finitely squeezed GKP codeword as an incoherent
mixture of infinite energy GKP codewords. The amount of squeezing is described
using an envelop damping operator $\exp[-\Delta^2\hat{n}]$ applied to infinite
energy GKP. Meanwhile the incoherent mixture of infinite energy GKP codewords is
approximated as arising due to application of Eq.~\eqref{eq:disp_ch_krauss} to
an infinite energy GKP codeword %
$\mathcal{N}_{\sigma_{\text{gkp}}}(\ket{s}\bra{s}_{\text{gkp}})$. For these
approximate GKP codewords, the probability of successful recovery using the
closest lattice-point decoder for extracting the syndrome of one quadrature is
given by
\begin{align}\label{eq:psucc}
 p_{\text{succ}}(\sigma_{\text{eff}}) &=
\frac{1}{\sqrt{2\pi\sigma_{\text{eff}}^2}}\sum_{n\in\mathbb{Z}}
\int_{(2n-\frac{1}{2})\sqrt{\pi}}^{(2n+\frac{1}{2})\sqrt{\pi}}\dd{z}\, e^{-\frac{z^2}{2\sigma_{\text{eff}}^2}} \;.
\end{align}
For an isotropically squeezed GKP codespace, the entanglement fidelity
of the logical channel is then given by %
$F_e = [p_{\text{succ}}(\sigma_{\text{eff}})]^2$ where we set the
effective noise variance according to %
$\sigma_{\text{eff}}^2 = 3\sigma_{\text{gkp}}^2 +
\sigma_{\text{noise}}^2$ for the analytic curves in
Fig.~\ref{fig:gkp_numerics}.  For the Fig.~\ref{fig:gkp_numerics}
legend, we give several alternative metrics besides $\sigma^2$ for
quantifying the energy of a GKP code. The average photon number is
given by $\bar{n} = \Tr(\hat{n}\Pi_L)$ while the amount of GKP
squeezing in decibels is given by
$-10\log_{10}(2\sigma^2)\,\text{dB}$. The relationships between these
parameters and GKP envelop damping parameter $\Delta$ is given by
$\bar{n} = 1/(2\Delta^2)$ and %
$\sigma_{\text{gkp}}^2 = (1-e^{-\Delta^2})/(1+e^{-\Delta^2})$.


\end{document}